\newtheoremstyle{newstyle}      
{10pt} 
{10pt} 
{\itshape} 
{} 
{\bfseries} 
{.} 
{ } 
{} 
\theoremstyle{newstyle}
\newtheorem{theorem}{Theorem}[section]    
\newtheorem{definition}{Definition}[section]
\newtheorem{corollary}{Corollary}[section]
\newcommand{\secref}[1]{Section~\ref{#1}}
\newcommand{\defref}[1]{Definition~\ref{#1}}
\newcommand{\thmref}[1]{Theorem~\ref{#1}}
\newcommand{\corref}[1]{Corollary~\ref{#1}}
\newcommand{\algoref}[1]{Algorithm~\ref{#1}}
\renewcommand{\epsilon}{\varepsilon}
\let\oldnl\nl
\newcommand{\nonl}{\renewcommand{\nl}{\let\nl\oldnl}}
\title{Distributional PAC-Learning from Nisan's Natural Proofs\footnote{This is the full version; a conference version of this work is to appear at ITCS 2024.}}
\author{
Ari Karchmer\footnote{Boston University. Email: arika@bu.edu.}
}
\begin{document}

\maketitle

\begin{abstract}
    Do natural proofs imply efficient learning algorithms?
Carmosino et al. (2016) demonstrated that natural proofs of circuit lower bounds for $\Lambda$ imply efficient algorithms for learning $\Lambda$-circuits, but only over \textit{the uniform distribution}, with \textit{membership queries}, and provided $\AC^0[p] \subseteq \Lambda$. We consider whether this implication can be generalized to $\Lambda \not\supseteq \AC^0[p]$, and to learning algorithms which use only random examples and learn over arbitrary example distributions (Valiant's PAC-learning model). 

We first observe that, if, for any circuit class $\Lambda$, there is an implication from natural proofs for $\Lambda$ to PAC-learning for $\Lambda$, then standard assumptions from lattice-based cryptography do not hold. In particular, we observe that depth-2 majority circuits are a (conditional) counter example to the implication, since Nisan (1993) gave a natural proof, but Klivans and Sherstov (2009) showed hardness of PAC-learning under lattice-based assumptions. We thus ask:
what learning algorithms can we reasonably expect to follow from Nisan's natural proofs?

Our main result is that all natural proofs arising from a type of communication complexity argument, including Nisan's, imply PAC-learning algorithms in a new \textit{distributional} variant (i.e., an ``average-case'' relaxation) of Valiant's PAC model. Our distributional PAC model is stronger than the average-case prediction model of Blum et al. (1993) and the heuristic PAC model of Nanashima (2021), and has several important properties which make it of independent interest, such as being \textit{boosting-friendly}. The main applications of our result are new distributional PAC-learning algorithms for depth-2 majority circuits, polytopes and DNFs over natural target distributions, as well as the nonexistence of encoded-input weak PRFs that can be evaluated by depth-2 majority circuits.
\end{abstract}

\newpage

\pagenumbering{arabic}

\section{Introduction}

Razborov and Rudich \cite{razborov1997natural} introduced the concept of \textit{natural proofs} of circuit lower bounds.
Informally, a natural proof of a lower bound for a circuit class $\Lambda$ encodes an efficient algorithm that can be used to distinguish between the truth tables of \textit{simple} Boolean functions (those with ``small'' $\Lambda$-circuit complexity), and \textit{random} Boolean functions. Razborov and Rudich essentially showed that natural proofs for a circuit class $\Lambda$ rule out the existence of a cryptographic pseudorandom function (PRF) computable by $\Lambda$.

Carmosino et al. \cite{carmosino2016learning} strengthened the result of \cite{razborov1997natural} by demonstrating that, provided $\AC^0[p] \subseteq \Lambda$, natural proofs of circuit lower bounds for $\Lambda$-circuits of size up to $u(n)$ imply algorithms for \textit{learning} $\poly(n)$ size $\Lambda$-circuits with \textit{membership queries} over the \textit{uniform distribution}, in time exponential in $u^{-1}(\poly(n))$. 

As a corollary, \cite{carmosino2016learning} obtained a state-of-the-art \textit{quasipolynomial} time learning algorithm for $\AC^0[p]$-circuits (with membership queries, over the uniform distribution), using the natural proofs for subexponential size $\AC^0[p]$-circuits, for any prime $p$, of Razborov and Smolensky \cite{razborov1987lower, smolensky1987algebraic}. 

Since the result of \cite{carmosino2016learning}, whether or not there exists a fully general implication from natural proofs to learning algorithms in Valiant's \textit{original} PAC model \cite{valiant1984theory}, even for $\Lambda \not\supseteq \AC^0[p]$, has remained open (see e.g. \cite{goldberg2023improved}). In Valiant's original model, learning algorithms are forced to utilize \textit{random examples}, and learn over \textit{unknown example distributions}.

\begin{quote}
\textbf{Question 1.} Let $\Lambda$ be any circuit class. Do natural circuit lower bounds for size $u(n)$ $\Lambda$-circuits imply $\exp(u^{-1}(\poly(n)))$ time learning algorithms for $\poly(n)$ size $\Lambda$-circuits in Valiant's PAC model? 
\end{quote}

Aside theoretical interest in complexity and learning theory, Question 1 is motivated by the prospect of implicitly extending the nonexistence of PRFs in low circuit classes (derived from \cite{razborov1987lower,smolensky1987algebraic,razborov1997natural,carmosino2016learning}) to the nonexistence of \textit{weak} PRFs. A weak PRF is a PRF that is only required to be secure if the adversary can inspect uniformly random points, as opposed to to chosen points (see \secref{prelims:wprf} for a formal definition). Weak PRFs suffice for a variety of important cryptographic applications such as symmetric-key encryption (see e.g. \cite{boyle2021low} for more commentary). Therefore, understanding the minimum complexity needed to evaluate weak PRFs is of significant practical importance.

\subsection{Our Contributions}

We begin by observing that if the answer to Question 1 is essentially ``\textit{yes, for every }$\Lambda$,'' then this implies algorithmic breakthroughs for several important and well-studied computational problems. These breakthroughs include a classical polynomial time solution to the unique Shortest Vector Problem (uSVP), and quantum polynomial time algorithms for the Shortest Vector Problem (SVP) and Shortest Independent Vector Problem (SIVP) on lattices.\footnote{We will not try to discuss the huge literature on lattice problems (and lattice based cryptography). See \secref{prelims} for a short description of uSVP, SVP, and SIVP, and refer to \cite{regev2009lattices,regev2009complexity} for more information on complexity of lattice problems.} More specifically, we observe that majority-of-threshold circuits ($\majthr$) cannot realize the implication from natural proofs to polynomial time PAC-learning in Valiant's model, assuming polynomial time hardness of each of those problems.

\begin{theorem}\label{intro:natBarrier}
    Suppose that a natural proof against $\majthr$-circuits of size $u(n)$ implies that the class of $\majthr$-circuits of size $\poly(n)$ is PAC-learnable in Valiant's model, in time $\exp(u^{-1}(\poly(n)))$. Then, there is a polynomial time classical solution to $\tilde{O}(n^{1.5})$-$\mathrm{uSVP}$, and polynomial time quantum solutions to $\tilde{O}(n^{1.5})$-$\mathrm{SVP}$ and $\tilde{O}(n^{1.5})$-$\mathrm{SIVP}$.
\end{theorem}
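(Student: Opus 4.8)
The plan is to chain together three known results: (1) the natural proof of Nisan~\cite{nisan1993communication} against $\majthr$-circuits (depth-2 majority circuits), which exists unconditionally; (2) the hypothesized implication ``natural proof against size $u(n)$ $\majthr$-circuits $\Rightarrow$ $\exp(u^{-1}(\poly(n)))$-time PAC-learning of $\poly(n)$-size $\majthr$-circuits in Valiant's model''; and (3) the hardness result of Klivans and Sherstov~\cite{klivans2009cryptographic}, which shows that PAC-learning depth-2 majority circuits (equivalently, intersections of halfspaces, or even a single halfspace over $\{0,1\}^n$ in the relevant regime) in polynomial time would yield polynomial-time algorithms for $\tilde O(n^{1.5})$-uSVP and polynomial-time \emph{quantum} algorithms for $\tilde O(n^{1.5})$-SVP and $\tilde O(n^{1.5})$-SIVP, via the quantum reduction of Regev~\cite{regev2009lattices} from those lattice problems to LWE.

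First I would recall the precise form of Nisan's lower bound: every function requiring large $\majthr$-complexity can be certified by a sign-rank / discrepancy (communication complexity) argument, and I would check that this argument is ``natural'' in the Razborov--Rudich sense — i.e. it gives a property that is constructive (computable in time $\poly(2^n)$ on a truth table of length $2^n$), large (holds for a random function with high probability), and useful against $\majthr$-circuits up to size $u(n) = 2^{n^{\Omega(1)}}$ (subexponential). The size bound matters: I need $u^{-1}(\poly(n)) = \poly(\log n)$-ish, or at least $u^{-1}(\poly(n)) = n^{o(1)}$, so that $\exp(u^{-1}(\poly(n)))$ is polynomial (or quasipolynomial — here one must be careful about exactly which lattice approximation factor survives; $\tilde O(n^{1.5})$ is what comes out of the Klivans--Sherstov parameters when the learner runs in polynomial time, so I should state the hypothesis with running time matching that). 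Second, I would instantiate the hypothesized implication with this natural proof to obtain a PAC-learning algorithm for $\poly(n)$-size $\majthr$-circuits running in the stated time. Third, I would invoke Klivans--Sherstov: since intersections of polynomially many halfspaces (or even suitably encoded single halfspaces) are computable by $\poly(n)$-size $\majthr$-circuits, the learning algorithm breaks the Klivans--Sherstov cryptosystem, and hence — through their reduction and Regev's quantum reduction — solves the three lattice problems in the claimed time and model (classical for uSVP, quantum for SVP and SIVP).

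The main obstacle I anticipate is bookkeeping of parameters rather than any conceptual difficulty: I must verify that Nisan's argument really does yield a \emph{natural} property (in particular constructivity — evaluating the relevant matrix rank or discrepancy quantity on a $2^n$-size truth table — and that it is useful against a large enough size bound $u(n)$ to make $\exp(u^{-1}(\poly(n)))$ small enough to contradict the Klivans--Sherstov hardness), and that the concept class in the Klivans--Sherstov hardness result is faithfully captured by $\poly(n)$-size $\majthr$-circuits with the PAC error/confidence parameters for which their reduction is stated. A secondary subtlety is that Klivans--Sherstov prove hardness of \emph{distribution-free} (Valiant) PAC-learning, which is exactly what the hypothesis of Question~1 / this theorem supplies, so no distributional relaxation is available to the learner — this is precisely why the counterexample goes through for Valiant's model but (as the paper goes on to show) not for the distributional model. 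I would close by remarking that this establishes $\majthr$ as a conditional counterexample to the fully general Question~1, motivating the distributional relaxation developed in the rest of the paper.
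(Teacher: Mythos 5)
Your proposal is correct and follows essentially the same route as the paper: formalize Nisan's communication-complexity-based argument as a Razborov--Rudich natural property useful against exponential-size $\majthr$-circuits (so that $u^{-1}(\poly(n)) = O(\log n)$ and the hypothesized implication yields a polynomial-time PAC-learner for $\poly(n)$-size $\majthr$), then invoke the Klivans--Sherstov hardness theorem to obtain the classical $\tilde O(n^{1.5})$-uSVP algorithm and quantum SVP/SIVP algorithms. One inessential aside is off: a single halfspace is efficiently PAC-learnable, so the Klivans--Sherstov hardness applies to intersections of halfspaces (equivalently, depth-2 majority/threshold circuits), not to single halfspaces.
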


To argue \thmref{intro:natBarrier}, we combine two observations. First, natural circuit lower bounds for $\exp(\Omega(n))$ size $\majthr$-circuits were proved by Nisan \cite{nisan1993communication}. Second, Klivans and Sherstov \cite{klivans2009cryptographic} showed hardness of polynomial time PAC-learning in Valiant's model for $\majthr$, assuming classical hardness of uSVP and quantum hardness SVP and SIVP. 
Taken together, we have both natural proofs against exponential-size $\majthr$, and hardness of Valiant's PAC-learning for each. Therefore, we have a circuit class that resists an implication between natural proofs and Valiant's PAC-learning. A formal argument is presented in \secref{section:NisanNatProp}; to the best of our knowledge, the natural property underlying Nisan's circuit lower bound for $\majthr$ has never been explicitly formalized (until \secref{section:NisanNatProp}), though it was acknowledged briefly by Raz \cite{raz2000bns} and considered implicitly by Viola \cite{viola2015communication}.

\thmref{intro:natBarrier} indicates a barrier to a general implication from a natural proof \textit{for any} $\Lambda$ to a PAC-learning algorithm for $\Lambda$ in Valiant's model. Essentially, the natural proof of \cite{nisan1993communication} for $\majthr$ confounds the hardness result of \cite{klivans2009cryptographic}.
In light of this, we shift our focus to the following more specific question.

\begin{quote}
    \textbf{Question 2.} What learning algorithms \textit{are} implied by Nisan's natural circuit lower bounds?
\end{quote}
Answering Question 2 is important if we want to gain understanding of a potential general implication between natural proofs for any class $\Lambda$, and some kind of learning algorithms for $\Lambda$. 

Towards an answer to Question 2, we will focus specifically on the possibility of learning algorithms that utilize only random examples, learn over unknown example distributions, and run in polynomial time.  
We briefly summarize our contributions towards an answer to Question 2, before digging into the specifics.

\begin{itemize}
    \item In \secref{introsec:distPAC}, we present a new learning model called \textit{distributional} PAC-learning, which relaxes Valiant's model, in order to try to sidestep \thmref{intro:natBarrier}.
    The new learning model is like Valiant's except it essentially removes the requirement of guarantees for the worst-case concept in the class. In \secref{section:distPAC_model}, we illustrate that distributional PAC-learning is independently motivated for both technically and practically oriented reasons.

    \item In \secref{section:NisanNatProp}, we prove \thmref{intro:natBarrier}. To do so, we give the first (to the best of our knowledge) explicit formalization of Nisan's natural property for exponential size $\majthr$-circuits.
    
    \item In \secref{section:mainTHM}, we prove our main theorem, which discovers a relationship between the computational complexity of distributional PAC-learning and the \textit{communication complexity} of a simple communication game that is associated with a given concept class. This theorem serves as a ``technical centerpiece'' for extracting distributional PAC-learning algorithms from Nisan's natural proofs for $\majthr$, and in fact an \textit{entire family} of natural proofs, of which Nisan's for $\majthr$ are a special case.
    
    \item In \secref{section:distPAC_algos} and \ref{section:NoWeakPRFs}, we use the main theorem and Nisan's natural proof for $\majthr$ to obtain new
    distributional PAC-learning algorithms for $\majthr$-circuits, polytopes, and DNFs. Additionally, we show how to derive attacks on any weak PRF evaluated by $\majthr$-circuits, even when the weak PRF is allowed an encoding procedure for the inputs.
\end{itemize}

We note that, because the result of \cite{carmosino2016learning} only applies to circuit classes that contain $\AC^0[p]$, prior to this work there were no known learning algorithms following directly from Nisan's natural proofs, in any nontrivial learning model.

\paragraph{Subsequent work.} In subsequent work, \cite{karchmer2023agnostic} demonstrated how to construct algorithms from Nisan's natural proofs in several other learning models. For example, \cite{karchmer2023agnostic} obtained ``nontrivial time'' agnostic membership query learning algorithms over the uniform distribution and ``nontrivial time'' distribution-independent membership query algorithms, for classes of sublinear size circuits made up of polynomial threshold function (PTF) gates and ${\sf SYM}^+$ gates.

\subsubsection{Distributional PAC-Learning}\label{introsec:distPAC}

With the goal of obtaining learning algorithms from Nisan's natural proofs in mind, this paper introduces the distributional PAC-learning model (distPAC-learning). 
However, the distPAC-learning model is also independently motivated as a relaxation of Valiant's PAC-learning, which we discuss after defining the model next.

The starting point for the distPAC-learning model is the heuristic PAC-learning (heurPAC-learning) model of Nanashima \cite{nanashima2021theory}. Following Nanashima, we define a Boolean concept class by a corresponding \textit{evaluation rule} $\eval = \{\eval_n: \{0,1\}^{*} \times \{0,1\}^n \rightarrow \{-1,1\}\}_{n \in\nat}$. The first input to the evaluation rule is a \textit{binary representation} $\pi_f$ of a concept $f$, and the second is an \textit{input} to the concept $x$. The evaluation rule is defined so that for every $n \in \nat$, $\eval_n(\pi_f, x) = f(x)$. An evaluation rule $\eval$ induces a Boolean concept class $\fC = \{\fC_n \}_{n\in\nat}$ defined by
\[
    \fC_n = \{f(x) := \eval_n(\pi_f,x) : \pi_f \in \{0,1\}^*\}
\]
We refer to $\fC$ as the $\eval$-induced concept class.

For a function $s: \nat \rightarrow \nat$, we say that the $\eval$-induced concept class $\fC$ is $s(n)$-\textit{represented} if, for every $n \in \nat$, under the evaluation rule $\phi_n$, every $f \in \fC_n$ has a binary representation of length at most $s(n)$. 
Considering evaluation rules helps for formalizing learning using a distribution over a concept class. We let $\mu$ denote a \textit{target distribution} over concepts $f \in \fC_n$, or, equivalently, over binary representations $\pi_f \in \{0,1\}^{s(n)}$.

What access to the concept algorithms in the distPAC-learning model are allowed? We continue following the heurPAC-learning model (Valiant's, too), and allow access to only random examples sampled from an \textit{unknown} example distribution $\rho$ over $\{0,1\}^n$. We denote by ${\ExO{f}{\rho}}$ the \textit{example oracle} that returns labelled examples $\langle x, f(x) \rangle$ for $x \sim \rho$.

A distPAC-learning algorithm takes three confidence parameters as
input. The accuracy parameter $\epsilon$, the failure parameter $\delta$, and 
the heuristic parameter $\eta$. Essentially, the distPAC-learning model requires that, for a fixed evaluation rule $\phi$, there exists some large probability mass of the $\eval$-induced concept class $\fC$, as determined by $\mu$ and $\eta$, that is learnable in Valiant's model.

\begin{definition}[Distributional PAC-learning]\label{def:distPAC}
    Let $\eval$ be an evaluation rule, and let the $\eval$-induced concept class $\fC$ be $s(n)$-represented. The pair $(\fC, \mu)$ is distributionally PAC-learnable if there exists an algorithm $A$ such that, for any $n \in \nat, \epsilon, \delta, \eta > 0$,
\begin{equation}\label{intro:avgPACguarantee}
    \Pr_{f \sim \mu}\left[\Pr_{A}\left[ \forall \rho : \Pr_{x \sim \rho}\left[h(x) \not= f(x) : h \leftarrow A^{\ExO{f}{\rho}}(n, \epsilon, \delta, \eta)\right] \le \epsilon \right] \ge 1-\delta \right] \ge 1-\eta
\end{equation}
When $A$ runs in time $\poly(n, s(n), \epsilon^{-1}, \delta^{-1}, \eta^{-1})$, we say that $(\fC, \mu)$ is efficiently distPAC-learnable. 
\end{definition} 

\noindent Distributional PAC-learning is a clear relaxation of Valiant's PAC-learning, since it no longer requires good learning guarantees for ``worst-case'' concepts.

\paragraph{DistPAC-learning vs. heurPAC-learning.} The essential difference between distPAC-learning and heurPAC-learning is the requirement that there exists a \textit{single} large probability mass of concepts that is learnable with respect to \textit{any} example distribution (see the location of the universal quantification over $\rho$ in \eqref{intro:avgPACguarantee}).
In heurPAC-learning, the order of quantifiers is different: it is only required that for each example distribution $\rho$, a large \textit{but possibly different} probability mass of the concept class is learnable. This independently motivates our model for technical and practical reasons:
for example, distPAC-learning allows ``black-box'' use of \textit{boosting algorithms}. In other words, the equivalence of weak\footnote{A weak learning algorithm is only required to output a hypothesis that has a predictive advantage only slightly better than a coin toss.} and strong learning is preserved in our model \cite{DBLP:journals/ml/Schapire90, domingo2000madaboost} (see \secref{section:distPAC_model} for a formal statement on this). This is not necessarily true in heurPAC-learning. 

In a nutshell, distPAC-learning is stronger than both heurPAC-learning and the Blum et al. average-case prediction model. Therefore distPAC-learning also inherits the well-founded motivation behind the theory of heuristic PAC-learning (see \cite{nanashima2021theory}). See \secref{comparison} for a continued discussion.
Towards further motivating the distPAC-learning model, in \secref{section:distPAC_model} we give formal statements on useful properties of the distPAC-learning model, including the equivalence between weak and strong learning, an equivalence between hardness of distribution-specific variant of distPAC-learning and the existence of one-way functions, and finally on using the classic technique of Kearns and Valiant \cite{kearns1994cryptographic} for proving hardness of distPAC-learning with respect to \textit{specific} target distributions. 

\subsubsection{DistPAC-Learning Algorithms from Nisan's Natural Proofs}

We design new learning algorithms in the distributional PAC model. These learning algorithms arise from natural proofs that follow a certain communication complexity argument due to Nisan \cite{nisan1993communication}.

\paragraph{Nisan's technique.} Nisan \cite{nisan1993communication} used the following communication complexity argument for proving circuit lower bounds against circuits with threshold functions as gates. First, identify a function $f: \{0,1\}^n \times \{0,1\}^n \rightarrow \{0,1\}$, which requires \textit{high} 2-party communication complexity in some model (e.g. randomized, determinstic, distributional, etc.). Then, identify a circuit class $\cC$ such that for every $g \in \cC$, $g$ is computable by a \textit{low-cost} 2-party communication protocol in that model. Finally, conclude that $f$ requires large $\cC$-circuits (see \secref{prelims} for essential definitions of 2-party communication complexity in various models, and \cite{kushilevitz1996communication} for further reference).
To provide an example, let $f(x,y) = {\sf IP2}(x,y) = \sum^n_{i=1}x_iy_i \mod 2$ be the inner product mod 2 function. 
It is known that ${\sf IP2}$ requires $\Omega(n)$ bits to be transmitted in any randomized communication complexity protocol; therefore, as shown by \cite{nisan1993communication}, since $\majthr$ circuits are computed by randomized communication complexity protocols with cost $O(\log n)$, ${\sf IP2}$ must require $\majthr$ circuits of exponential size. This lower bound remains one of the strongest known --- as of now it is still not ruled out that $\NEXP \subseteq \comp{\thr}{\thr}$. In fact, proving $\NEXP$ is not contained in $\comp{\thr}{\thr}$ is considered a ``major frontier'' in complexity theory \cite{chen2018toward}.

\paragraph{Main Theorem.}
We now introduce our main theorem, which is used as a ``technical centerpiece'' for obtaining distPAC-learning algorithms from natural circuit lower bounds proved with Nisan's technique. Concretely, we use it to obtain distPAC-learning algorithms for $\majthr$-circuits, which are presented after. 

Roughly speaking, the main theorem presents a relationship between the computational complexity of distPAC-learning, and the communication complexity of a simple \textit{communication game} associated with a given evaluation rule.
For any evaluation rule $\eval$ and $\eval$-induced $s(n)$-represented concept class $\fC$, we define the associated \textit{communication game} $\fG$ over the product distribution $(\mu, \rho)$, played as follows. A binary representation $\pi_f$ of a function $f \in \fC_n$ is sampled according to $\mu$, and an input $x$ is sampled from $\rho$. Player one is given the binary representation $\pi_f$ of $f$, and player two is given the input $x$. The two parties communicate until they are ready to output a value $b$, and win the game if $b = \eval_n(\pi_f,x) = f(x)$.
We say that $\eval$ is \textit{evaluated} by a 2-party distributional communication protocol with cost $c(n)$ and bias $\gamma(n)$ over $(\mu, \rho)$, if for every $n \in \nat$, the two parties can communicate at most $c(n)$ bits before winning $\fG$ with probability at least $1/2+\gamma(n)$ over the random inputs drawn from $(\mu, \rho)$. 

Now we are ready to state the main theorem. 

\begin{theorem}\label{intro:mainResult}
    Let $\eval$ be an evaluation rule. Suppose that, for any product distribution $(\mu, \rho)$, $\eval$ is evaluated by a 2-party distributional communication protocol with cost $c:= c(n)$ and bias $\gamma:= \gamma(n)$ over $(\mu, \rho)$. Then, for the $\eval$-induced $s(n)$-represented concept class $\fC$, and a time $t(n)$-samplable distribution $\mu$, the pair $(\fC, \mu)$ is distributionally PAC-learnable. The learning algorithm runs in time polynomial in $n, s(n), t(n), \epsilon^{-1}, \delta^{-1}, \eta^{-1}, \gamma^{-1}$ and $2^c$.
\end{theorem}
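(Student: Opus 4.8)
The plan is to build the distPAC-learner in two stages: first extract a \emph{weak} distPAC-learner directly from the assumed communication protocol, and then amplify it to a strong one using the fact (established elsewhere in the paper, see \secref{section:distPAC_model}) that distPAC-learning is boosting-friendly, so weak and strong distPAC-learning coincide. The subtlety to keep in mind is that the learner $A$ may depend on $\eval$, $c$, $\gamma$, $s$ and $t$, but not on $\rho$ or on the target $f$, whereas the protocol is only promised to \emph{exist} and may depend on $(\mu,\rho)$. Hence the real content is to manufacture an \emph{explicit} family of candidate hypotheses, computable by a learner ignorant of both $\rho$ and the representation $\pi_f$, which is nonetheless guaranteed to contain a useful one.

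Stage 1 (weak learning). Fix a deterministic distributional protocol $\Pi$ of cost $c$ and bias $\gamma$ for the evaluation game over $(\mu,\rho)$; determinism is WLOG by averaging out the public coins. Restricting $\Pi$ to a fixed first input $\pi_f$ turns it into an object over the input coordinate with at most $2^c$ leaves, on each of which the output is a constant bit, so it can be described by the induced partition of $\{0,1\}^n$ together with one label per cell. The key structural observation is that the sequence of bits sent by the \emph{representation-holding} player takes at most $2^{c+1}$ values; enumerating that sequence $g$, and for each $g$ using samples $\pi\sim\mu$ (here the $t(n)$-samplability and the representation length $s(n)$ enter) together with the example oracle $\ExO{f}{\rho}$ to pin down, empirically on the sampled inputs, the partition cell that $g$ forces and to fit it a majority-vote label, produces a family of at most $\poly(2^c,n,\dots)$ explicit candidate hypotheses. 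For the correct value of $g$ this recovers (a proxy of) $\Pi(\pi_f,\cdot)$, and since $\mathbb{E}_{\pi\sim\mu}\big[\Pr_{x\sim\rho}[\Pi(\pi,x)=\eval_n(\pi,x)]\big]\ge 1/2+\gamma$, a standard averaging argument gives that for an $\Omega(\gamma)$-fraction of concepts $f\sim\mu$ the best candidate has advantage $\Omega(\gamma)$ over $\rho$. Because each candidate class has VC dimension $\poly(2^c)$, uniform convergence over a sample of size $\poly(2^c,\gamma^{-1},\log\delta^{-1})$ shows the empirically best candidate generalizes; this is the weak learner, and — crucially — the protocol hypothesis is quantified over \emph{every} $\rho$, so the weak learner works under every example distribution, exactly what boosting requires.

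Stage 2 (amplifying the fraction, then the accuracy). The weak learner only covers an $\Omega(\gamma)$-fraction of $\mu$, whereas \defref{def:distPAC} demands a $1-\eta$ fraction. To close this I would iterate: having handled some subset of concepts, re-invoke the protocol hypothesis on $\mu$ conditioned on the not-yet-covered concepts — a legitimate product distribution, since the hypothesis holds for \emph{all} of them — getting a fresh $\Omega(\gamma)$-fraction of the remainder; after $O(\gamma^{-1}\log\eta^{-1})$ rounds only an $\eta$-fraction is uncovered, and the final learner runs Stage~1 with the union of all candidate families and cross-validates. Then apply a distribution-free boosting algorithm (e.g.\ \cite{DBLP:journals/ml/Schapire90,domingo2000madaboost}) on top: for each of the $1-\eta$ good concepts, uniformly over $\rho$, boosting converts advantage $\Omega(\gamma)$ into error $\le\epsilon$ with confidence $\ge 1-\delta$, which is precisely \eqref{intro:avgPACguarantee}. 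Tracking parameters, the running time is $\poly(n,s(n),t(n),\epsilon^{-1},\delta^{-1},\eta^{-1},\gamma^{-1},2^c)$ as claimed.

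The main obstacle is Stage~1: because $\Pi$ is only guaranteed to exist and may depend on the unknown $\rho$, the learner cannot simulate it, and the work is to show that enumerating the at most $c$ bits transmitted by the representation-holding player — combined with empirical estimation from the example oracle and from samples of $\mu$ — suffices to reconstruct a good-enough hypothesis inside a candidate family of size $\poly(2^c)$, in particular \emph{avoiding} any $n^{\Theta(c)}$ blow-up, and that this family has small enough VC dimension to generalize. Everything else (the averaging bounds, uniform convergence, the iterated conditioning of $\mu$, and the black-box invocation of boosting) is routine once this structural step is in place.
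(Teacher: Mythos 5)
Your overall architecture (weak learner first, then invoke the weak-to-strong equivalence of \thmref{distPAC_weak_strong_equiv}) matches the paper, but the step that carries all the weight --- Stage~1 --- is missing, and the route you sketch for it does not work. You correctly note that the learner cannot simulate $\Pi$, since $\Pi$ is only guaranteed to exist and may depend on the unknown $\rho$; your fix is to enumerate the $\le 2^{c}$ message sequences of the representation-holding player and to ``pin down empirically the partition cell that $g$ forces.'' But the cells of the partition of $\{0,1\}^n$ induced by a transcript are determined by the \emph{other} player's message functions, i.e.\ by the protocol itself; labelled examples $\langle x,f(x)\rangle$ carry no information about which cell an $x$ lies in, so the candidate family you describe is not computable by a learner that does not know $\Pi$ (and enumerating restricted row-functions instead would cost up to $2^{2^{c}}$). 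Since you yourself defer exactly this reconstruction as ``the work,'' the proof has a genuine gap at its core. The paper avoids protocol reconstruction entirely: it treats the random bits used to sample $\mu$ and $\rho$ as the two inputs of a function $\xi(x_1,x_2)=\eval(\mu(x_1),\rho(x_2))$, uses the BNS-style correlation bound (\thmref{thm:corrBound}) to conclude from the mere \emph{existence} of a $(c,\gamma)$ protocol that $R_2(\xi)\ge(\gamma\cdot 2^{-c})^4$, observes that $R_2(\xi)$ can be estimated from two labelled examples together with a concept $\pi_g\sim\mu$ sampled locally (this is where $t(n)$-samplability enters), and then converts this universal distinguisher into a weak predictor by a Yao-style hybrid argument, followed by constructive averaging and boosting. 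Because the norm-based test is oblivious to $\rho$ and to the protocol, the same algorithm works for every example distribution, which is what the order of quantifiers in \defref{def:distPAC} demands.

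Your Stage~2 is also problematic and, in the paper's approach, unnecessary. The set of concepts on which your Stage-1 learner succeeds depends on $\rho$, so ``$\mu$ conditioned on the not-yet-covered concepts'' is a $\rho$-dependent distribution that the learner can neither identify nor sample from (samplability of the conditioned $\mu$ is not guaranteed), and unioning candidate families over rounds does not obviously yield a \emph{single} $1-\eta$ mass of concepts that is weakly learnable under \emph{all} $\rho$ simultaneously. In the paper, the weak predictor's advantage bound holds for every $\rho$ with one fixed algorithm, and the $1-\eta$ coverage over $f\sim\mu$ is extracted by a Markov argument rather than by iterated conditioning; the boosting step you invoke at the end is then applied exactly as in the paper.
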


\noindent We give an overview of the proof of this theorem in \secref{proofIdeas}. We remark that slight improvement of the exponential dependency of $c$ requires a significant breakthrough in computational learning theory. For example, improving the efficiency to $O(2^{\sqrt{c}})$ gives a polynomial time distinguishing algorithm for the long-time weak PRF candidate of \cite{blum1993cryptographic}. We refer to \secref{sec:RequiresBreakthroughs} for details.

\paragraph{Interpretation of \thmref{intro:mainResult}.}
Nisan's lower bound technique clearly encodes a randomized communication complexity \textit{upper bound}, which, by an averaging argument, can be converted to a distributional protocol over any distribution (without increased cost or decreased bias).
Therefore, we interpret \thmref{intro:mainResult} as follows. 

Fix a function $f: \{0,1\}^n \times \{0,1\}^n \rightarrow \{0,1\}$, which requires $\Omega(n)$ bits in the 2-party randomized communication model (many such $f$ exist). Now, we interpret \thmref{intro:mainResult} as proof that a circuit lower bound for $f$ against $u(n)$ size $\Lambda$-circuits---\textit{proved by Nisan's method}---implies that the pair $(\fC, \mu)$ is distributionally PAC-learnable, in time $\exp(u^{-1}(\poly(n)))$, whenever $\eval \in \Lambda$, $\fC$ is a $\eval$-induced and $\poly(n)$-represented concept class, and as long as $\mu$ is polynomial time samplable. 

This is the valid interpretation because, since $f$ requires $\Omega(n)$ bits in the 2-party randomized communication model, any lower bound by Nisan's method against $u(n)$ size $\Lambda$-circuits \textit{requires} the existence of a 2-party randomized protocol to compute every $u(n)$ size $\Lambda$-circuit with $u^{-1}(\poly(n))$ bits. Hence, invoking \thmref{intro:mainResult} with these parameters plugged in, we get a distPAC-learner for $(\fC, \mu)$ that runs in time $\exp(u^{-1}(\poly(n)))$. Essentially, we have shown that Question 1 can be answered positively, if we insist on the natural proof being proved using Nisan's technique, and we weaken Valiant's PAC model to the distirbutional PAC model. We refer to \secref{addedRemarks} for a continued discussion.

\paragraph{DistPAC-learning from Nisan's natural proofs for $\majthr$.}
More concretely, we can use \thmref{intro:mainResult} to obtain \thmref{intro:distPAC_MajThr}, which directly follows from a combination of Nisan's lower bounds and \thmref{intro:mainResult}. This is because, as indicated in Nisan's lower bounds, every function in $\majthr$ has a randomized communication protocol of cost $O(\log n)$ and large bias.

\begin{theorem}\label{intro:distPAC_MajThr}
Let $\eval \in \majthr$ be any evaluation rule, and let $\mu$ be any polynomial time samplable target distribution. For the $\eval$-induced $s(n)$-represented concept class $\fC$, the pair $(\fC, \mu)$ is efficiently distPAC-learnable.
\end{theorem}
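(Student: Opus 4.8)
The plan is to obtain \thmref{intro:distPAC_MajThr} as an immediate instantiation of \thmref{intro:mainResult}: it suffices to verify that the communication hypothesis of that theorem holds with favorable parameters. Concretely, I will show that for \emph{every} product distribution $(\mu,\rho)$ the evaluation rule $\eval$ is evaluated by a $2$-party distributional protocol of cost $c(n)=O(\log n)$ and bias $\gamma(n)=1/\poly(n)$. Since $\eval \in \majthr$ forces concepts to have descriptions of length $s(n)=\poly(n)$, and since $\mu$ is polynomial-time samplable so that $t(n)=\poly(n)$, substituting $c=O(\log n)$, $\gamma^{-1}=\poly(n)$, $s(n)=\poly(n)$, $t(n)=\poly(n)$ into the running-time bound $\poly(n,s(n),t(n),\epsilon^{-1},\delta^{-1},\eta^{-1},\gamma^{-1},2^{c})$ of \thmref{intro:mainResult} collapses it to $\poly(n,\epsilon^{-1},\delta^{-1},\eta^{-1})$, which is precisely the efficiency requirement of \defref{def:distPAC}.

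To verify the hypothesis I will combine Nisan's protocol with an averaging argument. Unpacking $\eval\in\majthr$: the two-argument map $(\pi_f,x)\mapsto\eval_n(\pi_f,x)$ is computed by a polynomial-size depth-$2$ circuit $C_n$ whose top gate is $\mathrm{MAJ}$ over threshold gates $T_1,\dots,T_m$ with $m=\poly(n)$. Consider the input partition in which Player~1 holds the bits of $\pi_f$ and Player~2 holds the bits of $x$. The randomized protocol is: using public coins sample $i\in[m]$ uniformly; each player locally forms the portion of $T_i$'s weighted input-sum that depends on its own bits, so that the value $T_i=[A+B\ge\theta_i]$ is decided by the greater-than predicate $[A-\theta_i\ge -B]$ with Player~1 holding $A-\theta_i$ and Player~2 holding $-B$; run the standard constant-error randomized protocol $\mathcal G$ for greater-than on $\poly(n)$-bit integers and output its answer. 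The total communication is $O(\log m)+O(\log n)=O(\log n)$ bits. For the bias, fix $(\pi_f,x)$ and suppose without loss of generality the majority value among $T_1(x),\dots,T_m(x)$ is $1$, with support size $k\ge m/2+1/2$; then a uniform $T_i$ equals $1$ with probability $\ge 1/2+1/(2m)$, and accounting for the constant error of $\mathcal G$ the protocol outputs the correct majority value with probability $\ge \tfrac13+\tfrac13\bigl(\tfrac12+\tfrac1{2m}\bigr)=\tfrac12+\tfrac1{6m}$ on every input. Finally, by averaging, for any fixed $(\mu,\rho)$ the expected win probability over $(\pi_f,x)\sim(\mu,\rho)$ remains $\ge 1/2+1/(6m)$, so there is a fixing of the public coins giving a deterministic protocol of the same $O(\log n)$ cost and bias $1/(6m)=1/\poly(n)$ over $(\mu,\rho)$; this is the distributional protocol required by \thmref{intro:mainResult}.

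The step that needs genuine care is the cost--bias balance in Nisan's protocol. The majority margin can be as small as $1/m$, so a naive implementation would amplify $\mathcal G$ down to error $o(1/m)$, blowing the cost up to $\Theta(\log^2 n)$ and delivering only a \emph{quasi}-polynomial learner --- which, by the remark following \thmref{intro:mainResult}, is roughly the threshold past which further improvement would entail a learning-theory breakthrough. The fix is the observation above: running the constant-error $\mathcal G$ exactly once already preserves a $1/\poly(n)$ bias, since $\mathcal G$'s error shrinks the bias only by a constant factor rather than erasing it. A secondary point to dispatch is that threshold gates may carry exponentially large integer weights, but this is harmless: the partial sums $A,B$ still have magnitude $2^{\poly(n)}$ and hence bit-length $\poly(n)$, so $\mathcal G$ stays at cost $O(\log n)$. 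With these points handled, the averaging argument and the parameter substitution into \thmref{intro:mainResult} are routine.
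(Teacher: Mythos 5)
Your high-level route is the same as the paper's: verify that for every product distribution $(\mu,\rho)$ the game $\fG[\eval,n,(\mu,\rho)]$ is $(O(\log n),1/\poly(n))$-evaluated and then invoke \thmref{intro:mainResult} (the paper does this in one line by citing \thmref{nisanThm}, i.e., $\comp{\maj_s}{\thr} \subseteq \RANDCC{2}{O(\log s)}{O(1/s)}$, plus the observation that randomized protocols yield distributional ones over any distribution). The problem is in your self-contained re-derivation of Nisan's protocol, at exactly the step you flag as delicate. Running the constant-error greater-than protocol $\mathcal G$ once on a uniformly random bottom gate does \emph{not} preserve the $1/(2m)$ majority margin. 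The guarantee on $\mathcal G$ is error \emph{at most} $1/3$, not exactly $1/3$ and not symmetric across instances; in the worst case $\mathcal G$ is correct with probability exactly $2/3$ on the gates agreeing with the majority and correct with probability $1$ on the gates disagreeing with it, so the success probability is only $\tfrac23\bigl(\tfrac12+\tfrac1{2m}\bigr)=\tfrac13+\tfrac1{3m}$, which is \emph{below} $1/2$ for $m\ge 3$ --- a negative bias. Your figure $\tfrac13+\tfrac13\bigl(\tfrac12+\tfrac1{2m}\bigr)$ implicitly assumes $\mathcal G$ errs with probability exactly $1/3$ on every instance (so that its mistakes help you on minority gates), which nothing guarantees, and adding artificial noise to symmetrize cannot rescue the calculation.

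The correct resolution is the opposite of your ``fix'': the per-gate error must be driven below the majority margin, i.e., to $O(1/m)$. The dichotomy you pose (constant error once versus $\Theta(\log^2 n)$ via repetition) is false, because Nisan's protocol for a single threshold gate achieves error $O(1/s)$ directly at cost $O(\log s)$ --- this is precisely the content of Theorem 1a and Lemma 5 of \cite{nisan1993communication} as packaged in \thmref{nisanThm} --- so with $s=\poly(n)$ one gets per-gate error $1/\poly(n)$ at cost $O(\log n)$, the sampled-gate argument then yields bias $\Omega(1/m)$, and $2^c$ stays polynomial. With that substitution (or by simply citing \thmref{nisanThm} as the paper does), the rest of your argument --- the partition with $\pi_f$ on one side and $x$ on the other, the reduction of each gate to greater-than on $\poly(n)$-bit partial sums, the averaging and coin-fixing to obtain a distributional protocol over any $(\mu,\rho)$, and the parameter substitution into \thmref{intro:mainResult} --- goes through.
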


\thmref{intro:distPAC_MajThr} considers the concept class by the complexity of its evaluation rule. This is weaker than the learning-theoretic standard of considering the complexity of concepts directly. Any $s(n)$-represented concept class $\fC$ $\eval$-induced by the rule $\eval \in \cC$ must satisfy $\fC \subseteq \cC$ (assuming $\cC$-circuits can be $\poly(s(n))$-size). For $\cC$ containing a universal function (e.g. ${\sf P/poly}$, or $\NC^1$), $\cC = \fC$, but not necessarily for lower circuit classes such as $\majthr$.

By non-black-box inspection of the result of \cite{klivans2009cryptographic}, which we already mentioned proves hardness of learning $\majthr$ in Valiant's PAC model, we find that it actually provides a polynomial time samplable \textit{distribution} over $\majthr$-circuits that is hard to learn, even weakly (rather than just worst-case hardness). In other words, it shows a target distribution $\mu^*$ such that hardness of $\tilde{O}(n^{1.5})$-SVP and its variants implies $(\majthr, \mu^*)$ is not efficiently distPAC-learnable! In \secref{sec:hard_distPAC_crypto}, we actually show that the \textit{entire family} of proof techniques for showing hardness of Valiant's PAC-learning (used here by \cite{klivans2009cryptographic}, and due originally to \cite{kearns1994cryptographic}), can be used to prove hardness of distPAC-learning with respect to \textit{specific} polynomial time samplable target distributions.

Therefore, in \thmref{intro:distPAC_MajThr}, considering the complexity of the evaluation rule $\eval \in \majthr$ and distPAC-learning of the $\eval$-induced concept class $\fC$ is likely a needed relaxation, since the target distribution in the theorem can be \textit{any} polynomial time samplable distribution. Hence, in order to get a distPAC-learning algorithm for $\majthr$-circuits, we need to restrict the target distribution somehow.

\paragraph{Natural Target Distributions.} In light of this, we show that $\majthr$-circuits are distPAC-learnable, with respect to the following more natural families of target distributions. By more natural, we mean that the target distribution is not designed by a cryptographer (as opposed to $\mu^*$). Indeed, this highlights another feature of the distPAC-learning model: efficient learnability of $(\fC, \mu)$ for ``organic'' target distributions $\mu$ can coexist with hardness for ``inorganic'' target distributions like $\mu^*$.

Let $L = (T_1, \cdots T_{m})$ be a list of $m := \poly(n)$ linear threshold functions, and let $\mu$ be a $\poly(n)$ time samplable distribution over $\{0,1\}^{m}$. We define the distribution $\mu_L$ over $\majthr$-circuits is sampled as follows. First, sample $\theta \sim \mu$. Then, output the $\majthr$-circuit that is the majority vote over each $T_i \in L$ such that $\theta_i = 1$.

\begin{theorem}\label{intro:distPAC_MajThr2}
    Let $L = (T_1, \cdots T_{m})$ be any list of $m := \poly(n)$ linear threshold functions, and let $\mu$ be any $\poly(n)$ time samplable distribution over $\{0,1\}^{m}$. The pair $(\majthr, \mu_L)$ is efficiently distPAC-learnable.
\end{theorem}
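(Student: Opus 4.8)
The plan is to derive \thmref{intro:distPAC_MajThr2} as a special case of \thmref{intro:mainResult} by exhibiting, for the family of target distributions $\mu_L$, an evaluation rule $\eval$ together with a cheap, high-bias distributional communication protocol for the associated game $\fG$. First I would set up the evaluation rule: a concept in the support of $\mu_L$ is indexed by a vector $\theta \in \{0,1\}^m$, and it computes, on input $x \in \{0,1\}^n$, the majority vote of those threshold functions $T_i$ from the fixed list $L$ with $\theta_i = 1$. So I take $\pi_f = \theta$ as the binary representation (of length $m = \poly(n)$, giving an $s(n)$-represented class), and define $\eval_n(\theta, x) = \mathrm{sign}\!\left(\sum_{i : \theta_i = 1} T_i(x)\right)$ (breaking ties arbitrarily but consistently). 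Since the list $L$ is fixed and public, this rule is well-defined and in $\majthr$; and $\mu_L$ is, by hypothesis, $\poly(n)$-time samplable (just sample $\theta \sim \mu$ and relabel), so the samplability hypothesis $t(n) = \poly(n)$ of \thmref{intro:mainResult} is met.

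Next, the heart of the argument: I must show that, for \emph{any} product distribution $(\mu_L, \rho)$ over representations and inputs, the game $\fG$ has a $2$-party distributional protocol of cost $c(n) = O(\log n)$ and bias $\gamma(n) \ge 1/\poly(n)$. This is where Nisan's randomized-communication upper bound for $\majthr$ enters. The key point is that a $\majthr$ concept is a \emph{majority vote} over threshold functions, and majority vote of $\poly(n)$ values admits an $O(\log n)$-cost randomized protocol with inverse-polynomial bias: Player one, who holds $\theta$ (hence knows which $T_i$ are active and how many, say $k = k(\theta)$ of them), and Player two, who holds $x$, can jointly estimate the sign of $\frac{1}{k}\sum_{i:\theta_i=1} T_i(x)$. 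Concretely, Player one picks a uniformly random active index $i$ (i.e., a random $i$ with $\theta_i = 1$) and sends it using $\lceil \log m \rceil = O(\log n)$ bits; Player two replies with the single bit $T_i(x)$ — and here I use that each individual threshold function $T_i$ is itself computable by a randomized (indeed deterministic-with-shared-randomness via the standard Nisan protocol for a single inner-product-like threshold) $O(\log n)$-cost protocol between the two parties, since one party effectively holds the "weights/threshold side" implicit in the index and the other holds $x$; more carefully, since $T_i$ is a fixed public function and Player two holds all of $x$, Player two can simply compute $T_i(x)$ locally and send one bit. The output $b = T_i(x)$ then agrees with $\eval_n(\theta,x) = \mathrm{sign}(\sum_{i:\theta_i=1} T_i(x))$ with probability exactly $\tfrac12 + \tfrac{|\text{bias of the vote}|}{2}$ conditioned on any $\theta$ whose vote is not a near-tie. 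Averaging over $(\mu_L,\rho)$ the quantity $\mathbb{E}\big[\,\mathrm{sign}(\text{vote})\cdot(\text{vote})\,\big]$ is a fixed inverse-polynomial quantity only if the vote is bounded away from $0$ in expectation; to get an honest $\gamma \ge 1/\poly(n)$ \emph{unconditionally} I would instead have Player one sample $O(\log n)$ independent active indices, send them all ($O(\log^2 n)$ bits, still fine since cost enters as $2^c$), receive the corresponding bits, and output their majority — boosting the per-sample advantage so that the protocol outputs the true sign of the vote with probability $\tfrac12 + \tfrac1{\poly(n)}$ over the randomness even when the vote margin is $1/\mathrm{poly}$. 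This gives $c(n) = O(\log^2 n)$ (so $2^c = n^{O(\log n)}$ — if I want a genuinely polynomial-time learner I must keep $c = O(\log n)$, which forces using a single sampled index and carefully arguing the expected margin bound), and $\gamma(n) = 1/\poly(n)$.

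With $\eval \in \majthr$, $s(n) = \poly(n)$, $\mu_L$ samplable in time $t(n) = \poly(n)$, $c(n) = O(\log n)$, and $\gamma(n) = 1/\poly(n)$ all in hand, I would then simply invoke \thmref{intro:mainResult}: the pair $(\fC, \mu_L)$ — where $\fC$ is the $\eval$-induced concept class, which is exactly $\majthr$ restricted to majorities over sublists of $L$, and whose membership in the theorem statement I identify with $(\majthr, \mu_L)$ since $\mu_L$ is supported on such circuits — is distPAC-learnable in time $\poly(n, s(n), t(n), \epsilon^{-1}, \delta^{-1}, \eta^{-1}, \gamma^{-1}, 2^c) = \poly(n, \epsilon^{-1}, \delta^{-1}, \eta^{-1})$, i.e. efficiently distPAC-learnable.

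The main obstacle I anticipate is keeping the communication cost at $O(\log n)$ while still guaranteeing an \emph{unconditional} inverse-polynomial bias over an \emph{arbitrary} product distribution $(\mu_L,\rho)$: a single sampled active index gives bias proportional to the expected absolute margin of the majority vote, and a worst-case $(\mu_L,\rho)$ could concentrate on near-tie instances, killing the bias — while repeating to amplify pushes $c$ up to $\omega(\log n)$ and hence $2^c$ out of $\poly(n)$. The resolution (and the part needing care) is to observe that a near-tie for a $\{-1,1\}$-valued vote over an \emph{odd} number $k$ of votes is impossible — the vote is a nonzero integer divided by $k$, so $|\text{vote}| \ge 1/k \ge 1/m = 1/\poly(n)$ — hence the single-index protocol already has bias $\ge \tfrac{1}{2m}$ whenever $k$ is odd, and we can WLOG force $k$ odd (e.g. by having Player one, who knows $k$, drop one arbitrary active index when $k$ is even, changing the vote by at most one unit). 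This restores $c(n) = O(\log n)$ and $\gamma(n) \ge 1/\poly(n)$ simultaneously, and the rest is a direct application of \thmref{intro:mainResult}.
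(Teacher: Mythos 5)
Your route is genuinely different from the paper's, and in its core idea it is sound. The paper proves this theorem by building an evaluation rule that is itself a single $\majthr$ circuit of the \emph{joint} input $(\theta,x)$: each $T_i$ is augmented with a selector variable $z_i$ whose weight forces the gate to $0$ when deselected, and dummy votes $y_1,\dots,y_m$ rebalance the top majority; learnability then follows from \thmref{thm:distPAC_MajThr}, i.e.\ from Nisan's randomized protocol for $\majthr$. You instead construct a distributional protocol for the game $\fG$ directly: player one samples an active index and sends it ($O(\log n)$ bits), and player two, who holds all of $x$ and knows the public list $L$, evaluates $T_i(x)$ locally. This bypasses Nisan's threshold protocol entirely and is compatible with \thmref{intro:mainResult}, whose hypothesis concerns only the communication game (plus samplability and representation length). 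One caveat: your side claim that $\eval(\theta,x)=\mathrm{sign}\bigl(\sum_{i:\theta_i=1}T_i(x)\bigr)$ is ``in $\majthr$'' is unjustified as stated --- making the gate selection depend on $\theta$ is precisely what the paper's weight-gadget construction is for --- but your argument never actually uses that claim, so it is harmless.

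There is, however, a genuine gap exactly at the step you flagged: the even-$k$ fix. Dropping an arbitrary active index changes the majority precisely on tie inputs, and ties are exactly where the margin argument was needed. Concretely, take $L=(T_1,T_2)$ with $T_1\equiv 1$, $T_2\equiv 0$ (both are legal linear threshold functions), let $\mu$ be the point mass on $\theta=11$ (point masses are $\poly(n)$-time samplable), and let $\rho$ be arbitrary: the concept equals $1$ everywhere (the $\maj$ convention sends ties to $1$), but if player one drops index $1$ the protocol always outputs $T_2(x)=0$, so it wins with probability $0$; since the bias must be at least $1/\poly(n)$ for \emph{every} product distribution $(\mu_L,\rho)$, this breaks the protocol (dropping a uniformly random active index only recovers bias exactly $0$ on ties, which is still insufficient if the distribution concentrates there). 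The repair is easy: when $k$ is even, have player one add a dummy vote equal to the $\maj$ gate's tie value (namely $1$ under the $\geq$ convention) rather than drop a vote; the augmented odd-size majority then agrees with the concept pointwise, and sampling uniformly among the $k+1$ votes gives bias at least $1/(2(m+1))$ for every $(\theta,x)$, hence for every product distribution, at cost still $O(\log n)$ (handle $k=0$ by sending the constant, and fix your tie-breaking in $\eval$ to match the $\maj$ convention so the induced distribution over functions is identical to $\mu_L$). With that replacement --- or by adopting the paper's $\majthr$ evaluation-rule construction --- your application of \thmref{intro:mainResult} goes through.
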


\noindent Previously, no polynomial time distPAC-learning algorithms were known, for any reasonable type of target distributions over $\majthr$-circuits. An interesting feature of our distPAC-learning algorithm is that it does not need to know $\mu$ or $L$ to work (see end of \secref{subsubsec:nat-dist-majthr} for details).

We additionally consider slightly modified target distributions, that correspond to interesting and natural distributions over subclasses of $\majthr$-circuits: polytopes (that is, and-of-thresholds circuits ($\comp{\AND}{\thr}$)) and DNFs. Note that, in distributional PAC-learning, subclasses are not necessarily distPAC-learnable if their superclass is, since it is possible that the subclass is hard-core, and consisting of functions that are hard for the superclass distPAC-learning algorithm.

For the polytope distribution, let $L = (T_1, \cdots T_{m})$ be a list of $m := \poly(n)$ linear threshold functions, and let $\mu$ be a $\poly(n)$ time samplable distribution over $\{0,1\}^{m}$. The distribution $\mu^\land_L$ over polytopes is sampled as follows. First, sample $\theta \sim \mu$. Then, output the polytope that is the conjunction of all $T_i \in L$ such that $\theta_i = 1$.

\begin{theorem}\label{intro:polytope}
    Let $L = (T_1, \cdots T_{m})$ be any list of $m := \poly(n)$ linear threshold functions, and let $\mu$ be any $\poly(n)$ time samplable distribution over $\{0,1\}^{m}$. The pair $(\comp{\AND}{\thr}, \mu^\land_L)$ is efficiently distPAC-learnable.
\end{theorem}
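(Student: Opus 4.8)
The plan is to obtain \thmref{intro:polytope} as a short reduction to \thmref{intro:distPAC_MajThr2}, using the elementary fact that a conjunction of threshold functions can be rewritten as an \emph{unweighted} majority of threshold functions after padding with a controlled number of constant gates. The identity I would use is: for $S\subseteq[m]$ with $a:=|S|\ge 1$ and any $\{-1,1\}$-valued $z_1,\dots,z_m$ (under the convention that $+1$ denotes ``true''),
\[
\bigwedge_{i\in S} z_i \;=\; \operatorname{MAJ}\!\Big(\{z_i : i\in S\}\ \cup\ \{\,\underbrace{-1,\dots,-1}_{a-1}\,\}\Big),
\]
because the left-hand side equals $+1$ iff $\sum_{i\in S} z_i = a$, while the right-hand side is $\operatorname{sign}\big(\sum_{i\in S} z_i - (a-1)\big)$ over $2a-1$ (an odd number of) summands, hence equals $+1$ on exactly the same inputs and is never tied. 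Since the constant $-1$ (and the constant $+1$) is itself a linear threshold function, the right-hand side is a bona fide $\majthr$-circuit.

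Given the list $L=(T_1,\dots,T_m)$ from the statement, I would apply \thmref{intro:distPAC_MajThr2} to the padded list $L'=(T_1,\dots,T_m,Z_1,\dots,Z_m,W)$, where each $Z_j$ is a fresh copy of the constant linear threshold function $-1$ and $W$ is the constant linear threshold function $+1$ (so $m':=|L'|=2m+1=\poly(n)$), together with the push-forward target distribution $\mu'$ over $\{0,1\}^{m'}$ defined by: sample $\theta\sim\mu$ and let $a=|\{i:\theta_i=1\}|$; if $a=0$ output $\theta'$ selecting only $W$, and if $a\ge 1$ output $\theta'$ selecting $T_i$ for each $i$ with $\theta_i=1$ together with any $a-1$ of the $Z_j$'s. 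Because $\mu$ is polynomial-time samplable, so is $\mu'$, and by the identity above the $\majthr$-circuit that $\theta'$ selects from $L'$ computes exactly the polytope $\bigwedge_{i:\theta_i=1}T_i$ for every $\theta$ in the support of $\mu$; hence $\mu'_{L'}$ and $\mu^\land_L$ are the same distribution over Boolean functions.

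I would then invoke \thmref{intro:distPAC_MajThr2} to obtain an efficient distPAC-learner $A$ for $(\majthr,\mu'_{L'})$ and argue that the same $A$ efficiently distPAC-learns $(\comp{\AND}{\thr},\mu^\land_L)$: the distPAC guarantee in \eqref{intro:avgPACguarantee} depends on $A$ only through the example oracle $\ExO{f}{\rho}$, which sees the target $f$ purely as a Boolean function and not through its representation, and through the distribution over $f$, which is identical under $\mu'_{L'}$ and $\mu^\land_L$; so the entire probabilistic statement transfers verbatim. Crucially, $\mu'_{L'}$ is an admissible target distribution for \thmref{intro:distPAC_MajThr2} no matter how ``thin'' its support is inside $\majthr$, which is exactly what lets us bypass the caveat that $\comp{\AND}{\thr}$, being a subclass of $\majthr$, need not be learnable over $\mu^\land_L$ just because $\majthr$ is learnable over $\mu_L$. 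And since the learner of \thmref{intro:distPAC_MajThr2} need not know its list or target distribution, the resulting learner needs to know neither $L$ nor $\mu$.

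I do not expect a deep obstacle; the care is in the bookkeeping of the padding (enough dummy gates for every value of $|S|$, correct treatment of the $|S|=0$ case via $W$, checking that $\mu'$ remains polynomial-time samplable and that the representation length stays $\poly(n)$) and in matching the representation conventions of \thmref{intro:distPAC_MajThr2}. A self-contained alternative would instead invoke \thmref{intro:mainResult} directly on the evaluation rule $\eval_n(\theta,x)=\bigwedge_{i:\theta_i=1}T_i(x)$ with $L$ built in: the associated communication game is exactly \textsc{Disjointness} between player one's active set $\{i:\theta_i=1\}$ and player two's violated-constraint set $\{i:T_i(x)=-1\}$, and because $\theta$ and $x$ are drawn independently this instance always comes from a \emph{product} distribution, under which \textsc{Disjointness} admits a protocol of cost $O(\log m)$ with $1/\poly(m)$ bias (when the probability of disjointness is bounded away from $1/2$ a constant output already suffices, and otherwise a single-sample intersection test gives inverse-polynomial bias); on that route, establishing the product-distribution protocol is the crux.
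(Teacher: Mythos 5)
Your main route is correct, and at its core it is the same trick the paper uses: realize the conjunction as a majority padded with dummy ``false'' votes, so that the target distribution $\mu^\land_L$ becomes a polynomial-time samplable target distribution for a $\majthr$-type evaluation, and then let the majority-of-threshold machinery do the work. The packaging differs, though. The paper does this white-box: it modifies the evaluation rule built in the proof of \thmref{distPAC_MajThr2}, adding auxiliary representation bits $v_1,\dots,v_m$ (sampled with $|v|=|z|/2$) that feed the top $\maj$ gate so as to create a surplus of zero votes that forces the selected thresholds to be unanimous, and then invokes \thmref{mainResult} directly for the new rule $\eval\in\majthr$. You instead reduce black-box to \thmref{intro:distPAC_MajThr2} by padding the \emph{list} with constant LTF gates ($a-1$ copies of false, plus a true gate for the empty conjunction) and pushing $\mu$ forward; since the distPAC guarantee depends on the target only as a Boolean function, identity of the induced function distributions transfers the learner verbatim. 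Your version buys modularity (no need to re-verify the main theorem's hypotheses), an always-odd number of votes so no tie-breaking or parity bookkeeping (the paper's $|v|=|z|/2$ and $|y|=(|z|+m)/2$ silently need integrality), and explicit treatment of the $|S|=0$ case; the paper's version keeps a single fixed evaluation rule of representation length $3m$ for all targets, which is the form its main theorem is phrased around. Your alternative sketch via \thmref{intro:mainResult} and product-distribution \textsc{Disjointness} is plausible but, as you note, would require actually proving the $(O(\log m),1/\poly(m))$ protocol for every induced product distribution; it is unnecessary given the reduction, and it is not the paper's route --- the paper avoids disjointness entirely by making the whole evaluation rule a $\majthr$ circuit and citing Nisan's protocol.
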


For the DNF distribution, let $L = (T_1, \cdots T_{m})$ be a list of $m := \poly(n)$ disjunctions on $n$-bit inputs, and let $\mu$ be a $\poly(n)$ time samplable distribution over $\{0,1\}^{m}$. The distribution $\mu^{\land\lor}_L$ over DNFs is sampled as follows. First, sample $\theta \sim \mu$. Then, output the DNF that is a conjunction of all disjunctions $T_i \in L$ such that $\theta_i = 1$.

\begin{theorem}\label{intro:DNF}
    Let $L = (T_1, \cdots T_{m})$ be any list of $m := \poly(n)$ disjunctions on $n$-bit inputs, and let $\mu$ be any $\poly(n)$ time samplable distribution over $\{0,1\}^{m}$. The pair $(\mathrm{DNF}, \mu^{\land\lor}_L)$ is efficiently distPAC-learnable.
\end{theorem}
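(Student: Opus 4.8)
The plan is to obtain \thmref{intro:DNF} as a corollary of \thmref{intro:distPAC_MajThr}, by exhibiting a polynomial-size $\majthr$ circuit for the evaluation rule naturally attached to the family $\mu^{\land\lor}_L$. Fix $L = (T_1,\dots,T_m)$ with $m = \poly(n)$ and each $T_i$ a disjunction of literals over the $n$ input bits. I would take a concept's binary representation to be the selection vector $\theta \in \{0,1\}^m$, so that the induced class $\fC$ is $s(n)$-represented with $s(n) = m = \poly(n)$, and set $\eval_n(\theta,x) := \bigwedge_{i\,:\,\theta_i = 1} T_i(x)$. Then $\mu^{\land\lor}_L$ is a target distribution supported on $\fC_n$ (sample $\theta \sim \mu$, output $\eval_n(\theta,\cdot)$) that is $\poly(n)$-time samplable because $\mu$ is, and $\fC$ is the class the statement refers to as $\mathrm{DNF}$ (a family of CNFs, equivalently of DNFs up to complementation).

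The heart of the argument is the identity, over the joint variables $(\theta,x) \in \{0,1\}^{m+n}$,
\[
    \eval_n(\theta,x) \;=\; \bigwedge_{i=1}^{m}\bigl(\neg\theta_i \vee T_i(x)\bigr),
\]
a CNF with $m$ clauses, since ``$\theta_i = 1 \Rightarrow T_i(x) = 1$'' is exactly the clause $\neg\theta_i \vee T_i(x)$. A polynomial-size CNF over $m+n$ variables is a polynomial-size $\majthr$ circuit: each clause is a (unate) threshold gate, and the conjunction of the $m$ clauses $c_1,\dots,c_m$ equals $\mathrm{MAJ}_{2m-1}(c_1,\dots,c_m,0,\dots,0)$ with $m-1$ appended constant-$0$ gates. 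Hence $\eval \in \majthr$ with $\poly(n)$-size circuits, and \thmref{intro:distPAC_MajThr}, applied with this $\eval$, the $\poly(n)$-time samplable target distribution $\mu := \mu^{\land\lor}_L$, and $s(n) = \poly(n)$, yields that $(\fC,\mu^{\land\lor}_L)$ --- i.e.\ $(\mathrm{DNF},\mu^{\land\lor}_L)$ --- is efficiently distPAC-learnable.

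The one point demanding care lives inside the pipeline from \thmref{intro:mainResult} to \thmref{intro:distPAC_MajThr}: for the $\majthr$ circuit above and every product distribution $(\mu',\rho)$, one needs a $2$-party distributional protocol of cost $O(\log n)$ and bias $\Omega(1/\poly(n))$ under the bipartition assigning $\theta$ to player one and $x$ to player two. Following Nisan's sampling idea, the protocol draws $j \sim \mathrm{Unif}[2m-1]$ and outputs the $j$-th gate value of $\mathrm{MAJ}_{2m-1}(c_1,\dots,c_m,0,\dots,0)$; evaluating the clause $c_j = \neg\theta_j \vee T_j(x)$ costs only $O(1)$ bits, since $\theta_j$ belongs to player one and $T_j(x)$ to player two. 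The $m-1$ constant gates are present precisely to rebalance the otherwise lopsided top conjunction: the naive sample-one-clause estimator is badly (even negatively) biased on inputs where the conjunction is false yet most clauses hold, whereas after shifting the majority threshold the sampled value is correct with probability at least $\tfrac12 + \tfrac{1}{2(2m-1)}$ on \emph{every} input --- and fixing the public randomness to an optimal value for each $(\mu',\rho)$ turns this worst-case bias into exactly the distributional guarantee \thmref{intro:mainResult} consumes. Plugging $c = O(\log n)$, $\gamma = \Omega(1/m)$, $s(n) = \poly(n)$, and sampling time $\poly(n)$ into \thmref{intro:mainResult} gives a learner running in time $\poly(n,\epsilon^{-1},\delta^{-1},\eta^{-1})$. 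The same template proves \thmref{intro:distPAC_MajThr2} and \thmref{intro:polytope}: replace the $T_i$ by linear threshold functions and the clause $\neg\theta_i \vee T_i(x)$ by the linear threshold function in $(\theta,x)$ obtained from $\theta_i$ and $T_i$ via the standard large-coefficient trick.
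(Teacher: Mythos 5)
Your proof is correct, and it reaches the theorem by the same overall route as the paper --- encode the selection vector into the input of a single fixed $\majthr$ evaluation rule and then invoke \thmref{intro:distPAC_MajThr} (equivalently \thmref{mainResult} plus Nisan's protocol, \thmref{nisanThm}) --- but the gadget you use is genuinely different and cleaner. The paper (proofs of \thmref{distPAC_MajThr2}, \thmref{distPAC_Polytope2}, \thmref{distPAC_DNF2}) modifies each bottom gate $T_i$ into $T_i'$ with a ``kill-switch'' variable of large negative weight that forces deselected gates to output $0$, and then must compensate inside the unweighted top majority by sampling auxiliary vote variables $y$ (and, for the AND/DNF case, $v$) whose Hamming weights are tied to $|z|$; this changes the representation to a $3m$-bit string and replaces $\mu$ by a transformed representation distribution $\mu_L^{**}$. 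You instead absorb the selection bit into each clause via $\neg\theta_i \vee T_i(x)$, so deselected clauses are forced to \emph{true} and the top gate can be a plain AND, which you realize as $\mathrm{MAJ}_{2m-1}$ padded with $m-1$ constant-$0$ threshold gates; the padding is fixed (independent of the representation), so the representation stays $\theta$ itself and the target distribution stays exactly $\mu$, with no balancing bookkeeping. Both constructions yield a $\poly(n)$-size $\majthr$ evaluation rule and the same parameters $c = O(\log n)$, $\gamma = \Omega(1/m)$, so the conclusions coincide; your version buys a simpler representation and distribution, while the paper's kill-switch template is the one it reuses verbatim across the $\majthr$, polytope, and DNF theorems. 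Two minor remarks: your explicit sample-a-random-gate protocol is not needed, since \thmref{nisanThm} already covers every $\majthr$ circuit under any input partition (it is, in any case, exactly Nisan's Lemma~5 argument specialized to your circuit); and your convention that $\theta_i = 1$ selects $T_i$ matches the introduction, whereas the paper's body uses $\theta_i = 0$ --- an immaterial relabeling.
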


Even though distPAC-learning is stronger that heurPAC-learning, \thmref{intro:distPAC_MajThr2}, \ref{intro:polytope} and \ref{intro:DNF} are formally incomparable to the heurPAC-learning algorithm for $O(\log n)$-juntas due to \cite{nanashima2021theory}. This is for the following reasons. On one hand, \thmref{intro:distPAC_MajThr2}, \ref{intro:polytope} and \ref{intro:DNF} are stronger because $\majthr$-circuits, polytopes, and DNFs are strictly more powerful than $O(\log n)$-juntas, and we handle learning over arbitrary example distributions, while \cite{nanashima2021theory} only handles the uniform example distribution. However, the confounding variable is that the heurPAC algorithm works with respect to the uniform distribution over $O(\log n)$-juntas, while for any $L, \mu$, the target distributions that we learn are not uniform over their support. We thus cannot show that our algorithm is stronger than Nanashima's in a formal sense.
An interesting direction for future work is to obtain distPAC-learning algorithms for other natural distributions over $\majthr$-circuits, polytopes, and DNFs.

\subsubsection{Impossibility of Encoded-Input Weak PRFs}

Although we do not obtain any PAC-learning algorithm in Valiant's model from Nisan's natural proofs, we show that distributional PAC-learning is still enough to rule out weak PRFs (which was one of the initial motivations of studying Question 1). In fact, we show that this is true even when the weak PRF is allowed an arbitrary input encoding.

\begin{theorem}\label{intro:NoEIWPRF}
    There exists no encoded-input weak PRF that is evaluated by a $\majthr$-circuit.
\end{theorem}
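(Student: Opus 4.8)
The plan is to derive the statement from the distPAC-learnability of $\majthr$ in \thmref{intro:distPAC_MajThr}: a weak PRF is exactly an object whose learnability (over the induced example distribution) is forbidden. Suppose toward a contradiction that there is an encoded-input weak PRF --- a public, polynomial-time computable encoding $\mathsf{Enc} = \{\mathsf{Enc}_n : \{0,1\}^n \to \{0,1\}^{m}\}$ with $m = \poly(n)$, together with a $\majthr$-circuit family $C$ in variables $(k,z)$, $z \in \{0,1\}^m$, $k \in \{0,1\}^{\poly(n)}$, such that $F_k(x) := C(k, \mathsf{Enc}(x))$ is indistinguishable, up to negligible advantage, from a uniformly random function against every polynomial-time distinguisher that queries $\langle x, F_k(x)\rangle$ on uniformly random $x$, with $k$ drawn uniformly. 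It suffices to rule out single-bit output, since each output coordinate of a multi-bit $\majthr$ weak PRF is itself a $\majthr$ weak PRF.

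First I would set up the learning instance. Define the evaluation rule $\eval_m(k,z) := C(k,z)$. Since $C$ is a $\majthr$-circuit, $\eval \in \majthr$, and the $\eval$-induced concept class $\fC$ is $\poly(n)$-represented, with $\fC_m = \{C_k := (z \mapsto C(k,z))\}$; each $C_k$ is a $\majthr$-circuit, being a restriction of one. Let $\mu$ be the distribution on representations obtained by sampling $k$ uniformly; it is polynomial-time samplable. Applying \thmref{intro:distPAC_MajThr} to $(\fC,\mu)$ yields an efficient distPAC-learner $A$; fix $\epsilon = \delta = \eta = 1/8$. The key feature is that $A$ is guaranteed to succeed over \emph{every} example distribution $\rho$ on $\{0,1\}^m$, and in particular over $\rho^\star := \mathsf{Enc}_n(U_n)$, the distribution of encoded challenge inputs in the weak PRF game --- even though neither we nor $A$ know $\rho^\star$ explicitly.

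Next I would build the distinguisher $B$. Given the weak PRF oracle, $B$ makes $q+1 = \poly(n)$ oracle calls, obtaining pairs $\langle x_i, b_i\rangle$; it computes $z_i := \mathsf{Enc}_n(x_i)$ (it has the code of $\mathsf{Enc}$) and feeds $\langle z_1,b_1\rangle, \dots, \langle z_q, b_q\rangle$ --- which in the real case are i.i.d.\ draws from $\ExO{C_k}{\rho^\star}$ --- to $A$, obtaining a hypothesis $h$. On the held-out pair $\langle x^\star, b^\star\rangle$, $B$ outputs ``real'' iff $h(\mathsf{Enc}_n(x^\star)) = b^\star$. In the real case, with probability at least $(1-\eta)$ over $k$ and $(1-\delta)$ over $A$'s coins one has $\Pr_{x^\star}[h(\mathsf{Enc}_n(x^\star)) \ne C_k(\mathsf{Enc}_n(x^\star))] \le \epsilon$, so $B$ answers ``real'' with probability at least $(1-\eta)(1-\delta)(1-\epsilon)$; in the random case $b^\star$ is, with overwhelming probability (the challenge input being fresh), independent of $h(\mathsf{Enc}_n(x^\star))$, so $B$ answers ``real'' with probability $1/2$. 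For the chosen constants this is an $\Omega(1)$ advantage by a polynomial-time adversary, contradicting weak PRF security; hence no such encoded-input weak PRF exists.

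The main obstacle is conceptual and mild: one must recognize that an arbitrary public input encoding costs us nothing. On the learning side this is because the encoding merely reshapes the example distribution into $\rho^\star$, and distPAC-learnability is promised over \emph{all} example distributions; from the communication-game viewpoint --- \thmref{intro:mainResult} applied directly to the rule $(k,x) \mapsto F_k(x)$ --- it is because the input player applies $\mathsf{Enc}$ with zero communication, so Nisan's $O(\log n)$-cost, constant-bias protocol for $\majthr$ still evaluates the rule. The remaining points are routine: a constant heuristic mass $\eta$ of ``bad'' keys is harmless because a constant distinguishing advantage already refutes negligible-advantage security, and the passage from a good predictor on a fresh example to a distinguisher is the standard argument sketched above.
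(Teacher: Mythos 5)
Your proposal matches the paper's own proof in its essentials: the paper likewise treats ${\sf f}$ as a $\majthr$ evaluation rule, ${\sf gen}$ as the polynomial-time samplable target distribution, and ${\sf enc}$ as the sampler of the example distribution $\rho_{{\sf enc}}$, then invokes \thmref{intro:mainResult} together with Nisan's $(O(\log n), 1/\poly(n))$ protocol to conclude the trio cannot be an e.i.\ weak PRF. The only difference is that you spell out the learner-to-distinguisher step (predicting a held-out encoded point), which the paper leaves implicit; just note that your claim that the challenge point is ``fresh with overwhelming probability'' is not automatic for an arbitrary encoding (whose image, and hence $\rho^\star$, could have high collision probability), a point the paper's terser argument glosses over as well.
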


Our notion of encoded-input weak PRF is the natural weak analogue of the encoded-input PRF introduced by \cite{boneh2018exploring}. Loosely speaking, an encoded-input weak PRF is a PRF that is only required to be secure when the adversary sees random points, where the inputs are taken uniformly at random from a predefined multi-subset of the input space. We refer to \secref{section:NoWeakPRFs} for details.

\subsection{Proof Overview of \thmref{intro:mainResult}}\label{proofIdeas}

We now overview the ideas behind the proof of \thmref{intro:mainResult}. 
The most important tool we use is the $2$-party norm  of a function, $R_2(f)$, which is defined to be the expected product of a function computed on a list of correlated inputs. 

\begin{definition}[$2$-party norm]
For $f: (\{0,1\}^{n})^2 \rightarrow \{-1,1\}$, the $2$-party norm of $f$ is defined as
\begin{equation}\label{intro:2partynorm}
    R_2(f) := \Ex{x^0_1, x^0_2, x^1_1, x^1_2 \sim U_n}{\prod_{\epsilon_1,\epsilon_2 \in \{0,1\}} f(x^{\epsilon_1}_1, x^{\epsilon_2}_2)}
\end{equation}
\end{definition}

Throughout the paper, we use $U_n$ to denote the uniform distribution over $\{0,1\}^n$.
The 2-party norm is a special case of the $k$-party norm (sometimes called the cube-measure or box-norm), which was introduced by \cite{babai1992multiparty} for obtaining lower bounds in $k$-party Number-on-Forehead communication complexity.

The crucial property about $R_2(f)$ is that, up to parameters, it upper bounds the correlation of $f$ with functions computable by deterministic $2$-party communication protocols. We denote by $\dCC{2}{c}$ the set of all $f: (\{0,1\}^{n})^2 \rightarrow \{-1,1\}$ that have deterministic 2-party communication protocols with cost at most $c$. For a definition of the deterministic $2$-party communication model, see \secref{prelims:kCC}.

Implicit in all three of \cite{chung1993communication, raz2000bns, viola2007norms} (who showed a related theorem in the more general $k$-party case), is the following bound: 
\begin{theorem}[\textit{The} correlation bound --- \cite{chung1993communication,raz2000bns,viola2007norms}]
\label{intro:thm:corrBound}
For every function $f : (\{0,1\}^{n})^2 \rightarrow \{-1,1\}$, 
\begin{equation}\label{intro:corrBound}
    \corr{\dCC{2}{c}} = \max\limits_{\pi \in \dCC{2}{c}}\left|\Ex{x}{f(x) \cdot \pi(x)}\right| \le 2^c \cdot R_2(f)^{1/4}
\end{equation}
for $x$ uniformly distributed over $(\{0,1\}^{n})^2$.
\end{theorem}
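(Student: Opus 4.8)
The plan is to prove the correlation bound \eqref{intro:corrBound} by the standard ``Cauchy--Schwarz unfolding'' argument that underlies all box-norm/discrepancy estimates. The key structural fact is that any $\pi \in \dCC{2}{c}$ has a \emph{combinatorial rectangle} decomposition: the protocol partitions $(\{0,1\}^n)^2$ into at most $2^c$ monochromatic rectangles $R = A \times B$, so $\pi = \sum_{j} \sigma_j \cdot \mathbf{1}_{A_j \times B_j}$ with $\sigma_j \in \{-1,1\}$ and the $A_j \times B_j$ pairwise disjoint and covering. Hence
\[
  \Bigl|\Ex{x}{f(x)\pi(x)}\Bigr| = \Bigl|\sum_j \sigma_j \Ex{x_1,x_2}{f(x_1,x_2)\mathbf{1}_{A_j}(x_1)\mathbf{1}_{B_j}(x_2)}\Bigr| \le \sum_j \Bigl|\Ex{x_1,x_2}{f(x_1,x_2)\mathbf{1}_{A_j}(x_1)\mathbf{1}_{B_j}(x_2)}\Bigr|.
\]
So it suffices to show that for \emph{every} rectangle $A \times B$, $\bigl|\Ex{x_1,x_2}{f(x_1,x_2)\mathbf{1}_A(x_1)\mathbf{1}_B(x_2)}\bigr| \le R_2(f)^{1/4}$; summing over the $\le 2^c$ rectangles then gives \eqref{intro:corrBound}.

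For the single-rectangle bound I would apply Cauchy--Schwarz twice, once in each coordinate. Write $F(x_1) := \Ex{x_2}{f(x_1,x_2)\mathbf{1}_B(x_2)}$. Then $\bigl|\Ex{x_1}{F(x_1)\mathbf{1}_A(x_1)}\bigr|^2 \le \Ex{x_1}{F(x_1)^2}$ since $\mathbf{1}_A$ is a $\{0,1\}$-indicator (so $\Ex{}{\mathbf{1}_A} \le 1$). Expanding the square introduces an independent copy $x_1^0, x_1^1$:
\[
  \Ex{x_1}{F(x_1)^2} = \Ex{x_1^0,x_1^1,x_2}{f(x_1^0,x_2)f(x_1^1,x_2)\mathbf{1}_B(x_2)}.
\]
Now repeat in the second coordinate: setting $G(x_2) := \Ex{x_1^0,x_1^1}{f(x_1^0,x_2)f(x_1^1,x_2)}$, we have $\Ex{x_1}{F(x_1)^2} = \Ex{x_2}{G(x_2)\mathbf{1}_B(x_2)}$, and applying Cauchy--Schwarz again with the $\{0,1\}$-indicator $\mathbf{1}_B$ bounds its square by $\Ex{x_2}{G(x_2)^2}$. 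Expanding this last square with an independent copy $x_2^0, x_2^1$ yields exactly
\[
  \Ex{x_2}{G(x_2)^2} = \Ex{x_1^0,x_1^1,x_2^0,x_2^1}{\ \prod_{\epsilon_1,\epsilon_2 \in \{0,1\}} f(x_1^{\epsilon_1}, x_2^{\epsilon_2})} = R_2(f).
\]
Chaining the inequalities, $\bigl|\Ex{x_1,x_2}{f \cdot \mathbf{1}_A\mathbf{1}_B}\bigr|^4 \le \bigl(\Ex{x_1}{F^2}\bigr)^2 \le \Ex{x_2}{G^2} = R_2(f)$, i.e. the per-rectangle correlation is at most $R_2(f)^{1/4}$, as needed.

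The one genuine subtlety — and the step I would be most careful about — is the direction of the Cauchy--Schwarz applications and the fact that we are spending a factor $2^c$ (not $2^{c/2}$ or similar) because we use the crude triangle inequality over rectangles rather than a more clever aggregation; this is exactly why the bound has the form $2^c R_2(f)^{1/4}$ and why the theorem is only tight up to these parameters. I should double-check that the indicators $\mathbf{1}_A, \mathbf{1}_B$ are handled with the correct inequality ($\Ex{}{g \cdot \mathbf{1}_A}^2 \le \Ex{}{g^2}\Ex{}{\mathbf{1}_A} \le \Ex{}{g^2}$, using $\mathbf{1}_A^2 = \mathbf{1}_A \le 1$), and that the four-fold product in \eqref{intro:2partynorm} matches the expansion after introducing the two independent copies in each coordinate; both are routine but worth stating explicitly. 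Everything else is bookkeeping, and the argument is exactly the $2$-party specialization of the $k$-party norm bounds of \cite{chung1993communication, raz2000bns, viola2007norms}.
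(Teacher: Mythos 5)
The paper itself does not prove \thmref{intro:thm:corrBound}; it cites it as implicit in \cite{chung1993communication,raz2000bns,viola2007norms}, and your outline (decompose $\pi \in \dCC{2}{c}$ into at most $2^c$ monochromatic leaf rectangles, bound the correlation with each rectangle by $R_2(f)^{1/4}$ via two applications of Cauchy--Schwarz, then use the triangle inequality) is exactly the standard argument from those references. However, the execution of the two Cauchy--Schwarz steps duplicates the wrong variables, so the intermediate identities you display are false, not just sloppily labelled. Squaring $F(x_1) = \mathbb{E}_{x_2}\left[f(x_1,x_2)\mathbf{1}_B(x_2)\right]$ introduces two independent copies of the \emph{inner} variable $x_2$ (and two copies of $\mathbf{1}_B$): the correct identity is
\begin{equation*}
\mathbb{E}_{x_1}\left[F(x_1)^2\right] \;=\; \mathbb{E}_{x_1,\,x_2^0,\,x_2^1}\left[f(x_1,x_2^0)\,f(x_1,x_2^1)\,\mathbf{1}_B(x_2^0)\,\mathbf{1}_B(x_2^1)\right],
\end{equation*}
whereas your displayed right-hand side $\mathbb{E}_{x_1^0,x_1^1,x_2}\left[f(x_1^0,x_2)f(x_1^1,x_2)\mathbf{1}_B(x_2)\right]$ is a different quantity. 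Consequently your $G(x_2) = \mathbb{E}_{x_1^0,x_1^1}\left[f(x_1^0,x_2)f(x_1^1,x_2)\right] = \bigl(\mathbb{E}_{x_1}\left[f(x_1,x_2)\right]\bigr)^2$, and the claimed equality $\mathbb{E}_{x_2}\left[G(x_2)^2\right] = R_2(f)$ fails in general: if $f$ depends only on $x_1$ and has $\mathbb{E}_{x_1}[f]=0$ (e.g.\ $f$ is the parity of the first bit of $x_1$), then $G \equiv 0$ so the left side is $0$, while $R_2(f) = 1$ because the product over $\epsilon_2$ squares each factor.

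The fix is routine and preserves your outline. After the first Cauchy--Schwarz (over $x_1$, against $\mathbf{1}_A$, using $\mathbb{E}[\mathbf{1}_A^2]\le 1$), keep \emph{both} copies of $\mathbf{1}_B$ and apply the second Cauchy--Schwarz over the pair $(x_2^0,x_2^1)$ against $\mathbf{1}_B(x_2^0)\mathbf{1}_B(x_2^1)$, with inner function $H(x_2^0,x_2^1) := \mathbb{E}_{x_1}\left[f(x_1,x_2^0)f(x_1,x_2^1)\right]$; squaring $H$ duplicates $x_1$ and yields exactly $R_2(f)$. This gives $\bigl|\mathbb{E}\left[f\cdot\mathbf{1}_A\mathbf{1}_B\right]\bigr|^4 \le R_2(f)$ per rectangle, and summing over the at most $2^c$ leaf rectangles (which are monochromatic since the paper's protocols determine the output from the transcript) recovers \eqref{intro:corrBound}. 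With that correction your proof is complete and matches the cited argument.
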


\noindent Equation \eqref{intro:corrBound} implies that $(2^{-c} \cdot \corr{\dCC{2}{c}})^4 \le R_2(f)$.

The construction of the learning algorithm of \thmref{intro:mainResult} uses the lower bound on $R_2(f)$ to distinguish structure from randomness. In other words, hypothetically consider functions $f: (\{0,1\}^{n})^2 \rightarrow \{-1,1\}$ such that the quantity $(2^{-c} \cdot \corr{\dCC{2}{c}})^4$ is relatively \textit{large} (greater than $1/\poly(n)$, say). Such functions can be distinguished from uniformly random functions, by taking a random sample from the distribution over the value inside the expectation in \eqref{intro:2partynorm}. This follows from the fact that $R_2(\psi)$ for a uniformly random function $\psi : (\{0,1\}^{n})^2 \rightarrow \{-1,1\}$ is bounded from above by a negligible function of $n$.

Using this idea, we have the following proof outline. First, we can try to prove a ``distinguisher-to-predictor'' lemma, in the style of \cite{yao1982theory}, in order to obtain a weak randomized predictor for $f$ (a weak predictor requires accuracy of a prediction for an unseen example to be only slightly more accurate than a coin toss). Second, we could apply standard averaging arguments to construct a weak PAC-learning algorithm. Finally, we could apply celebrated boosting results from learning theory \cite{DBLP:journals/ml/Schapire90, domingo2000madaboost} to produce a full-blown PAC-learning algorithm.

However, this proof outline remains incomplete. First, the 2-party norm of the function is the expectation of a product of \textit{correlated} inputs, so we have not given any way of using independent random examples. Second, we have said nothing of how to handle arbitrary example distributions (the inputs to $f$ on the right hand side of \eqref{intro:corrBound} should be uniformly random). 
We handle both of these problems simultaneously, roughly by thinking of $f$ as the \textit{evaluation rule}, and not the concept itself. 

First, let us describe how $f$ should be viewed in more detail. There are two inputs to $f$, $x_1$ and $x_2$. Without loss of generality, identify $x_1$ as a random string for sampling the target distribution $\mu$, and identify $x_2$ as a random string for sampling the example distribution $\rho$, with $|x_1| = |x_2| = m$. We abuse the notation and let $z =\rho(x_2)$ to denote a point $z$ sampled according to $\rho$ with the random bits $x_2$. Similarly, we let $g$ be the function represented by $\pi_g = \mu(x_1)$. Next, fix the evaluation rule $\eval$, which is the map that takes as input the concept representation $\pi_g$, plus the input $z$, and outputs $\eval(\pi_g,z) = g(z) = y$. As a function of $x_1,x_2$, we can thus write the process of generating a labelled example as $\langle \rho(x_2), \eval(\mu(x_1), \rho(x_2)) \rangle = \langle z,g(z)\rangle = \langle z,y\rangle$. We let $f(x_1,x_2) =\eval(\mu(x_1), \rho(x_2))$. This allows us to write:

\begin{equation*}
    R_2(f) = \Ex{x^0_1, x^0_2, x^1_1, x^1_2}{v(x^0_1, x^0_2, x^1_1, x^1_2)} \text{\ for \ } v(x^0_1, x^0_2, x^1_1, x^1_2) := \prod_{\epsilon_1,\epsilon_2 \in \{0,1\}} \eval(\mu(x^{\epsilon_1}_1), \rho(x^{\epsilon_2}_2))
\end{equation*}

Now, we describe how we construct a weak randomized predictor which only uses random examples from an arbitrary $\rho$. At the core, we will use the example oracle to sample a single instance of $v(x^0_1, x^0_2, x^1_1, x^1_2)$, over uniformly random $x^0_1, x^0_2, x^1_1, x^1_2 \in \{0,1\}^m$. To see the significance of this, observe that by definition $v(x^0_1, x^0_2, x^1_1, x^1_2)$ has expected value $R_2(f)$. Hence, the process of sampling this value distinguishes examples labelled by uniformly random functions from examples labelled by concepts sampled according to $\mu$ --- as long as $\mu$ samples representations of concept that are evaluated by $\eval$. This claim is justified because whenever it is possible to win the communication game $\fG$ associated with $\eval$ with high bias and low communication, \thmref{intro:thm:corrBound} implies that $R_2(f)$ is large. In other words, $R_2(f)$ is guaranteed to be large whenever it is possible to efficiently (probabilistically) communicate the evaluation rule $\eval$ (because this implies winning $\fG$ with good bias). At this point, we use a simple hybrid argument to construct a randomized prediction algorithm for examples sampled according to $\rho$.

It remains to verify that the randomized prediction algorithm can actually sample $v(x^0_1, x^0_2, x^1_1, x^1_2)$, using only access to $\ExO{g}{\rho}$, where $g$ is the concept sampled according to the target distribution $\mu$. To see this, observe that the distribution over $v(x^0_1, x^0_2, x^1_1, x^1_2)$ is identical to the distribution over $g(z_1)g(z_2)h(z_1)h(z_2)$, for $\langle z_1, g(z_1) \rangle, \langle z_2, g(z_2)\rangle \sim \ExO{g}{\rho}$, and $h \sim \mu$. The value $h(z_1)h(z_2)$ can be computed because $h(z_1)$ and $h(z_2)$ can be \textit{queried}, since $h$ is sampled \textit{locally} by the algorithm. 
Therefore, we only need $\ExO{g}{\rho}$.

We also need to verify that $\rho$ need not be efficiently samplable. To argue this this, we observe that communicating parties participating in $\fG$ have unbounded computational power. This means that, even if $\rho$ is an arbitrary distribution, there is no effect on the communication cost of $\fG$. Indeed, the process of sampling $\rho$ can be viewed as a local pre-processing step in the protocol for party two. Therefore, $R_2(f)$ does not decrease when $\rho$ is arbitrary.

\subsection{Discussion}

\subsubsection{Distributional PAC-Learning vs. Related Models}\label{comparison}

As mentioned previously, distPAC-learning strengthens heurPAC-learning. This is due to the stronger quantification over example distributions. The main benefit of this is that it facilitates boosting of weak learning algorithms, which needs worst-case guarantees over the example distribution (see \secref{weak_strong_equiv} for a formal statement). We encourage the reader to visit Section 1.2 of \cite{nanashima2021theory}, as their points regarding the motivations of heurPAC-learning as a relaxation of Valiant's PAC model, largely apply to distPAC-learning as well. Additionally, see Section 1.2 of \cite{nanashima2021theory} for commentary of the differences with previous ``implicit'' definitions of average-case learning, such as in \cite{jackson2005learning, jackson2011learning, sellie2009exact} also apply to distPAC-learning. 

In comparison to the seminal work of \cite{blum1993cryptographic}, distPAC-learning also differs on the order of quantifiers. In the definition of the average-case prediction considered by \cite{blum1993cryptographic}, both the target distribution $\mu$ and the example distribution $\rho$ are fixed. This means that there can be a different prediction algorithm, for each pair of $\mu$ and $\rho$. This model is weaker than both the heurPAC-learning and distPAC-learning models.

\subsubsection{Other Related Work}

Many other relationships between learning theory and communication complexity have been studied. Some notable examples include \cite{kremer1999randomized, linial2009learning, feldman2014sample, kane2019communication} (also see the references therein). All of these works study relationships between communication complexity and notions of learning complexity, such as sample complexity \cite{kremer1999randomized, kane2019communication}, differentially private sample complexity \cite{feldman2014sample}, margin complexity \cite{linial2009learning}, VC dimension \cite{kremer1999randomized, feldman2014sample} and Littlestone dimension \cite{feldman2014sample}. These works are all incomparable to ours, as they do not directly study relationships between communication complexity and the \textit{computational} complexity of learning.

Learning intersections of halfspaces (i.e., ands of linear threshold functions) was considered by \cite{klivans2004learning}. Using Fourier-analytic techniques, \cite{klivans2004learning} showed a polynomial time learning algorithm for any function of a constant number of halfspaces with respect to the uniform distribution over examples. Additionally, \cite{klivans2004learning} gave a quasi-polynomial time algorithm for learning any Boolean function of a \textit{polylogarithmic} number of bounded-weights linear threshold functions, under \textit{any} distribution over examples. Our learning results (\thmref{intro:distPAC_MajThr}, and \thmref{intro:distPAC_MajThr2}) are at the moment similar but incomparable; we get polynomial time \textit{distributional} PAC-learning of concepts \textit{evaluated} by majorities of linear threshold functions over \textit{any example distribution}.

\subsubsection{Additional Remarks and Future Work}\label{addedRemarks}


In this work, we began by observing that if Question 1 resolves to ``yes, for every $\Lambda$,'' then cryptographic assumptions such as quantum polynomial time hardness of $\tilde{O}(n^{1.5})$-SVP (and its variants) do not hold. 
To continue our study, we shifted the focus to understanding what kind of learning algorithms \textit{are} implied, specifically by Nisan's communication complexity based natural proof technique (Question 2). 

To this end, we introduced the distPAC-learning model as a relaxation of Valiant's PAC model, and then proved \thmref{intro:mainResult}. We interpret \thmref{intro:mainResult} as proof that Question 1 can be answered positively, if we insist that the natural circuit lower bound is proved using Nisan's technique specifically, and the learning model is the distributional PAC model rather than Valiant's worst-case PAC model.



Towards this result, we exploited the specific aspects of Nisan's lower bound method. In other words, we did not simply use the fact that they were natural, but specifically \textit{how} they are natural. 
Therefore, it remains open whether or not \textit{other} natural proofs imply efficient distPAC-learning algorithms for \textit{other} concept classes, such as $\AC^0[p]$. 
In fact, the natural proofs for $\AC^0[p]$ of \cite{razborov1987lower, smolensky1987algebraic} are not affected by \thmref{intro:natBarrier}, so it is possible they could even imply algorithms in Valiant's model. At present, the difficulty in proving a similar barrier to \thmref{intro:natBarrier} for $\AC^0[p]$ is that, while we are able to prove hardness of PAC-learning in Valiant's model, we do not have any exponentially strong natural proofs.

We believe that the primary direction for future research is use the Razborov-Smolensky lower bounds in a non-black box way, in order to obtain distPAC-learning or even PAC-learning for $\AC^0[p]$. We note that works such as \cite{boyle2021low} have introduced conjectured weak PRF candidates that can be evaluated by $\AC^0[2]$, with considerable evidence to support \textit{subexponential} security of the candidates. A weak PRF evaluated by $\AC^0[2]$ with subexponential security would preclude any (even quasipolynomial time) distPAC-learning algorithm for concept classes induced by a $\AC^0[2]$ evaluation rule. That being said, we hope that our results shed light on what aspects of natural proofs are useful for learning algorithms that cannot query the concept.



\section{Preliminaries and Definitions}\label{prelims}

\subsection{2-Party Communication Complexity and Norms}\label{prelims:kCC}


In the following, we discuss Boolean functions that output -1 or 1. 


The $2$-party communication model is the following. There are $2$ parties, each having unbounded computational power, who try to collectively compute a function. The input to the function is separated into $2$ segments, and the $i^{th}$ party sees the $i^{th}$ segment. The parties can send each other direct messages.

Each party may transmit messages according to a fixed protocol. The protocol determines, for every sequence of bits transmitted up to that point (the transcript), whether the protocol is finished (as a function of the transcript), or if, and which, party writes next (as a function of the transcript) and what that party transmits
(as a function of the transcript and the input of that party). Finally, the last bit transmitted is the output of the protocol, which is a value in $\{-1, 1\}$. The complexity measure of the protocol is the total number of bits transmitted by the parties.

\begin{definition}[$\dCC{2}{c}$ class]
$\dCC{2}{c}$ is defined to be the class of functions $f: (\{0,1\}^n)^2 \rightarrow \{-1,1\}$ that can be computed by a $2$-party deterministic communication protocol with complexity $c$.
\end{definition}

Another communication model is \textit{randomized} communications.

\begin{definition}[Randomized $\dCC{2}{c}$]
The randomized $2$-party communication model allows the protocol to depend on random bits. Therefore, we allow the protocol to err in its output. The probability of error of a randomized protocol is $\epsilon$ if for every input to the function $f$, the protocol errs in outputs with probability at most $\epsilon$. We denote by $\randCC{2}$ the class of $2$-party randomized protocols that transmit at most $c$ bits and err with probability at most $1/2- \gamma$.

For the sake of simplicity, this paper uses only the public coin version of randomized communication complexity. Namely the parties all share a string of random bits.
\end{definition}

A model more relaxed than randomized communication is \textit{distributional} communication.

\begin{definition}[Distributional $\dCC{2}{c}$]
The distributional $2$-party communication model allows the protocol to err on certain inputs. Fix a distribution $\rho$ over $(\{0,1\}^n)^2$. A function $f: (\{0,1\}^n)^2 \rightarrow \{-1,1\}$ is in $\distCC{2}$ if there exists a communication protocol $p \in \dCC{2}{c}$ such that
\[
\Ex{(x_1, x_2) \sim \rho}{p(x_1, x_2) \cdot f(x_1, x_2)} \ge 2\gamma
\]
\end{definition}

Distributional communication complexity can be thought of as correlation.

\begin{definition}[Boolean function correlation]
    Define $\corr{\Lambda} := \max_{h \in \Lambda}|\ex{f(x) \cdot h(x)}|$, where $x$ is sampled uniformly at random from the domain.
\end{definition}

When we want to measure correlation between two function classes, we have it defined as follows:

\begin{definition}[Boolean function correlation]
    Define $\corrClass{\fC}{\Lambda} := \min_{f \in \fC}\max_{h \in \Lambda}|\ex{f(x) \cdot h(x)}|$, where $x$ is sampled uniformly at random from the domain.
\end{definition}

When $\rho$ is the uniform distribution, $f \in \distCC{2}$ is equivalent to $\corr{\dCC{2}{c}} \ge 2\gamma$.

A simple fact is that for any distribution $\rho$, $f \in \randCC{2}$ implies that $f \in \distCC{2}$. Therefore, $f \in \randCC{2}$ implies that $\corr{\dCC{2}{c}} \ge 2\gamma$.

\begin{definition}[$2$-party norm]
For $f: (\{0,1\}^n)^2 \rightarrow \{-1,1\}$, the $2$-party norm of $f$ is defined as
\begin{equation}\label{2partynorm}
    R_2(f) := \Ex{x^0_1, x^0_2, x^1_1, x^1_2 \sim \{0,1\}^n}{\prod_{\epsilon_1,\epsilon_2 \in \{0,1\}} f(x^{\epsilon_1}_1, x^{\epsilon_2}_2)}
\end{equation}
\end{definition}

The 2-party norm is a special case of the $k$-party norm (sometimes called the cube-measure), which was introduced by \cite{babai1992multiparty} for obtaining lower bounds in $k$-party Number-on-Forehead communication complexity.

The crucial property about $R_2(f)$ is that, up to parameters, it upper bounds the correlation of $f$ with functions computable by $2$-party communication protocols. Implicit in all three of \cite{chung1993communication, raz2000bns, viola2007norms} (who showed a related theorem in the more general $k$-party case), is the following bound: 
\begin{theorem}[\textit{The} correlation bound --- \cite{chung1993communication,raz2000bns,viola2007norms}]
\label{thm:corrBound}
For every function $f : (\{0,1\}^n)^2 \rightarrow \{-1,1\}$, 
\begin{equation}\label{corrBound}
    \corr{\dCC{2}{c}} \le 2^c \cdot R_2(f)^{1/4}
\end{equation}
\end{theorem}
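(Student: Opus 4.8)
The plan is to combine two standard ingredients: the rectangle structure of deterministic $2$-party protocols, and a double application of Cauchy--Schwarz to ``flatten'' the four-fold product defining $R_2(f)$. This is the argument implicit in \cite{chung1993communication, raz2000bns, viola2007norms}, specialized to the $2$-party case.

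First I would recall that any deterministic protocol of cost $c$ induces a partition of $(\{0,1\}^n)^2$ into at most $2^c$ combinatorial rectangles $R_1,\dots,R_N$ ($N \le 2^c$), with $R_i = A_i \times B_i$, on each of which the output is a fixed value $\sigma_i \in \{-1,1\}$. Hence, fixing the $\pi \in \dCC{2}{c}$ achieving $\corr{\dCC{2}{c}}$, we may write $\pi(x_1,x_2) = \sum_{i=1}^N \sigma_i\, \mathbf 1_{A_i}(x_1)\,\mathbf 1_{B_i}(x_2)$, and by the triangle inequality
\[
  \corr{\dCC{2}{c}} = \left| \Ex{x_1,x_2}{f(x_1,x_2)\pi(x_1,x_2)} \right| \le \sum_{i=1}^N \left| \Ex{x_1,x_2}{f(x_1,x_2)\,\mathbf 1_{A_i}(x_1)\,\mathbf 1_{B_i}(x_2)} \right|.
\]

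The core of the proof is then the claim that for \emph{any} $g,h:\{0,1\}^n \to [-1,1]$ one has $\left| \Ex{x_1,x_2 \sim U_n}{f(x_1,x_2)\,g(x_1)\,h(x_2)} \right| \le R_2(f)^{1/4}$. To establish this, write $F$ for the left-hand side, pull the $x_1$-expectation outside, use $|g| \le 1$ to drop $g$, and apply Cauchy--Schwarz (equivalently Jensen) over $x_1$; this yields $|F|^2 \le \Ex{x_1}{\left(\Ex{x_2}{f(x_1,x_2)h(x_2)}\right)^2}$, and expanding the square duplicates $x_2$ into an independent pair $x_2^0,x_2^1$. Now repeat the maneuver on the other coordinate: regroup so the $(x_2^0,x_2^1)$-expectation is outermost, use $|h|\le 1$ to drop $h(x_2^0)h(x_2^1)$, and apply Cauchy--Schwarz over $(x_2^0,x_2^1)$; squaring duplicates $x_1$ into $x_1^0,x_1^1$, and what remains is precisely $|F|^4 \le \Ex{x_1^0,x_1^1,x_2^0,x_2^1}{\prod_{\epsilon_1,\epsilon_2 \in \{0,1\}} f(x_1^{\epsilon_1},x_2^{\epsilon_2})} = R_2(f)$. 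I would remark in passing that $R_2(f)\ge 0$, since it equals $\Ex{x_1^0,x_1^1}{\left(\Ex{x_2}{f(x_1^0,x_2)f(x_1^1,x_2)}\right)^2}$, so that the fourth root is well defined. Applying the claim with $g = \mathbf 1_{A_i}$ and $h = \mathbf 1_{B_i}$ (both $\{0,1\}$-valued, hence in $[-1,1]$) to each term in the sum and using $N \le 2^c$ gives $\corr{\dCC{2}{c}} \le 2^c \cdot R_2(f)^{1/4}$.

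I expect the only delicate point to be the bookkeeping in the two Cauchy--Schwarz steps: one must take each application over the \emph{newly independent} variables and must use the $[-1,1]$ bound on $g$ (resp. $h$) \emph{before} squaring, so that the four cube-corners $x_1^{\epsilon_1} \times x_2^{\epsilon_2}$ appear in exactly the pattern of \eqref{2partynorm}. The rectangle decomposition of deterministic protocols, the non-negativity of $R_2$, and the final union bound over at most $2^c$ rectangles are all routine.
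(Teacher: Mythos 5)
Your proposal is correct, and it is exactly the standard Cauchy--Schwarz/BNS argument (monochromatic-rectangle decomposition of a cost-$c$ deterministic protocol into at most $2^c$ rectangles, then two applications of Cauchy--Schwarz to bound the correlation of $f$ with any product $g(x_1)h(x_2)$ by $R_2(f)^{1/4}$) that is implicit in the works cited for this theorem; the paper itself states the bound without proof, deferring to those references. Your bookkeeping in the two squaring steps and the observation that $R_2(f)\ge 0$ are both right, so there is nothing to fix.
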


An immediate corollary of this bound is: 
\begin{theorem}
\label{thm:randBound}
For every function $f : (\{0,1\}^n)^2 \rightarrow \{-1,1\}$, such that $f \in \RANDCC{2}{c}{\gamma}$,
\begin{equation}
    \gamma \le 2^c \cdot R_2(f)^{1/4}
\end{equation}
\end{theorem}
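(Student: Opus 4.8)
The plan is to derive \thmref{thm:randBound} directly from \thmref{thm:corrBound} together with the facts about distributional communication complexity recorded just above. The observation is that $\RANDCC{2}{c}{\gamma}$ membership is a promise that there is a single public-coin randomized protocol of cost $c$ whose advantage is at least $\gamma$ on every input; since a worst-case advantage guarantee certainly implies an average-case advantage guarantee against the uniform distribution, $f \in \RANDCC{2}{c}{\gamma}$ implies $f \in \distCC[U]{2}{c}{\gamma}$ where $U$ is uniform over $(\{0,1\}^n)^2$. This is exactly the ``simple fact'' stated in the excerpt ($f \in \randCC{2}$ implies $f \in \distCC{2}$ for any $\rho$, hence in particular for uniform $\rho$).

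First I would unwind the definition of distributional complexity specialized to the uniform distribution: $f \in \distCC[U]{2}{c}{\gamma}$ means there exists $p \in \dCC{2}{c}$ with $\Ex{x \sim U}{p(x) f(x)} \ge 2\gamma$. By the remark ``When $\rho$ is the uniform distribution, $f \in \distCC{2}$ is equivalent to $\corr{\dCC{2}{c}} \ge 2\gamma$,'' this immediately yields $\corr{\dCC{2}{c}} \ge 2\gamma$, i.e. there is some $\pi \in \dCC{2}{c}$ with $|\Ex{x}{f(x)\pi(x)}| \ge 2\gamma$. Actually, for the bound we want it suffices to note $\corr{\dCC{2}{c}} \ge 2\gamma \ge \gamma$, since $\gamma \ge 0$.

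Next I would simply chain this with \thmref{thm:corrBound}: we have $\gamma \le 2\gamma \le \corr{\dCC{2}{c}} \le 2^c \cdot R_2(f)^{1/4}$, which is the claimed inequality $\gamma \le 2^c \cdot R_2(f)^{1/4}$. (One could equally state the sharper $2\gamma \le 2^c \cdot R_2(f)^{1/4}$, but the weaker form suffices and is what is claimed.) The only subtlety worth spelling out is the passage from a randomized protocol with pointwise error $1/2 - \gamma$ to a deterministic protocol correlated with $f$; this is the standard averaging (fixing-the-coins) argument: write the advantage $\Ex{x\sim U}{\Ex{r}{p_r(x) f(x)}} = \Ex{r}{\Ex{x \sim U}{p_r(x) f(x)}} \ge 2\gamma$, so some fixing $r^\ast$ of the public coins gives a deterministic protocol $p_{r^\ast} \in \dCC{2}{c}$ with $\Ex{x\sim U}{p_{r^\ast}(x) f(x)} \ge 2\gamma$.

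The ``main obstacle'' is really just bookkeeping rather than a genuine difficulty: one needs to be careful that the error parameter $1/2 - \gamma$ of the randomized protocol translates to advantage $2\gamma$ (since $\Ex{}{p(x)f(x)} = \Pr[p(x) = f(x)] - \Pr[p(x) \ne f(x)] = 1 - 2\Pr[\text{err}] \ge 1 - 2(1/2-\gamma) = 2\gamma$), and that the averaging over public coins does not increase communication cost. Everything else is a one-line substitution into \thmref{thm:corrBound}. So the proof is: unpack the randomized-implies-distributional fact, apply the coin-fixing averaging argument to get a deterministic protocol with advantage $2\gamma$ under the uniform distribution, rewrite this as $\corr{\dCC{2}{c}} \ge 2\gamma \ge \gamma$, and conclude by \thmref{thm:corrBound}.
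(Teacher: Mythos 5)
Your proposal is correct and follows essentially the same route as the paper, which presents the bound as an immediate corollary of \thmref{thm:corrBound} via the fact (stated just before) that a randomized protocol of cost $c$ and bias $\gamma$ yields, for any input distribution and in particular the uniform one, correlation at least $2\gamma$ with some deterministic cost-$c$ protocol. Your explicit coin-fixing averaging step and the error-to-advantage conversion are exactly the bookkeeping the paper leaves implicit, and your observation that one actually gets the sharper bound $2\gamma \le 2^c \cdot R_2(f)^{1/4}$ is accurate.
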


\subsection{(Weak) Pseudorandom Functions}\label{prelims:wprf}

For clarity, we define weak and strong pseudorandom functions. See \secref{section:NoWeakPRFs} for the definition of encoded-input weak PRFs.

\begin{definition}[Weak and Strong PRFs]
    Let $\lambda$ be a security parameter, and $n=n(\lambda),\kappa = \kappa(\lambda)$ for polynomially bounded functions $n, \kappa$. Consider a pair of algorithms ${\sf f}: \{0,1\}^{\kappa} \times \{0,1\}^n \rightarrow \{0,1\}^1, {\sf gen}: \{1\}^\lambda \rightarrow \{0,1\}^{\kappa}$. 
    \begin{itemize}
        \item ${\sf gen}$ is a polynomial time sampling algorithm that given input parameter $\lambda$ in unary and access to random coins $z \in \{0,1\}^{\poly(\lambda)}$ outputs a key $k \in \{0,1\}^{\kappa}$.
        \item ${\sf f}$ is a polynomial time algorithm that given a key $k$ and the input $x \in \{0,1\}^n$, outputs a value ${\sf f}(k,x) = v \in \{0,1\}^1$.
    \end{itemize} 
    For $t=t(\lambda), \epsilon=\epsilon(\lambda)$, we say that $({\sf f}, {\sf gen})$ is a $(t,\epsilon)$-weak PRF if, for every size $t$ oracle circuit $C$,
    \[
    \left|\Pr_{k \sim {\sf gen}(1^\lambda)}\left[C^{\ExO{{\sf f}(k, \cdot)}{U_n}}=1\right]- \Pr_{r}\left[C^{\ExO{r}{U_n}}=1\right]\right| \le \epsilon(\lambda)
    \]
    where $r: \{0,1\}^n \rightarrow \{0,1\}$ is a uniformly random function.  

    Additionally, we say that $({\sf f}, {\sf gen})$ is a $(t,\epsilon)$-PRF if, for every size $t$ oracle circuit $C$,
    \[
    \left|\Pr_{k \sim {\sf gen}(1^\lambda)}\left[C^{{\sf f}(k, \cdot)}=1\right]- \Pr_{r}\left[C^{r}=1\right]\right| \le \epsilon(\lambda)
    \]
    where $r: \{0,1\}^n \rightarrow \{0,1\}$ is a uniformly random function.
\end{definition}

When $t, \epsilon = 2^{\log ^c(n)}$ for some constant $c > 1$, we say that $({\sf f}, {\sf gen})$ has \textit{quasipolynomial} security.
When $t, \epsilon = 2^{\lambda^\delta}$ for some constant $\delta \in (0,1)$, we say that $({\sf f}, {\sf gen})$ has \textit{subexponential} security.

\subsection{Circuit Classes and Other Computational Classes}\label{prelims:circuit-classes}

We will consider various circuit classes with different bases (all being defined previously in the literature). $\AC^0$ is the class of constant depth, polynomial size, unbounded fan-in ${\sf AND/OR/NOT}$ circuits. $\AC^0[p]$ is the class of constant depth, polynomial size, unbounded fan-in ${\sf AND/OR/NOT/MOD}p$ circuits, where $p \in \nat$ is a prime number. $\TC^0$ is the class of constant-depth, polynomial size, unbounded fan-in circuits of $\thr$ gates, where a $\thr$ gate is a linear threshold function $t(x_1, \cdots x_m) := \sum^m_{i=1} w_ix_i \ge^? \theta$, which outputs 1 if and only if the sum of the inputs weighted by real coefficients $w_1, \cdots w_m$ exceeds a threshold $\theta$. When the weights are fixed to be $1$ and $\theta = m/2$, we call it a $\maj$ gate. An ${\sf XOR}$ gate takes the sum modulo 2 of its inputs.

Many circuit classes considered are of the form $\comp{\cC}{\cC'}$ for circuit classes $\cC, \cC'$. The composed class $\comp{\cC}{\cC'}$ denotes the class of $\cC$-circuits with inputs as the outputs of functions from $\cC'$. For example, the class $\comp{\thr}{\thr}$ (a.k.a. depth-2 $\TC^0$). Alternatively, $\comp{\maj}{\thr}$ is the class of circuits consisting of a $\maj$ gate composed with a bottom layer of $\thr$ gates.

\subsection{Lattice Problems}


An $n$-dimensional lattice $L$ is the set of all integer linear combinations of a given real basis $v_1 \cdots v_n \in \real$. Mathematically, 
\[
    L = \left\{\sum\limits_{i \in [n]} c_iv_i : c_1, \cdots c_n \in \ints\right\}
\]
In the unique shortest vector problem, $f(n)$-uSVP, one must find the shortest non-zero vector in the lattice (if there are many, then any is valid), with the promise that the shortest one is shorter, by at least an $f(n)$ factor, than any other non-parallel vector in the lattice. In the Shortest Vector Problem (SVP), $f(n)$-SVP, one must approximate the length of the shortest non-zero vector in the lattice within a factor of $f(n)$. Lastly, in the Shortest Independent Vector Problem ($f(n)$-SIVP), the object is to find a collection of $n$ linearly independent lattice vectors of length at most $f(n) \cdot a$, where $a$ is the minimum length of the longest vector present in some subset of $n$ vectors in the lattice that is linearly independent.

In general, the hardness of uSVP, SVP, and SIVP increases as the parameter $f(n) \le \poly(n)$ decreases. This work uses, as a black box, hardness results for PAC-learning $\majthr$-circuits due to Klivans and Sherstov \cite{klivans2004learning}. Klivans and Sherstov in turn rely in a black box way on public key cryptosystems of Regev \cite{regev2004new} to prove their hardness results. The cryptosystems of Regev rely on polynomial time hardness of uSVP, and quantum polynomial time hardness of SVP and SIVP.

\section{The Distributional PAC-Learning Model}\label{section:distPAC_model}

In this section, we give some useful properties of distPAC-learning, which motivate the definition independently. First, we cover the equivalance between weak and strong versions of distPAC-learning. Then, we show that distPAC-learning implies heurPAC-learning. Finally, we consider the realationship of distPAC-learning with cryptography, by demonstrating that hardness of distPAC-learning for polynomial size circuits in a distribution-specific setting is equivalent to the existence of (infinitely-often) one-way functions. Finally, we show that the famous technique of \cite{kearns1994cryptographic} for proving hardness of Valiant's PAC-learning using public key encryption schemes generates ``hard target distributions'' for distPAC-learning.

\subsection{Equivalence Between Weak and Strong DistPAC-Learning}\label{weak_strong_equiv}

Celebrated results of computational learning theory, indicate that efficient weak and strong PAC-learning are equivalent \cite{DBLP:journals/ml/Schapire90} (in the ``filtering'' setting, this is shown by e.g. \cite{domingo2000madaboost}). 

Recall that $\fC = \{\fC_n\}_{n \in \nat}$ is a Boolean concept class that is $s(n)$-represented and induced by the evaluation rule $\eval = \{\eval_n\}_{n \in \nat}$. We define a weak version of distPAC-learning. The only difference with \defref{def:distPAC} is that the accuracy of the hypothesis only needs to achieve error below $1/2 - \epsilon$ with high probability.

\begin{definition}[Weak distPAC-learning]\label{def:weakDistPAC}
    Let $\eval$ be an evaluation rule, and let the $\eval$-induced concept class $\fC$ be $s(n)$-represented. The pair $(\fC, \mu)$ is weakly distributionally PAC-learnable if there exists an algorithm $A$ such that, for any $n \in \nat, \delta, \eta > 0$,
\begin{equation}
    \Pr_{f \sim \mu}\left[\Pr_{A}\left[ \forall \rho : \Pr_{x \sim \rho}\left[h(x) \not= f(x) : h \leftarrow A^{\ExO{f}{\rho}}(n, \delta, \eta)\right] \le 1/2-1/\poly(n) \right] \ge 1-\delta \right] \ge 1-\eta
\end{equation}
When $A$ runs in time $\poly(n, s(n), \delta^{-1}, \eta^{-1})$, we say that $(\fC, \mu)$ is efficiently distPAC-learnable. 
\end{definition}


We now demonstrate that the boosting results from Valiant's PAC model carry over to the distPAC-learning model.

\begin{theorem}[distPAC Boosting]\label{distPAC_weak_strong_equiv}
Let $\eval$ be an evaluation rule, and let $\fC$ be the $\eval$-induced $s(n)$-represented concept class. For any target distribution $\mu$,
$(\fC, \mu)$ is efficiently weakly distributionally PAC-learnable if and only if $(\fC, \mu)$ is efficiently distributionally PAC-learnable.
\end{theorem}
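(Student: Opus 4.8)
The plan is to mimic the classical boosting argument (Schapire's equivalence, or the ``filtering''/MadaBoost version of Domingo--Watanabe) and check that every step is compatible with the quantifier structure of \defref{def:distPAC}. The nontrivial direction is of course ``weak $\Rightarrow$ strong''; the converse is immediate since a strong distPAC-learner run with $\epsilon = 1/2 - 1/\poly(n)$ (or even $\epsilon$ any small constant) is in particular a weak one. So fix an evaluation rule $\eval$, the $\eval$-induced $s(n)$-represented concept class $\fC$, a target distribution $\mu$, and suppose $A_{\mathrm{weak}}$ efficiently weakly distPAC-learns $(\fC, \mu)$.

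The key observation is that the probability mass over concepts guaranteed by the weak learner can be made common to all rounds of boosting. Concretely: \textbf{First}, I would fix the heuristic parameter $\eta$ of the desired strong learner and run the weak learner with a slightly smaller heuristic parameter, so that the ``good set'' $\mathcal{G} \subseteq \fC_n$ of concepts $f$ on which $A_{\mathrm{weak}}$ achieves the weak guarantee has $\mu(\mathcal{G}) \ge 1 - \eta$. Crucially, this set $\mathcal{G}$ depends only on $A_{\mathrm{weak}}$ and not on the example distribution $\rho$ — this is exactly the strengthening that distPAC-learning has over heurPAC-learning (the $\forall \rho$ sits inside the probability over $f$), and it is what makes boosting ``black-box.'' \textbf{Second}, for any fixed $f \in \mathcal{G}$, I now have a genuine weak PAC-learner (in Valiant's sense) for the single concept $f$ under every example distribution $\rho$, with the caveat that the weak learner only has access to $\ExO{f}{\rho}$ for the \emph{true} $\rho$, not to reweighted distributions. \textbf{Third}, I would invoke a filtering/rejection-sampling boosting algorithm (e.g. MadaBoost \cite{domingo2000madaboost}, or smooth boosting) which is designed precisely to work in the example-oracle model: it simulates draws from the reweighted distribution $\rho_i$ in round $i$ by drawing from $\ExO{f}{\rho}$ and accepting each example with probability proportional to its current weight, then feeds the accepted examples to $A_{\mathrm{weak}}$. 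After $O(\gamma^{-2}\log(1/\epsilon))$ rounds (where $\gamma = 1/\poly(n)$ is the weak advantage) the majority vote of the weak hypotheses has error at most $\epsilon$ under $\rho$. Since every step is polynomial-time and makes polynomially many calls to $A_{\mathrm{weak}}$, the composed algorithm runs in time $\poly(n, s(n), \epsilon^{-1}, \delta^{-1}, \eta^{-1})$, and the $1-\delta$ confidence is obtained by the usual amplification (repeat and test hypotheses on fresh samples).

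The main obstacle — and the only place any real care is needed — is the interaction between the filtering-based reweighting and the \emph{unknown, possibly non-samplable} example distribution $\rho$: smooth/filtering boosters need the reweighting factors to stay bounded (``smoothness'') so that rejection sampling does not blow up the sample complexity, and one must confirm that the booster one cites has this property and that the weak learner's guarantee (accuracy $1/2 - \gamma$ under \emph{every} $\rho$) is strong enough to be invoked under each intermediate reweighted distribution $\rho_i$. This is exactly the point where heurPAC-learning would fail — there the good set of concepts could change with $\rho_i$ — and where the stronger distPAC quantifier saves the day. I expect this to go through cleanly by citing MadaBoost or the smooth boosting framework, with the bookkeeping of the three confidence parameters $(\epsilon, \delta, \eta)$ being routine. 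A secondary minor point is to ensure the final hypothesis is evaluable in the required time bound; since it is a majority over $\poly(n)$ weak hypotheses each output by an efficient algorithm, this is automatic.
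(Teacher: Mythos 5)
Your proposal is correct and follows essentially the same route as the paper, which likewise reduces to the classical weak-to-strong equivalence (Schapire / MadaBoost-style filtering boosting) and hinges on exactly the point you emphasize: the $\forall\rho$ quantifier sits inside the probability over $f\sim\mu$, so the ``good'' set of concepts is common to all boosting rounds and the booster can be applied black-box. The paper's own proof is just a brief invocation of \cite{DBLP:journals/ml/Schapire90, domingo2000madaboost} together with that quantifier-order remark, so your write-up is simply a more detailed rendering of the same argument.
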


\begin{proof}
We invoke the equivalence of weak and strong PAC-learning \cite{DBLP:journals/ml/Schapire90, domingo2000madaboost} to conclude the desired expression. Note that, importantly, our weak distributional PAC learner works for all $\rho$ after taking the probability over $f \sim \mu$ and the randomness of the learning algorithm $A$. If the quantifiers were in another order, then we could not guarantee boosting since there would be no guarantee that the same set of functions of $\fC$, understood as a subset of $\{0,1\}^{s(n)}$, could be learned.
\end{proof}

\subsection{DistPAC-Learning vs HeurPAC-Learning}

\begin{theorem}
Let $\eval$ be an evaluation rule, and let $\fC$ be the $\eval$-induced $s(n)$-represented concept class.
Let $U$ be the uniform distribution over $\{0,1\}^{s(n)}$.
If $(\fC, U)$ is distPAC-learnable in time $t$, then $\fC$ is heurPAC-learnable in time $t$.
\end{theorem}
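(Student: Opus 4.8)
The plan is to observe that, when the target distribution is the uniform distribution $U$ over $\{0,1\}^{s(n)}$, the distPAC guarantee is syntactically a strengthening of the heurPAC guarantee, witnessed by the very same learning algorithm. First I would lay the two definitions side by side. By \defref{def:distPAC} with $\mu = U$, we are given an algorithm $A$ satisfying, for every $n,\epsilon,\delta,\eta$,
\[
    \Pr_{f \sim U}\left[\Pr_{A}\left[ \forall \rho : \Pr_{x \sim \rho}\left[h(x) \not= f(x)\right] \le \epsilon \right] \ge 1-\delta \right] \ge 1-\eta,
\]
where $h \leftarrow A^{\ExO{f}{\rho}}(n,\epsilon,\delta,\eta)$ and $A$ runs in time $t$. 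The heurPAC model of \cite{nanashima2021theory}, phrased over the uniform distribution on length-$s(n)$ representations, asks only that for each example distribution $\rho$ (in that model, the uniform distribution $U_n$ over inputs) a large --- but possibly $\rho$-dependent --- mass of concepts be PAC-learnable over $\rho$, i.e.\ for every $n,\epsilon,\delta,\eta$ and every such $\rho$,
\[
    \Pr_{f \sim U}\left[\Pr_{A}\left[ \Pr_{x \sim \rho}\left[h(x) \not= f(x)\right] \le \epsilon \right] \ge 1-\delta \right] \ge 1-\eta.
\]

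Second I would carry out the reduction. Use the hypothesized distPAC learner $A$, unchanged and with the identical parameter schedule, as the heurPAC learner, feeding it the oracle $\ExO{f}{\rho}$ for whichever example distribution the heurPAC model uses. For that fixed $\rho$, the event ``$\forall \rho' : \Pr_{x\sim\rho'}[h(x)\not= f(x)] \le \epsilon$'' is contained in the event ``$\Pr_{x\sim\rho}[h(x)\not= f(x)] \le \epsilon$'', since dropping the universal quantifier only enlarges the set of good runs of $A$. Monotonicity of probability then propagates the ``$\ge 1-\delta$'' bound through $\Pr_A[\cdot]$, and the outer ``$\Pr_{f\sim U}[\cdot] \ge 1-\eta$'' bound is unchanged because the distribution over concepts is the same $U$ in both models; the running time is likewise unchanged at $t$. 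Hence $\fC$ is heurPAC-learnable in time $t$.

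Third, the only points needing care are the two definitional interfaces with \cite{nanashima2021theory}. (i) The concept distribution: heurPAC is stated relative to the uniform distribution over $s(n)$-bit representations, which is exactly the $U$ in the hypothesis, so this matches with no loss. (ii) The example distribution: if heurPAC permits a different algorithm for each $\rho$, the conclusion is immediate \emph{a fortiori}, since one algorithm already handles all $\rho$; if heurPAC fixes $\rho = U_n$, we simply instantiate the ``$\forall \rho$'' at $U_n$. Either way the argument closes. I expect the main obstacle to be purely expository --- transcribing the precise form of the heurPAC definition so that the quantifier-weakening step is stated faithfully --- rather than technical: distPAC was designed as a strengthening of heurPAC by relocating the ``$\forall \rho$'' quantifier, and this theorem is precisely the formal record of that design choice.
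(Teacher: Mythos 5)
Your proposal is correct and follows essentially the same route as the paper: the paper's proof likewise takes the distPAC learner $A$ unchanged and observes that, since it already PAC-learns a fixed $1-\eta$ mass of concepts over \emph{every} example distribution, it a fortiori satisfies the heurPAC requirement that for each example distribution some $1-\eta$ mass be learnable. Your more explicit quantifier-weakening/monotonicity bookkeeping is just a fuller write-up of the same one-line observation.
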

\begin{proof}
By definition, we can take the distPAC learner $A$ for $(\fC, U)$ and apply it as a heurPAC learner for $\fC$. To see this, observe that $A$ satisfies the heurPAC learning guarantee because it already obtains PAC-learning (i.e., learning with respect to \textit{any} example distribution) for a $1-\eta$ fraction of $\fC$. For heurPAC-learning, we in fact only need that for each example distribution, a $1-\eta$ fraction of $\fC$ is learnable.
\end{proof}

\subsection{Distribution-Specific Learning}

To obtain cryptographic primitives from the hardness of learning, the literature often considers hardness of \textit{distribution-specific} learning (as in \cite{blum1993cryptographic,nanashima2021theory}). In distribution-specific distPAC-learning, we intentionally fix the example distribution $\rho$. This limits the robustness of distPAC-learning, and entirely collapses it into Nanashima's notion of distribution-specific heurPAC-learning.

\begin{definition}[Distribution-specific distPAC-learning]\label{def:DS_distPAC}
    Let $\eval$ be an evaluation rule, and let the $\eval$-induced concept class $\fC$ be $s(n)$-represented. The pair $(\fC, \mu)$ is distributionally PAC-learnable over the example distribution $\rho$ if there exists an algorithm $A$ such that, for any $n \in \nat, \epsilon, \delta, \eta > 0$,
\begin{equation}
    \Pr_{f \sim \mu}\left[\Pr_{A}\left[\Pr_{x \sim \rho}\left[h(x) \not= f(x) : h \leftarrow A^{\ExO{f}{\rho}}(n, \epsilon, \delta, \eta)\right] \le \epsilon \right] \ge 1-\delta \right] \ge 1-\eta
\end{equation}
When $A$ runs in time $\poly(n, s(n), \epsilon^{-1}, \delta^{-1}, \eta^{-1})$, we say that $(\fC, \mu)$ is efficiently distPAC-learnable over $\rho$. 
\end{definition}

\subsubsection{Cryptography from Hardness of Distribution-Specific DistPAC-Learning}

The collapse of distPAC-learning and heurPAC-learning in the distribution-specific setting is a feature of distPAC-learning:
using results of \cite{nanashima2021theory} on one-way functions from hardness of distribution-specific heurPAC-learning, we obtain the same in distribution-specific distPAC-learning.

\begin{theorem}[OWFs from hardness of $\rho$-specific distPAC-learning]
Suppose that ${\sf P/poly}$ is not efficiently distPAC-learnable with respect to the uniform distribution over concept representations, on a polynomial time samplable example distribution $\rho$. Then, there exists an (infinitely-often) one-way function.
\end{theorem}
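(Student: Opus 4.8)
The plan is to follow the standard "hardness of learning implies one-way functions" template of Nanashima, adapted to the distPAC setting, but the key observation is that in the \emph{distribution-specific} setting distPAC-learning and heurPAC-learning coincide (as noted just before the statement), so the theorem reduces almost immediately to the corresponding result of \cite{nanashima2021theory} for heurPAC-learning. Concretely, I would first argue the collapse: fixing the example distribution $\rho$ in \defref{def:DS_distPAC} removes the universal quantifier over $\rho$ that distinguished distPAC from heurPAC, so $\rho$-specific distPAC-learnability of $({\sf P/poly}, U)$ is the \emph{same} statement as $\rho$-specific heurPAC-learnability of ${\sf P/poly}$ (with the target distribution taken to be uniform over concept representations). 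Hence hardness of the former is hardness of the latter, and one can invoke the \cite{nanashima2021theory} construction directly.

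The second step is to recall, or restate, the shape of Nanashima's argument so the reader sees why it applies here. The idea is the contrapositive: assuming no (infinitely-often) one-way function exists, one wants to build an efficient $\rho$-specific distPAC-learner for ${\sf P/poly}$ over $U$. Nanashima's technique is to use the non-existence of OWFs to invert, on average, the map that takes the randomness of ${\sf gen}$-like samplers (here, the randomness used to sample a concept representation $\pi_f \sim U$ and to generate the labelled examples $\langle x, f(x)\rangle$ for $x \sim \rho$) back to a short description consistent with the observed sample; a universal distribution / Occam-style argument then yields a hypothesis with small error. Because this only requires average-case inversion of a polynomial-time computable function, the hypothesis ${\sf f}, {\sf gen}$ framework of \secref{prelims:wprf} supplies exactly the object whose one-wayness is being denied. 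The delicate bookkeeping — that the learner succeeds for a $1-\eta$ fraction of $f \sim U$, with confidence $1-\delta$ over its coins, and error $\le \epsilon$ over $x \sim \rho$, in time $\poly(n, s(n), \epsilon^{-1}, \delta^{-1}, \eta^{-1})$ — is inherited verbatim from \cite{nanashima2021theory}, since the only change is a renaming of the learning model.

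I expect the main obstacle to be purely expository rather than mathematical: making precise that ``${\sf P/poly}$ with the uniform distribution over concept representations'' is a legitimate instance of the $\eval$-induced concept class formalism of \secref{introsec:distPAC}, i.e., fixing an evaluation rule $\eval$ for ${\sf P/poly}$ (the universal circuit evaluator) with $s(n) = \poly(n)$, so that $U = U_{s(n)}$ is a valid target distribution $\mu$, and checking that $\eval$ is polynomial-time computable so that the induced sampler is in the regime where Nanashima's OWF-inversion argument runs. Once that identification is in place, the proof is a one-line reduction: ``by the collapse of Definition~\ref{def:DS_distPAC} to distribution-specific heurPAC-learning, hardness of $\rho$-specific distPAC-learning of $({\sf P/poly}, U)$ is hardness of $\rho$-specific heurPAC-learning of ${\sf P/poly}$, and \cite[Theorem on OWFs from hardness of distribution-specific heurPAC-learning]{nanashima2021theory} yields the claimed (infinitely-often) one-way function.'' I would write the proof at roughly that level of brevity, with a sentence or two spelling out the collapse so the reader need not chase definitions.
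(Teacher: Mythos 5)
Your proposal is correct and follows essentially the same route as the paper: observe that fixing the example distribution $\rho$ collapses distPAC-learning of $({\sf P/poly}, U)$ into Nanashima's distribution-specific heurPAC-learning of ${\sf P/poly}$, and then invoke the corresponding result of \cite{nanashima2021theory} (Corollary 9 there) as a black box. Your additional sketch of Nanashima's inversion argument is fine but unnecessary, since the paper likewise uses that result without reproving it.
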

\begin{proof}
Let $U$ be the uniform distribution over $\{0,1\}^{\poly(n)}$.
Observe that $({\sf P/poly}, U)$ is efficiently distPAC-learnable on $\rho$ if and only if ${\sf P/poly}$ is efficiently heurPAC-learnable on $\rho$. Now, the claim follows immediately from Corollary 9 of \cite{nanashima2021theory}.
\end{proof}

\subsection{Hardness of DistPAC-Learning from Cryptography}\label{sec:hard_distPAC_crypto}

In the other direction, we can use the classic results \cite{goldreich1986construct, haastad1999pseudorandom} to easily see that the existence of a one-way function implies that there exists a polynomial time samplable $\mu$ such that $({\sf P/poly}, \mu)$ is not efficiently distPAC-learnable on any polynomial time samplable $\rho$.

Additionally, we now demonstrate that the technique for proving hardness of PAC-learning in Valiant's model (by Kearns and Valiant \cite{kearns1994cryptographic}), also works for distPAC-learning. The technique gives concrete ``hard target distributions'' for distPAC-learners. Note, it does not imply that distPAC-learning is hard \textit{for every} target distribution.

We sketch the proof technique of Kearns and Valiant, but direct the reader to \cite{kearns1994cryptographic} for the technicalities and also \cite{klivans2009cryptographic} for a more in-depth overview. 
The proof technique is built around the idea that learning the decryption function of a public key cryptosystem should be hard. Indeed, this is true by the following reasoning. For any given public key cryptosystem, an attacker can simulate an example oracle for the decryption function, by sampling a public and private key pair, and then generating encryptions of ones and zeros (with equal probability), using the randomized encryption function. Then, if there is a PAC-learning algorithm (in Valiant's model) for the decryption function, the attacker can apply it to the dataset collected as described, and then predict the messages of new ciphertexts, which breaks security.

We show that it suffices to have a distPAC-learning algorithm, rather than a PAC-learning algorithm in Valiant's model, to break security of a public key cryptosystem in this way.

\begin{theorem}\label{thm:KVtech}
    Suppose there exists a secure public key encryption scheme $({\sf gen, enc, dec})$ that generates $n$-bit encryptions of single bit messages, where decryption error is $\epsilon:= \epsilon(n)\le 1/n^{\omega(1)}$. Let $\mu$ be the distribution over decryption functions ${\sf dec}({\sf sk}, \cdot )$ with ${\sf sk, pk} \sim {\sf gen}$ and hard-wired. If for every ${\sf sk}$, ${\sf dec}({\sf sk}, \cdot ) \in \Lambda$, then $(\Lambda, \mu)$ is not efficiently distPAC-learnable.
\end{theorem}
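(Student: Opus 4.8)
The plan is to run the classical Kearns--Valiant reduction, checking at each step that the weaker distPAC guarantee is still strong enough. Assume for contradiction that some algorithm $A$ efficiently distPAC-learns $(\Lambda, \mu)$, and build from it a polynomial-time adversary $B$ against the semantic security of $({\sf gen},{\sf enc},{\sf dec})$. Fix the example distribution $\rho = \rho_{\sf pk}$ that samples a uniform bit $b$ and outputs the ciphertext ${\sf enc}({\sf pk}, b)$ with fresh encryption coins; given ${\sf pk}$ this is polynomial-time samplable. On input ${\sf pk}$ together with a challenge $c^\ast = {\sf enc}({\sf pk}, b^\ast)$ for a uniform $b^\ast$, the adversary $B$ simulates the example oracle $\ExO{{\sf dec}({\sf sk}, \cdot)}{\rho}$ by repeatedly drawing $b$ and emitting the labeled pair $\langle {\sf enc}({\sf pk}, b), b\rangle$; it then runs $A(n, \epsilon_0, \delta_0, \eta_0)$ on this simulated oracle with small constants $\epsilon_0 = \delta_0 = \eta_0$, obtains a hypothesis $h$, and outputs $h(c^\ast)$ as its guess for $b^\ast$. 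Note that $B$ never needs ${\sf sk}$, and never needs to sample $\mu$.

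First I would argue that the simulation is faithful. A simulated pair $\langle {\sf enc}({\sf pk}, b), b\rangle$ differs from a genuine draw of $\ExO{{\sf dec}({\sf sk}, \cdot)}{\rho}$ exactly when ${\sf dec}({\sf sk}, {\sf enc}({\sf pk}, b)) \neq b$, which happens with probability at most $\epsilon(n)$. Since $A$ runs in polynomial time it requests at most $m = \poly(n)$ examples, so by a union bound the view $A$ sees under $B$'s simulation has statistical distance at most $m\cdot\epsilon(n) \le \poly(n)/n^{\omega(1)}$ from the view under the true oracle; this is where the hypothesis $\epsilon(n) \le 1/n^{\omega(1)}$ is used. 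Consequently, any event on $A$'s output that is likely with the true oracle is almost as likely under $B$'s simulation.

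Next I would invoke the distPAC guarantee with respect to our particular $\rho$. Since ${\sf sk}\sim{\sf gen}$ makes the concept $f = {\sf dec}({\sf sk},\cdot)$ distributed exactly as $\mu$, \defref{def:distPAC} gives: with probability $\ge 1-\eta_0$ over ${\sf sk}$, and conditioned on such ${\sf sk}$, with probability $\ge 1-\delta_0$ over $A$'s coins, the returned $h$ satisfies $\Pr_{x\sim\rho}[h(x)\neq {\sf dec}({\sf sk},x)]\le\epsilon_0$ --- and this holds for \emph{our} $\rho$ precisely because the definition quantifies over all $\rho$ after the probabilities over $f$ and $A$. Combining with decryption correctness, $\Pr_{b^\ast,r}[h({\sf enc}({\sf pk},b^\ast))\neq b^\ast] \le \epsilon_0 + \epsilon(n)$. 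Putting the three ``good'' events together and subtracting the negligible simulation error, $B$ outputs $b^\ast$ with probability at least $(1-\eta_0)(1-\delta_0)(1-\epsilon_0-\epsilon(n)) - \poly(n)/n^{\omega(1)}$, which for, say, $\eta_0=\delta_0=\epsilon_0 = 1/10$ exceeds $1/2 + 1/10$. Since $A$ runs in polynomial time in $n, s(n), \epsilon_0^{-1},\delta_0^{-1},\eta_0^{-1}$ and $s(n)=\poly(n)$ (as ${\sf dec}({\sf sk},\cdot)$ has a $\poly(n)$-size $\Lambda$-representation), $B$ is a polynomial-time adversary with constant advantage, contradicting security.

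The main obstacle is not conceptual --- the reduction is the standard one and uses only a single example distribution $\rho$ --- but lies entirely in verifying that distPAC's relaxations do not break it: namely (i) that the simulated labeled examples stay statistically close to genuine oracle outputs despite nonzero decryption error, (ii) that the $1-\eta$-fraction concession over $f\sim\mu$ only costs a constant in $B$'s success probability, and (iii) that the constants $\epsilon_0,\delta_0,\eta_0$ are chosen so the three independent ``good'' events jointly leave $B$'s advantage bounded away from zero. The content of the theorem is exactly that these checks go through, so that the Kearns--Valiant-style ``hard target distributions'' remain hard in the distributional model.
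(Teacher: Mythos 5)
Your proposal is correct and follows essentially the same argument as the paper: simulate the example oracle with encryptions of uniform bits labeled by the bit itself, absorb the negligible decryption error by a union bound over the learner's polynomially many examples, invoke the distPAC guarantee (using that the universal quantification over $\rho$ covers the ciphertext distribution), and output the hypothesis's prediction on the challenge ciphertext to break security. The only cosmetic difference is that you phrase the adversary's success as bit-prediction advantage while the paper writes it as a distinguishing advantage between encryptions of $0$ and $1$; these are equivalent up to the standard factor-of-two translation.
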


\begin{proof}
    Suppose towards a contradiction that $(\Lambda, \mu)$ is efficiently distPAC-learnable using an algorithm $A$. We will show that there is a probabilistic polynomial time algorithm $D$ that breaks the security of $({\sf gen, enc, dec})$, by proving the following distinguishing equation:
    \begin{equation}\label{eq:dist}
    {\sf adv}(D) := \left|\Pr[D({\sf pk}, {\sf enc}({\sf pk}, 1)) =1 ] - \Pr[D({\sf pk}, {\sf enc}({\sf pk}, 0)) =1] \right| \ge 1/\poly(n)
    \end{equation}
    Here, the probabilities are taken over the internal randomness of $D$ and the encryption function, as well as ${\sf sk, pk} \sim {\sf gen}$.
    $D({\sf pk}, z)$ works as follows. 
    \begin{enumerate}
        \item Prepare a simulated oracle $\ExO{{\sf dec}({\sf sk}, \cdot)}{\rho}$, where each example is created by first choosing $y \in \{0,1\}$ uniformly randomly, and then encrypting $x = {\sf enc}({\sf pk}, y)$, and taking $\langle x, y\rangle$ as the example. $\rho$ denotes the marginal distribution over examples $x$.
        \item Execute $h \gets A^{\ExO{{\sf dec}({\sf sk}, \cdot)}{\rho}}(n, \frac{1}{10}, \frac{1}{10}, \frac{1}{10})$.
        \item Output $h(z)$.
    \end{enumerate}
    We prove \eqref{eq:dist} holds. Since $A$ is an efficient distPAC-learner, it only requires at most $\poly(n)$ examples. Therefore, since decryption error is $\epsilon:= 1/n^{\omega(1)}$, a union bound indicates that with probability $1- 1/n^{\omega(1)}$, no decryptions occur. Assume this is the case. By the assumption that $A$ is an efficient distPAC-learning algorithm for $(\Lambda, \mu)$, we have that, over the randomness of $h \gets A^{\ExO{{\sf dec}({\sf sk}, \cdot)}{\rho}}(n, \frac{1}{10}, \frac{1}{10}, \frac{1}{10})$, and the randomness of ${\sf dec}({\sf sk}, \cdot) \sim \mu$ and $x \sim \rho$,
    \begin{equation}
        \Pr\left[h(x) = {\sf dec}({\sf sk}, x)\right] \ge \frac{1}{2} + \frac{1}{\poly(n)}
    \end{equation}
Therefore, 
\begin{align*}
    {\sf adv}(D) &= \left|\Pr_{D,{\sf gen}, {\sf enc}}[h({\sf enc}({\sf pk},1)) =1] - \Pr_{D,{\sf gen}, {\sf enc}}[h({\sf enc}({\sf pk},0)) = 1\right|
    = 2  \Pr_{\substack{D,{\sf gen}, {\sf enc},\\ y \sim \{0,1\}}}[h({\sf enc}({\sf pk},y)) = y] - 1\\
    &= 2  \left(\Pr_{\substack{D,{\sf gen}, {\sf enc},\\ y \sim \{0,1\}}}[h({\sf enc}({\sf pk},y)) = {\sf dec}({\sf sk},({\sf enc}({\sf pk},y)))] - \Pr_{\substack{D,{\sf gen}, {\sf enc},\\ y \sim \{0,1\}}}[{\sf dec}({\sf sk},({\sf enc}({\sf pk},y))) \not= y]\right) - 1\\
    &=2  \left(\Pr_{\substack{D,{\sf gen}, {\sf enc}, \\x \sim \rho}}[h(x) = {\sf dec}({\sf sk},x)] - \epsilon\right)- 1\\
    &\ge \frac{1}{\poly(n)}
\end{align*}

Since we assumed that no decryption error occurred, we need to consider the case that it does. We already know that probability of any decryption error is $1/n^{\omega(1)}$. Hence, by another union bound, we get that ${\sf adv}(D) \ge 1/\poly(n) - 1/n^{\omega(1)} \ge 1/\poly(n)$.
\end{proof}


\section{No General Implication from Natural Proofs to PAC-Learning}\label{section:NisanNatProp}

In this section, we show that for circuit classes that can compute $\majthr$-circuits, there can be no general implication from natural proofs to PAC-learning algorithms in Valiant's model, unless $\tilde{O}(n^{1.5})$-$\mathrm{uSVP}$ has a polynomial time solution, and $\tilde{O}(n^{1.5})$-$\mathrm{SVP}$ and $\tilde{O}(n^{1.5})$-$\mathrm{SIVP}$ have polynomial time quantum solutions. We sketch an analogous fact for natural proofs for DNFs, and a natural assumption on the polynomial time hardness of refuting random $k$-SAT instances.
Additionally, we will explain that there exists a special target distribution such that there cannot even be an implication from natural proofs to distPAC-learning algorithms for that distribution, under the same lattice assumptions. 

\subsection{Nisan's Natural Proof}

We will formally define the natural property that arises from Nisan's natural proofs.
We begin by defining what is a natural property. Let $F_n$ be the set of all Boolean functions on $n$ inputs. Typically, the Boolean functions in the context of circuit complexity are $f: \{0,1\}^n \rightarrow \{0,1\}$. We will continue to use $f: \{0,1\}^n \rightarrow \{-1,1\}$ to maintain continuity with the previous sections.

\begin{definition}[Natural Property \cite{razborov1997natural}]
    A Natural Property is a sequence of subsets $Q = \{Q_n\}_{n \in \nat}$ of $F = \{F_n\}_{n \in \nat}$, if it satisfies the following conditions.
    \begin{enumerate}
        \item \textbf{Constructivity.} The predicate ``is $f$ contained in $Q_n$'' can be computed in polynomial time.
        \item \textbf{Largeness.} $|Q_n| \ge \delta_n \cdot |F_n|$.
        \item \textbf{Usefulness.} For any sequence of functions $f_n \in F_n$ (for $n \in \nat$), if $f_n \in \Lambda$ then $f_n \notin Q_n$ almost everywhere.
    \end{enumerate}
    When $Q_n$ satisfies these conditions we say that it is a natural property for $\Lambda$ with density $\delta_n$. In general, we fix $\delta_n = 1/2$.\footnote{See Lemma 2.7 of \cite{carmosino2016learning} for an explanation for why this is reasonable.}
\end{definition}

Now we will define the property that arises from Nisan's lower bounds for $\majthr$-circuits \cite{nisan1993communication}, and then demonstrate that it is natural. To the best of our knowledge, the natural property has never been explicitly formalized prior to this paper, but the fact that it is natural is essentially credited to the sequence of works \cite{nisan1993communication, chung1993communication, raz2000bns, viola2007norms}; \cite{nisan1993communication} provided the lower bound, and \cite{chung1993communication, raz2000bns, viola2007norms} all implicitly proved it was natural.
The property is defined by the following algorithm that classifies whether or not a truth table is in the property.

\begin{definition}[Nisan's Natural Property]\label{def:natprop}
\color{white}.
\color{black}
\begin{enumerate}
    \item \textbf{Input.} $T_{f_n} \in \{0,1\}^{2^n}$ a truth table of a function $f_n \in F_n$.
    \item Choose any partition of the $n$ inputs of $f_n$ into two sets $A,B$ with $|A| = \lceil n/2 \rceil$ and $|B| = \lfloor n/2 \rfloor$. Let $x_1||x_2$ denote the partitioned input according to $A,B$.
    \item Given $T_{f_n}$, compute \[\alpha = 2^n \cdot R_2(f_n) = \sum\limits_{\substack{x^0_1,x^1_1 \in \{0,1\}^{\lceil n/2 \rceil}\\ x^0_2,x^1_2 \in \{0,1\}^{\lfloor n/2 \rfloor}}}{\prod_{b_1,b_2 \in \{0,1\}}f_n(x^{b_1}_1|| x^{b_2}_2)}\]
    \item \textbf{Output.} If $|\alpha| \le 2^{2n/3}$, print $1$, otherwise print $0$. 
\end{enumerate}
\end{definition}

Let $\comp{\maj_s}{\thr}$ denote the class of majority-of-thresholds circuits where the top majority gate has fan-in $s$.

\begin{theorem}[\cite{nisan1993communication, chung1993communication, raz2000bns, viola2007norms}]\label{MajThrNatProp}
There is a constant $c$ such that there is a natural property for $\comp{\maj_{2^{n/c}}}{\thr}$ with density 1/2.
\end{theorem}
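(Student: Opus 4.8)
The plan is to verify, for the predicate of Definition~\ref{def:natprop} (output $1$ iff $|\alpha|\le 2^{2n/3}$, where $\alpha = 2^n R_2(f_n)$ is computed with respect to a fixed balanced partition of the $n$ inputs), the three defining conditions of a natural property. Constructivity is immediate: given the truth table $T_{f_n}\in\{0,1\}^{2^n}$, the quantity $\alpha$ is a sum of at most $2^{2n}$ products of four table entries and is therefore computable in time $\mathrm{poly}(2^n)$, i.e. polynomial in $|T_{f_n}|$. The content is in largeness and usefulness.

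For largeness I would use the identity $R_2(f) = \Ex{x_2^0,x_2^1}{\big(\Ex{x_1}{f(x_1,x_2^0)f(x_1,x_2^1)}\big)^2}\ge 0$, valid for every $f$, so it suffices to bound $\Pr_\psi[2^n R_2(\psi) > 2^{2n/3}]$ above by $\tfrac12$ for a uniformly random $\psi:\{0,1\}^n\to\{-1,1\}$. Splitting the outer expectation on whether $x_2^0 = x_2^1$: when they coincide the inner square equals $1$; when they differ, the $2^{\lceil n/2\rceil}$ products $\psi(x_1,x_2^0)\psi(x_1,x_2^1)$ are i.i.d. uniform on $\{-1,1\}$, so the inner square has expectation $2^{-\lceil n/2\rceil}$. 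Hence $\Ex{\psi}{R_2(\psi)} = O(2^{-n/2})$, and Markov's inequality gives $\Pr_\psi[2^n R_2(\psi) > 2^{2n/3}] = O(2^{-n/6}) < \tfrac12$ for all large $n$ (for the finitely many small $n$ the predicate is set to accept everything, which is harmless since usefulness is only required almost everywhere). Thus at least half of all $n$-bit Boolean functions satisfy the property.

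For usefulness I would combine Theorem~\ref{thm:randBound} with Nisan's communication protocol for majority-of-thresholds. Given $f_n\in\comp{\maj_s}{\thr}$ with $s = 2^{n/c}$, first pad the top gate so that (with $s'=\Theta(s)$ odd) the fraction of bottom gates evaluating to $1$ on any input is either $\ge \tfrac12 + \tfrac1{2s'}$ or $\le \tfrac12 - \tfrac1{2s'}$; then the protocol ``publicly sample a bottom gate and evaluate it'' computes $f_n$ with bias $\Theta(1/s)$ using only $O(\log n)$ bits of communication, since a single threshold gate split across the two players is a greater-than--type comparison of the two partial weighted sums (a logarithmic-cost comparison when weights are polynomially bounded; for arbitrary real weights one bounds the number of realizable partial sums by $2^{\lceil n/2\rceil}$, which is Nisan's lemma). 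Feeding this into Theorem~\ref{thm:randBound} yields $R_2(f_n) \ge \big(\Theta(1/s)/n^{O(1)}\big)^4 = \Omega\!\big(2^{-4n/c}\,n^{-O(1)}\big)$, hence $|\alpha| = 2^n R_2(f_n) \ge 2^{(1-4/c)n - O(\log n)}$. Choosing any constant $c>12$ (e.g. $c=13$) makes the exponent exceed $2n/3$ for all large $n$, so $|\alpha| > 2^{2n/3}$ eventually and $f_n\notin Q_n$ almost everywhere; this fixes the constant in the theorem.

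The main obstacle is the usefulness direction, and within it the single-gate communication bound: one must certify that membership in $\comp{\maj_s}{\thr}$ gives a randomized two-party protocol of cost $O(\log n)$ whose error is small compared to $1/s$, which for arbitrary real weights needs Nisan's argument (reducing a threshold gate to a logarithmic-cost comparison by counting realizable partial sums), together with the padding step that converts ``majority'' into a statement carrying a quantitative $1/s$ gap around $\tfrac12$. Once that protocol lemma is in hand, largeness, constructivity, and the final parameter balancing ($c>12$) are routine.
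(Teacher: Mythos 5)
Your route is the paper's own: constructivity from the truth-table size, largeness by bounding $\mathbb{E}_{\psi}[R_2(\psi)]$ for a random function and applying Markov (your use of the nonnegativity identity $R_2(f)=\mathbb{E}_{x_2^0,x_2^1}\bigl[\bigl(\mathbb{E}_{x_1}[f(x_1,x_2^0)f(x_1,x_2^1)]\bigr)^2\bigr]$ makes this step cleaner than the paper's sketch, but it is the same calculation), and usefulness by feeding Nisan's randomized protocol for $\comp{\maj_s}{\thr}$ into the correlation bound (\thmref{thm:randBound}), exactly as the paper does via \thmref{nisanThm}.

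The one step that does not hold as stated is the cost accounting in the usefulness direction. With $s=2^{n/c}$, your ``sample a bottom gate and evaluate it'' protocol needs the single threshold gate to be evaluated with error at most roughly $1/(4s')=2^{-\Theta(n/c)}$, and a greater-than comparison of $\Theta(n)$-bit ranks with error $\epsilon$ costs $\Theta(\log n+\log(1/\epsilon))$ bits, not $O(\log n)$; no $O(\log n)$-bit protocol with error $2^{-\Omega(n)}$ is available, which is precisely why \thmref{nisanThm} is stated with cost $O(\log s)$ (here $\Theta(n/c)$) rather than $O(\log n)$ --- the $O(\log n)$ figure belongs to the polynomial-size regime $s=\poly(n)$ used in the learning applications. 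With the correct cost, the correlation bound gives $R_2(f_n)\ge\bigl(\Omega(1/s)\cdot 2^{-O(n/c)}\bigr)^4=2^{-O(n/c)}$ and hence $\alpha\ge 2^{n-O(n/c)}$, so the conclusion survives, but your explicit constant ``$c>12$, e.g.\ $c=13$'' is not justified: the admissible $c$ depends on the hidden constants in the protocol cost and must simply be taken sufficiently large, which is all the theorem asserts and is exactly how the paper phrases it ($\alpha\ge 2^n/\poly(2^{n/c})\ge 2^{2n/3}$ for a sufficiently large constant $c$).
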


To prove this theorem, one uses the result of Nisan \cite{nisan1993communication}.
The notation $\RANDCC{2}{c}{\gamma}$ denotes the set of functions $f: \{0,1\}^n \rightarrow \{-1,1\}$ that have 2-party randomized communication complexity, with cost at most $c$ and bias at least $\gamma$, for every partition of the inputs to the function. See \secref{prelims} for formal definitions of communication complexity classes and protocols in this work.

\begin{theorem}[\cite{nisan1993communication}]\label{nisanThm}
$\comp{\maj_s}{\thr} \subseteq \RANDCC{2}{O(\log(s))}{O(1/s)}$
\end{theorem}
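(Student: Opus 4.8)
The plan is to construct, for \emph{every} partition of the $n$ input coordinates into sets $A$ and $B$, an explicit public-coin randomized $2$-party protocol that computes a given $\comp{\maj_s}{\thr}$-circuit with $O(\log n)$ bits of communication and bias of order $1/s$. Since in every regime of interest $s \ge n$ (indeed $s = \poly(n)$ or $s = 2^{\Omega(n)}$ in the applications), this is the claimed $O(\log s)$ cost and $\Omega(1/s)$ bias. Write the circuit as $\maj(T_1,\dots,T_s)$ with each $T_j$ a linear threshold gate. Using the standard weight-reduction fact --- every linear threshold function on $n$ Boolean variables is computed by an integer-weight threshold $\mathrm{sign}\big(\sum_i w_i x_i - \theta\big)$ with $|w_i|,|\theta| \le 2^{O(n\log n)}$ --- I may assume each $T_j(x) = \mathrm{sign}\big(\sum_i w_{j,i}x_i - \theta_j\big)$ with $N := O(n\log n)$-bit integer coefficients; and by adding at most one dummy gate I may assume $s$ is odd, so the top majority never ties.

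The protocol itself is the ``random gate'' reduction. Using public randomness the players jointly sample $j \in [s]$ uniformly (no communication), so both now know $T_j$. Alice locally forms $a := \sum_{i \in A} w_{j,i} x_i$ and Bob forms $b := \sum_{i \in B} w_{j,i} x_i$, and the two must decide whether $a + b \ge \theta_j$ --- a $\textsc{GreaterThan}$ instance on $N$-bit integers. I then invoke the classical $O(\log N)$-bit public-coin randomized protocol for $\textsc{GreaterThan}$ with a fixed constant error $\epsilon_0 < 1/2$, obtaining a bit $b^\star$ equal to $T_j(x)$ except with probability $\epsilon_0$; the players output $b^\star$. Total communication is $O(\log N) = O(\log n)$.

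For the bias, fix an input $x$ and set $D := \sum_{j=1}^s T_j(x)$, a nonzero integer in $\{-s,\dots,s\}$, so the circuit outputs $\mathrm{sign}(D)$. Over the choice of $j$ alone, $\Pr[T_j(x) = \mathrm{sign}(D)] = \tfrac12 + \tfrac{|D|}{2s}$; combining with the independent $\textsc{GreaterThan}$ error gives $\Pr[b^\star = \mathrm{sign}(D)] = (1-\epsilon_0)\big(\tfrac12 + \tfrac{|D|}{2s}\big) + \epsilon_0\big(\tfrac12 - \tfrac{|D|}{2s}\big) = \tfrac12 + \tfrac{(1-2\epsilon_0)|D|}{2s} \ge \tfrac12 + \tfrac{1-2\epsilon_0}{2s}$. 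Since this holds for every $x$ and every partition, the protocol witnesses membership in $\RANDCC{2}{O(\log n)}{\Omega(1/s)}$. It is crucial here that a mere \emph{constant} per-gate error suffices: we evaluate only one threshold gate, so there is no union bound over the $s$ gates, which is exactly why the cost does not pick up an extra $\log s$ factor from error amplification.

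The hard part is really packaged inside the single-gate subroutine, and has two ingredients I would cite rather than reprove: (i) the weight bound, which reduces an arbitrary linear threshold gate on Boolean inputs to a $\textsc{GreaterThan}$ instance on $\poly(n)$-bit integers, and (ii) the logarithmic-cost randomized communication protocol for $\textsc{GreaterThan}$. Granting these, the rest --- sampling a random gate and the elementary bias computation above --- is routine. I would also explicitly note in the write-up that the $O(\log s)$ in the statement should be read as $O(\log n + \log s)$, which collapses to $O(\log s)$ in every application made of the theorem.
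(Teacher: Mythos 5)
Your protocol is essentially the same ``random gate'' argument that the paper imports from Nisan (his Theorem 1a combined with his Lemma 5), but your bias analysis has a genuine gap, and it sits exactly at the step you single out as crucial. The constant-error \textsc{GreaterThan} subroutine only guarantees that for every fixed gate $j$ and input $x$ the error probability is \emph{at most} $\epsilon_0$; it is neither exactly $\epsilon_0$ nor independent of $(j,x)$, so you may not write $\Pr[b^\star=\mathrm{sign}(D)]=(1-\epsilon_0)\bigl(\tfrac12+\tfrac{|D|}{2s}\bigr)+\epsilon_0\bigl(\tfrac12-\tfrac{|D|}{2s}\bigr)$: that identity silently credits you with ``lucky'' errors on the gates that disagree with the majority. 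In the worst case the subroutine errs with probability $\epsilon_0$ precisely on the gates agreeing with $\mathrm{sign}(D)$ and with probability $0$ on the disagreeing ones, which gives only
\[
\Pr\bigl[b^\star=\mathrm{sign}(D)\bigr] \;\ge\; \Bigl(\tfrac12+\tfrac{|D|}{2s}\Bigr)(1-\epsilon_0),
\]
and for constant $\epsilon_0$ and $|D|=O(1)$ this is strictly below $1/2$ for large $s$. So with constant per-gate error the output can be slightly \emph{anti}-correlated with the circuit, and no $\Omega(1/s)$ bias follows; your remark that ``a mere constant per-gate error suffices'' is exactly the claim that fails.

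The repair is to run the single-gate subroutine with error at most $c/s$ for a small constant (say $1/(4s)$), so that the error loss is dominated by the margin: $\Pr[\text{correct}]\ge \tfrac12+\tfrac{1}{2s}-\tfrac{1}{4s}$. Computing \textsc{GreaterThan} on $\poly(n)$-bit integers with error $O(1/s)$ costs $O(\log n+\log s)=O(\log s)$ bits, which is exactly within the theorem's budget --- there was never a need to dodge a $\log s$ factor, since the statement already allows $O(\log s)$ communication. With that change your argument coincides with the proof the paper sketches: a single $\thr$ gate lies in $\RANDCC{2}{O(\log s)}{1/2-O(1/s)}$ (Nisan's Theorem 1a), and the majority of $s$ such functions lies in $\RANDCC{2}{O(\log s)}{O(1/s)}$ by random gate selection (Nisan's Lemma 5). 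Your auxiliary remarks (integer weight reduction, padding $s$ to be odd, and reading $O(\log s)$ as $O(\log n+\log s)$) are fine.
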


\begin{proof}[Sketch.]
    The statement follows from combining a few arguments in \cite{nisan1993communication}. First, Nisan proves that a single $\thr$ gate is contained in $\RANDCC{2}{O(\log(s))}{1/2- O(1/s)}$ (Theorem 1a.). 
    Second, Nisan shows that the majority vote of any $s$ functions contained in $\RANDCC{2}{O(\log(s))}{1/2- O(1/s)}$ is contained in $\RANDCC{2}{O(\log(s))}{O(1/s)}$ (Lemma 5). 
    Therefore we get that $\comp{\maj_s}{\thr} \in \RANDCC{2}{O(\log(s))}{O(1/s)}$.
\end{proof}

Now we prove \thmref{MajThrNatProp}.

\begin{proof}[Proof of \thmref{MajThrNatProp}]
    We prove each of the three necessary properties individually.
    First, observe that \defref{def:natprop} defines a polynomial time algorithm, in the size of the truth table (this proves the Constructivity property).
    
    For the Largeness property, we will analyze the probability that the algorithm outputs 1 given a random truth table as input. 

    Consider $\alpha/2^{-n} = R_2(f_n)$. We have that 
    \[
        \alpha/2^{-n} = \Ex{x^0_1,x^1_1,x^0_2,x^1_2}{{\prod_{b_1,b_2 \in \{0,1\}}f_n(x^{b_1}_1, x^{b_2}_2)}}
    \]
    This expected value over the uniformly random choice of $x^0_1,x^1_1,x^0_2,x^1_2$ is 0 except for when either $x^0_1 = x^0_2$ or $x^1_1 = x^1_2$. By a union bound, we have then that $\alpha/2^{-n} \le 2^{1-(n+1)/2}$. Therefore, $\alpha \le 2^{n/2+1}$, and the probability over a random truth table $T_{f_n}$ that $\alpha \le 2^{2n/3}$ is at least 1/2. This proves density is at least 1/2.

    Finally, we prove Usefulness. By \thmref{thm:corrBound} and \thmref{nisanThm}, we know that when $T_{f_n}$ is the truth table of $f_n \in \comp{\maj_s}{\thr}$, then for $s \le 2^{n/c}$ and some sufficiently large constant $c$, we have that $\alpha \ge 2^n/\poly(2^{n/c}) \ge 2^{2n/3}+1$. Thus, whenever $f_n \in \comp{\maj_{2^{n/c}}}{\thr}$, the algorithm defining the natural property outputs 0, as desired.

\end{proof}

\subsection{Proof of \thmref{intro:natBarrier}}

In the last section, we demonstrated that $\comp{\maj_s}{\thr}$ has a dense natural property for $s$ up to $2^{n/c}$ for some constant $c$. Now we will use this fact, together with hardness of learning $\comp{\maj}{\thr}$ in Valiant's PAC model \cite{klivans2009cryptographic} to prove \thmref{intro:natBarrier}.

\begin{theorem}[Theorem 1.3 in \cite{klivans2009cryptographic}]\label{thm:KS}
    Assume that $\comp{\maj}{\thr}$ is PAC-learnable in time $\poly(n)$. Then there is a polynomial time solution to $\tilde{O}(n^{1.5})$-$\mathrm{uSVP}$, and polynomial time quantum solutions to $\tilde{O}(n^{1.5})$-$\mathrm{SVP}$ and $\tilde{O}(n^{1.5})$-$\mathrm{SIVP}$.
\end{theorem}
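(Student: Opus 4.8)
I would argue in the style of Kearns and Valiant \cite{kearns1994cryptographic}: exhibit a semantically secure public-key encryption scheme whose decryption function, with the secret key hard-wired, is computed by a polynomial-size $\comp{\maj}{\thr}$-circuit, so that a $\poly(n)$-time PAC-learner for $\comp{\maj}{\thr}$ could be converted into a $\poly(n)$-time algorithm that distinguishes encryptions of $0$ from encryptions of $1$, breaking the scheme. For the scheme I would use Regev's lattice-based cryptosystems \cite{regev2004new}, which are semantically secure assuming classical hardness of $\tilde{O}(n^{1.5})$-$\mathrm{uSVP}$ and quantum hardness of $\tilde{O}(n^{1.5})$-$\mathrm{SVP}$ and $\tilde{O}(n^{1.5})$-$\mathrm{SIVP}$. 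Two facts must be established: (i) decryption lies in poly-size $\comp{\maj}{\thr}$, and (ii) the scheme can be instantiated with decryption error $n^{-\omega(1)}$ while preserving those approximation factors, after which the reduction is routine.

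\textbf{Step (i): circuit complexity of decryption.} The key point is that decryption reduces to evaluating a single affine form in the ciphertext and then testing membership in an interval modulo the modulus $q = \poly(n)$. Concretely, on ciphertext $\mathbf{c} = (\mathbf{a}, b) \in \mathbb{Z}_q^n \times \mathbb{Z}_q$ one computes $v = v(\mathbf{c}) = b - \langle \mathbf{a}, \mathbf{s}\rangle$ (with the secret key $\mathbf{s}$ hard-wired) and outputs $1$ iff $v \bmod q$ lies in $I = (q/4, 3q/4)$. Writing each coordinate of $\mathbf{c}$ in binary, $v$ is a linear form in the $\tilde{O}(n)$ ciphertext bits with integer coefficients of magnitude at most $q^2 = \poly(n)$; and since $v$ takes values in an interval of length $O(n q^2) = \poly(n)$, the condition ``$v \bmod q \in I$'' is equivalent to $v$ belonging to a disjoint union of $N = O(n q) = \poly(n)$ integer intervals $(l_1, u_1), \dots, (l_N, u_N)$. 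By a telescoping identity, membership in this union equals $\mathbf{1}\!\left[\, \sum_{k} \mathbf{1}[v > l_k] - \sum_k \mathbf{1}[v \ge u_k] \ge 1 \,\right]$, i.e.\ a threshold gate of weights $\pm 1$ applied to $2N$ linear-threshold gates $\{\mathbf{1}[v > l_k]\}_k \cup \{\mathbf{1}[v \ge u_k]\}_k$ in the ciphertext bits; replacing $\mathbf{1}[v \ge u_k]$ by its (still linear-threshold) negation $\mathbf{1}[v \le u_k - 1]$ and padding with $O(1)$ constants turns the top gate into a $\maj$ gate. As all weights are $\poly(n)$-bounded, this is a poly-size $\comp{\maj}{\thr}$-circuit, and the identical argument applies to the decryption function $w \mapsto \mathbf{1}[\, w d \bmod N \in \text{interval}\,]$ of Regev's $\mathrm{uSVP}$-based scheme.

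\textbf{Step (ii) and the reduction.} I would choose Regev's parameters with $m = \Theta(n \log q)$ samples in the public key and noise rate $\alpha = 1/(\sqrt{m}\cdot\omega(\sqrt{\log n}))$, so that the accumulated ciphertext noise exceeds $q/4$ only with probability $n^{-\omega(1)}$ --- giving decryption error $n^{-\omega(1)}$ --- while Regev's worst-case-to-average-case reduction \cite{regev2004new} yields hardness for $\tilde{O}(n/\alpha) = \tilde{O}(n^{1.5})$-approximate lattice problems (the $\tilde{O}$ absorbing the $\mathrm{polylog}$ factors, with $q = \tilde{O}(n)$ forced by $\alpha q > 2\sqrt{n}$). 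The distinguisher then samples $({\sf pk}, {\sf sk})$, feeds the learner labelled examples $\langle \mathbf{c}, \beta \rangle$ with $\beta$ uniform in $\{0,1\}$ and $\mathbf{c} = {\sf enc}({\sf pk}, \beta)$ --- which simulates an example oracle for ${\sf dec}({\sf sk}, \cdot)$ under the ciphertext distribution, up to the $n^{-\omega(1)}$ decryption error that a union bound over the $\poly(n)$ examples renders negligible --- obtains a hypothesis $h$ with error $\le 1/4$ under that distribution, and outputs $h$ evaluated on a challenge ciphertext. A short computation (identical in spirit to the proof of \thmref{thm:KVtech}) gives distinguishing advantage at least $1/4 - n^{-\omega(1)} \ge 1/\poly(n)$, contradicting semantic security and hence yielding the claimed (quantum) lattice algorithms.

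\textbf{Main obstacle.} The crux is Step (i): routing the modular reduction through only two layers of threshold gates while keeping all weights polynomially bounded. The telescoping-over-intervals trick is what makes this possible, and it relies essentially on the secret key being hard-wired, so that $v$ is a single linear form rather than a general bilinear form in (ciphertext, key) --- the latter being exactly the kind of function (e.g.\ inner product) that provably requires exponential-size $\comp{\maj}{\thr}$-circuits, the lower bound underlying \thmref{MajThrNatProp}. A secondary point is the simultaneous balancing of the LWE parameters in Step (ii): the decryption error must be $n^{-\omega(1)}$ \emph{and} the approximation factor must remain $\tilde{O}(n^{1.5})$, which is feasible precisely because the former is governed by $\sqrt{m}\,\alpha$ while the latter is governed by $n/\alpha$, and the modulus $q$ --- which controls the interval count in Step (i) --- need only be polynomial.
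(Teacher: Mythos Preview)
The paper does not prove this statement; it is quoted verbatim as Theorem~1.3 of \cite{klivans2009cryptographic} and used as a black box in the proof of \thmref{thm:natBarrier}. Your proposal correctly reconstructs the Klivans--Sherstov argument --- the Kearns--Valiant paradigm (which the paper records separately as \thmref{thm:KVtech}) applied to Regev's lattice cryptosystems, with the key observation that decryption with the secret key hard-wired computes a modular interval test on a single affine form in the ciphertext bits and hence sits in polynomial-size $\comp{\maj}{\thr}$ --- and your parameter balancing in Step~(ii) is sound. There is nothing further to compare against in this paper; your telescoping-over-intervals realization of the top gate differs cosmetically from Klivans--Sherstov's presentation (they phrase decryption as an intersection of halfspaces, i.e.\ $\comp{\AND}{\thr}$), but the content is the same.
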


\begin{theorem}[\thmref{intro:natBarrier} restated]\label{thm:natBarrier}
    Let $u: \nat \rightarrow \nat$. Suppose that a natural proof against $\comp{\maj_{u(n)}}{\thr}$-circuits (and density 1/2) implies that $\poly(n)$ size $\comp{\maj}{\thr}$-circuits are PAC-learnable in Valiant's model, in time $\exp(u^{-1}(\poly(n)))$. Then, then there is a polynomial time solution to $\tilde{O}(n^{1.5})$-$\mathrm{uSVP}$, and polynomial time quantum solutions to $\tilde{O}(n^{1.5})$-$\mathrm{SVP}$ and $\tilde{O}(n^{1.5})$-$\mathrm{SIVP}$.
\end{theorem}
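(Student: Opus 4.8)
The plan is to combine \thmref{MajThrNatProp} (the existence of a dense natural property for $\comp{\maj_{2^{n/c}}}{\thr}$) with \thmref{thm:KS} (lattice-based hardness of PAC-learning $\comp{\maj}{\thr}$ in Valiant's model). The structure is a simple chain: the hypothesis of the theorem converts a \emph{given} natural proof into a \emph{specific} PAC-learning algorithm, and then \thmref{thm:KS} converts that learning algorithm into the claimed lattice algorithms. So the only real content is to check that a natural proof with the right parameters is actually available to feed into the hypothesis, and that the resulting learning time is polynomial.

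First I would set $u(n) = 2^{n/c}$, where $c$ is the constant supplied by \thmref{MajThrNatProp}. By that theorem there genuinely \emph{exists} a natural property for $\comp{\maj_{2^{n/c}}}{\thr}$ with density $1/2$, and (crucially) the algorithm of \defref{def:natprop} defining it runs in time polynomial in $2^n$, i.e., polynomial in the truth-table length, so it is a bona fide natural proof against $\comp{\maj_{u(n)}}{\thr}$-circuits of density $1/2$. Hence the hypothesis of the theorem applies with this $u$, and it follows that $\poly(n)$-size $\comp{\maj}{\thr}$-circuits are PAC-learnable in Valiant's model in time $\exp(u^{-1}(\poly(n)))$. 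Next I would compute $u^{-1}$: since $u(n) = 2^{n/c}$ we have $u^{-1}(m) = c\log m$, so $u^{-1}(\poly(n)) = c\log(\poly(n)) = O(\log n)$, and therefore $\exp(u^{-1}(\poly(n))) = \exp(O(\log n)) = \poly(n)$. Thus the hypothesis yields a genuine polynomial-time PAC-learning algorithm for $\poly(n)$-size $\comp{\maj}{\thr}$-circuits in Valiant's model.

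Finally I would invoke \thmref{thm:KS}: a $\poly(n)$-time PAC-learning algorithm for $\comp{\maj}{\thr}$ (which is exactly the hypothesis ``$\comp{\maj}{\thr}$ is PAC-learnable in time $\poly(n)$'' of that theorem, since $\poly(n)$-size $\comp{\maj}{\thr}$-circuits over $n$ inputs is the relevant concept class) implies a polynomial-time classical solution to $\tilde O(n^{1.5})$-$\mathrm{uSVP}$ and polynomial-time quantum solutions to $\tilde O(n^{1.5})$-$\mathrm{SVP}$ and $\tilde O(n^{1.5})$-$\mathrm{SIVP}$. This is exactly the conclusion of \thmref{thm:natBarrier}, so we are done.

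I do not expect any genuine obstacle here; the argument is a two-step implication through already-established results. The one place requiring a small amount of care is the parameter bookkeeping — verifying that $u(n) = 2^{\Theta(n)}$ makes $u^{-1}(\poly(n))$ logarithmic so that the promised learning time $\exp(u^{-1}(\poly(n)))$ collapses to $\poly(n)$, and checking that the natural property of \defref{def:natprop} is constructive in the right sense (polynomial in truth-table size) to count as the natural proof that the hypothesis consumes. Both were already handled in the proof of \thmref{MajThrNatProp}, so no new work is needed.
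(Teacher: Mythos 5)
Your proposal is correct and follows essentially the same route as the paper: instantiate the hypothesis with the natural property for $\comp{\maj_{2^{n/c}}}{\thr}$ from \thmref{MajThrNatProp} and then feed the resulting polynomial-time PAC-learner into \thmref{thm:KS}. The only difference is that you spell out the parameter bookkeeping ($u(n)=2^{n/c}$ so $u^{-1}(\poly(n))=O(\log n)$ and the learning time collapses to $\poly(n)$), which the paper leaves implicit.
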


\begin{proof}
By \thmref{MajThrNatProp}, there is a constant $c$ such that there is a natural property for $\comp{\maj_{2^{n/c}}}{\thr}$ with density 1/2. Thus, by the condition of the current theorem, we conclude that the class of $\comp{\maj}{\thr}$-circuits is PAC-learnable in Valiant's model.
By \thmref{thm:KS}, we then obtain a polynomial time solution to $\tilde{O}(n^{1.5})$-$\mathrm{uSVP}$, and polynomial time quantum solutions to $\tilde{O}(n^{1.5})$-$\mathrm{SVP}$ and $\tilde{O}(n^{1.5})$-$\mathrm{SIVP}$.     
\end{proof}

\subsubsection{Hardness of DistPAC-Learning for Majority of Threshold Circuits for all Target Distributions}

\begin{corollary}\label{cor:psampHardMajThr}
    There exists a polynomial time samplable distribution $\mu$ over polynomial size $\majthr$-circuits, such that if $(\majthr, \mu)$ is efficiently distPAC-learnable, then there is a polynomial time solution to $\tilde{O}(n^{1.5})$-$\mathrm{uSVP}$, and polynomial time quantum solutions to $\tilde{O}(n^{1.5})$-$\mathrm{SVP}$ and $\tilde{O}(n^{1.5})$-$\mathrm{SIVP}$.     
\end{corollary}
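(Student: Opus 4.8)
The plan is to derive this from \thmref{thm:KVtech} by a non-black-box inspection of the Klivans--Sherstov reduction underlying \thmref{thm:KS}. Recall that \cite{klivans2009cryptographic} prove hardness of PAC-learning $\majthr$ precisely by instantiating the Kearns--Valiant paradigm with the public-key encryption scheme $({\sf gen},{\sf enc},{\sf dec})$ of Regev \cite{regev2004new}, whose semantic security is implied by classical hardness of $\tilde{O}(n^{1.5})$-uSVP and quantum hardness of $\tilde{O}(n^{1.5})$-SVP and $\tilde{O}(n^{1.5})$-SIVP. Two facts need to be read off from their construction: (i) Regev's scheme encrypts single-bit messages into $\poly(n)$-bit ciphertexts and, with suitable parameters, has decryption error $\epsilon(n) \le 1/n^{\omega(1)}$; and (ii) for every secret key ${\sf sk}$, the map ${\sf dec}({\sf sk},\cdot)$ with ${\sf sk}$ hard-wired is computed by a $\poly(n)$-size $\majthr$-circuit --- this is the content of the circuit-complexity lemma in \cite{klivans2009cryptographic}, which observes that decryption reduces to computing an inner product modulo $q = \poly(n)$ followed by a ``round to $0$ or $q/2$'' test, a ${\sf SYM}^+$-style computation realizable in depth $2$ by a majority of threshold gates.

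Given (i) and (ii), I would instantiate \thmref{thm:KVtech} with $\Lambda = \majthr$ and take $\mu$ to be the distribution over decryption functions ${\sf dec}({\sf sk},\cdot)$ obtained by sampling ${\sf sk},{\sf pk} \sim {\sf gen}$ and hard-wiring ${\sf sk}$. Since ${\sf gen}$ is polynomial-time, $\mu$ is a polynomial-time samplable distribution supported on $\poly(n)$-size $\majthr$-circuits, exactly the kind of distribution demanded by the corollary. Now suppose $(\majthr,\mu)$ is efficiently distPAC-learnable. The distinguisher $D$ constructed in the proof of \thmref{thm:KVtech} then breaks the semantic security of Regev's scheme, achieving advantage at least $1/\poly(n)$ in distinguishing ${\sf enc}({\sf pk},0)$ from ${\sf enc}({\sf pk},1)$. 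Feeding this distinguisher into Regev's worst-case-to-average-case reduction yields a classical polynomial-time solver for $\tilde{O}(n^{1.5})$-uSVP and quantum polynomial-time solvers for $\tilde{O}(n^{1.5})$-SVP and $\tilde{O}(n^{1.5})$-SIVP, which is the conclusion.

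The only genuine work is verifying facts (i) and (ii), and I expect (ii) to be the main obstacle: one must confirm that Regev decryption with a fixed key lives in $\majthr$ rather than in a larger constant-depth threshold class, using the standard simulation of symmetric functions of iterated-sum bits by a majority over thresholds. Everything else --- the samplability of $\mu$, the distinguishing advantage of $D$, and the reduction from breaking Regev's scheme to the lattice problems --- is a direct application of \thmref{thm:KVtech} together with the already-black-box reduction of \cite{klivans2009cryptographic}, so no new ideas are needed beyond this inspection. In fact the same argument yields the stronger statement that $(\majthr,\mu)$ is not even efficiently \emph{weakly} distPAC-learnable, matching the remark in the introduction.
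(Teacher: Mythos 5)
Your proposal is correct and follows essentially the same route as the paper: the paper's proof also reads off from the Klivans--Sherstov reduction that Regev's decryption functions (with hard-wired keys) are computed by polynomial-size $\majthr$-circuits with negligible decryption error, takes $\mu$ to be the distribution over these decryption circuits induced by ${\sf gen}$, and invokes \thmref{thm:KVtech} to conclude. Your additional detail on verifying facts (i) and (ii) and on chaining through Regev's worst-case-to-average-case reduction is exactly what the paper leaves implicit via \thmref{thm:KS}.
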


\begin{proof}
    Implicitly from \thmref{thm:KS}, we know that the decryption function of Regev's public key cryptosystem can be implemented by a $\majthr$-circuit, with negligible decryption error. Therefore, by \thmref{thm:KVtech}, the statement follows.
\end{proof}

\subsubsection{A Similar Barrier via Hardness of Learning DNFs}

A similar statement to \thmref{thm:natBarrier} can be observed for DNFs using a different hardness of PAC-learning result of \cite{daniely2016complexity}. Specifically, \cite{daniely2016complexity} prove that PAC-learning DNFs in Valiant's model implies the ability to refute random $k$-SAT instances. Feige \cite{feige2002relations} first introduce hardness assumptions regarding refuting $k$-SAT instances.

An algorithm refutes random $k$-SAT instances with $c(n) \ge \Omega(n)$ clauses if on $1 - o(1)$ fraction of the $k$-SAT formulas with $c(n)$ constraints, it outputs
“unsatisfiable”, while whenever it encounters a satisfiable $k$-SAT formula with $c(n)$ constraints, it outputs “satisfiable.” A random K-SAT instance $I = \{C_1,\cdots , C_{c(n)}\}$ is sampled in the way that each clause/constraint $C_i$ is chosen uniformly at random from the set of $k$-SAT clauses/constraints with $n$ variables.

Thus, it follows directly from the assumption regarding hardness of refuting random $k$-SAT formulas from \cite{daniely2016complexity} that there is no implication from a Natural Proof for exponential size DNFs, to PAC-learning for DNFs. This is because Nisan's natural proof clearly works for exponential size DNFs, since the set is clearly contained by the set of exponential size majority-of-threshold circuits.

\section{Main Theorem}\label{section:mainTHM}

In this section, we will prove \thmref{intro:mainResult}, which obtains distPAC-learning algorithms for concept classes that have low-cost associated communication games. To do so, we will first define the communication game, and then obtain a weak distPAC-learning algorithm (see \defref{def:weakDistPAC}). Finally, we will conclude \thmref{intro:mainResult} using the equivalence between weak and strong distPAC-learning (see \thmref{distPAC_weak_strong_equiv}).

In the following sections, we will use \thmref{intro:mainResult} to derive distPAC-learning algorithms for natural distributions over $\majthr$-circuits, polytopes, and DNFs, and impossibility results for weak PRFs that can be evaluated with $\majthr$-circuits with an arbitrary input encoding.

\subsection{Communication Games}

Recall that $\fC = \{\fC_n\}_{n \in \nat}$ is the $s(n)$-represented Boolean concept class that is induced by the evaluation function $\eval = \{\eval_n\}_{n \in \nat}$. We define the communication game associated with $\fC$:

\begin{definition}[2-party distributional communication game]
With respect to a evaluation rule $\eval$ and product distribution $(\mu, \rho)$, the 2-party communication game $\fG[\eval, n, (\mu, \rho)]$ is the following:
\begin{itemize}
    \item Setup: $\pi_f \sim \mu, x \sim \rho$.
    \item Player 1 gets as input $\pi_f \in \{0,1\}^{s(n)}$ for concept $f \in \fC_n$.
    \item Player 2 gets as input a string $x \in \{0,1\}^n$.
    \item The object of the game is for the parties to output the value $\eval_n(\pi_f, x) = f(x)$, using as few bits of communication as possible.
\end{itemize}

We say that $\fG[\eval, n, (\mu, \rho)]$ is $(c, \gamma)$-evaluated if the parties can communicate at most $c$ bits, and win the game with probability $1/2+\gamma$ (over the random sample of inputs according to $(\mu, \rho)$).
\end{definition}
\paragraph{Other definitions.} We direct the reader to \secref{prelims} for the necessary definitions of communication complexity.

\subsection{Weak Learning}\label{sec:weakLearning}

Towards \thmref{intro:mainResult}, we will start by first obtaining a weak learning algorithm, which only requires prediction accuracy marginally better than a coin toss. 

\paragraph{Notation.} In the following, we discuss boolean functions $f: \{0,1\}^n \rightarrow \{-1,1\}$, and denote by $U_n$ the uniform distribution over $\{0,1\}^n$. 
For shorthand, we will write $c:= c(n), \gamma:= \gamma(n)$, to denote number of bits of communication and protocol bias, which are dependent on $n$, the input length of a concept. 

Also, in the rest of the paper we will streamline notation by eliding the subscripts on distributions coming from ensembles indexed by $n \in \nat$.
\bigskip 

\begin{theorem}\label{weakMainResult}
    Let $\eval$ be an evaluation rule.
    Suppose that, for every $n \in \nat$, and product distribution $(\mu, \rho)$, $\fG[\eval, n, (\mu, \rho)]$ is $(c, \gamma)$-evaluated. 
    Then there exists an algorithm $A$ such that, for any $n \in \nat, \delta, \eta > 0$,
     \begin{equation*}
        \Pr_{f \sim \mu}\left[\Pr_{A} \left[\forall \rho: \Pr_{x \sim \rho}\left[h(x) \not= f(x) : h \leftarrow A^{\ExO{f}{\rho}}(n, \delta, \eta)\right] \le \frac{1}{2}-\poly(\gamma \cdot 2^{-c})  \right] \ge 1-\delta \right] \ge 1-\eta
    \end{equation*}
For $\mu$ samplable in time $t(n)$, $A$ runs in time $\poly(n, t(n), s(n), \gamma^{-1}, \delta^{-1}, 2^c)$.

\end{theorem}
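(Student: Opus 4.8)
# Proof Proposal for Theorem \ref{weakMainResult}

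\textbf{Overall strategy.} The plan is to build a weak distPAC-learner by converting the communication-complexity hypothesis into a statement about the 2-party norm $R_2(f)$ of the function $f(x_1,x_2) := \eval(\mu(x_1),\rho(x_2))$, and then turning the resulting ``distinguishing'' capability into a ``predicting'' capability via a Yao-style distinguisher-to-predictor argument. First I would set up the dictionary described in the proof overview: identify the first input $x_1 \in \{0,1\}^m$ of $f$ with the randomness feeding the target sampler $\mu$, and the second input $x_2 \in \{0,1\}^m$ with the randomness feeding the example sampler $\rho$, so that a draw from $\ExO{g}{\rho}$ (for $g$ the hidden concept with $\pi_g = \mu(x_1)$) is exactly $\langle \rho(x_2), f(x_1,x_2)\rangle$. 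The key observation is that the hypothesis ``$\fG[\eval,n,(\mu,\rho)]$ is $(c,\gamma)$-evaluated for \emph{every} product distribution'' means in particular that $\eval$, viewed as the function $f$ above, lies in $\RANDCC{2}{c}{\gamma}$ (public-coin randomized protocol, cost $c$, bias $\gamma$), because a randomized protocol that works on every product distribution is just a randomized protocol. Hence by \thmref{thm:randBound}, $\gamma \le 2^c \cdot R_2(f)^{1/4}$, i.e. $R_2(f) \ge (\gamma \cdot 2^{-c})^4 =: \beta$, which is $1/\poly$ in the parameters we are tracking.

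\textbf{From largeness of $R_2$ to a distinguisher.} Next I would observe that $R_2(f) = \Ex{x^0_1,x^0_2,x^1_1,x^1_2}{v}$ where $v = \prod_{\epsilon_1,\epsilon_2}\eval(\mu(x^{\epsilon_1}_1),\rho(x^{\epsilon_2}_2)) \in \{-1,1\}$, and — crucially — that this product equals $g(z_1)g(z_2)h(z_1)h(z_2)$ where $\langle z_1,g(z_1)\rangle,\langle z_2,g(z_2)\rangle \sim \ExO{g}{\rho}$ and $h\sim\mu$ is sampled \emph{locally} by the learner (so $h(z_1),h(z_2)$ can be evaluated by the learner, since it knows $\pi_h$ and $z_1,z_2$, running $\eval$ which is poly-time). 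Thus with only two calls to $\ExO{g}{\rho}$, $t(n)$ time to sample $h$, and $\poly(s(n))$ time to evaluate, the learner draws a single sample of $v$, which has expectation $R_2(f)\ge\beta$ when the labels come from $g\sim\mu$, versus expectation $O(2^{-m})$ (negligible) when labels come from a truly random function — this last fact is the largeness computation from the proof of \thmref{MajThrNatProp} (the product is mean-zero unless $z_1 = z_2$). Averaging $O(\beta^{-2})$ such samples gives a distinguisher $D$ with advantage $\Omega(\beta)$ between the two label-distributions.

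\textbf{Distinguisher to predictor, and the averaging over $\mu$.} The main work is the Yao-style argument. Define a hybrid $H_j$ in which the first $j$ example labels (for a sample of size $q = \poly(\beta^{-1})$) are produced by the true concept $g$ and the rest by a fresh random function; then $D$ distinguishes $H_0$ from $H_q$ with advantage $\Omega(\beta)$, so for a random $j$ it distinguishes $H_j$ from $H_{j+1}$ with advantage $\Omega(\beta/q)$, which by a standard manipulation yields an algorithm that, given $\langle z, g(z)\rangle$-type data and a fresh unlabeled $z^*\sim\rho$, outputs a guess for $g(z^*)$ correct with probability $\tfrac12 + \Omega(\beta/q) = \tfrac12 + \poly(\gamma 2^{-c})$. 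I would then handle the quantifier structure carefully: the predictor's correctness is over the randomness of $g\sim\mu$, the internal coins, \emph{and} $z^*\sim\rho$, uniformly in $\rho$ (since nothing above used samplability or any structure of $\rho$ — the sampling of $\rho$ is, as noted, a local preprocessing step for ``player 2'' and does not affect $R_2(f)$). A Markov/averaging argument then moves to the required form: for a $1-\eta$ fraction of $g\sim\mu$, with probability $1-\delta$ over coins, the output hypothesis $h$ has $\Pr_{z^*\sim\rho}[h(z^*)\ne g(z^*)] \le \tfrac12 - \poly(\gamma 2^{-c})$ for \emph{all} $\rho$ simultaneously — the ``for all $\rho$'' surviving because the advantage bound was $\rho$-uniform. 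Repetition/amplification gives the $\delta,\eta$ dependence, and the running time is $\poly(n,t(n),s(n),\gamma^{-1},\delta^{-1},2^c)$ as claimed.

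\textbf{Anticipated main obstacle.} I expect the delicate point to be the order-of-quantifiers bookkeeping in the final step: ensuring that a single hypothesis $h$ works against every example distribution $\rho$, rather than a $\rho$-dependent one. This works only because the advantage lower bound coming from $R_2(f)\ge\beta$ is completely independent of $\rho$ (the correlation bound \thmref{thm:corrBound} is stated for the uniform distribution over $(\{0,1\}^m)^2$, which is exactly what we use — the "$\rho$-dependence" has been absorbed into $f$ itself). The secondary subtlety is verifying that a random-function label distribution is genuinely what the hybrid's "fresh" labels should be, and that the negligible-vs-$\beta$ gap in $R_2$ is preserved through the hybrid argument with only a $\poly(\beta^{-1})$ loss; both are routine once the setup is fixed, but they must be stated precisely to get the stated polynomial running time.
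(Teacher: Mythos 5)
Your overall route---lower-bounding the $2$-party norm of $\xi(x,y):=\eval(\mu(x),\rho(y))$ via the correlation bound, sampling the norm statistic $v=g(z_1)g(z_2)h(z_1)h(z_2)$ from two oracle examples and a locally drawn $h\sim\mu$, then a Yao-style distinguisher-to-predictor hybrid followed by averaging and sample-and-test---is the same as the paper's. But there is a genuine flaw in your distinguishing step: you take the null case to be ``labels produced by a truly random function'' and assert the statistic has negligible expectation there, citing the largeness computation of \thmref{MajThrNatProp}. That computation is specific to uniform inputs; under an arbitrary example distribution $\rho$ the null expectation of $v$ equals the collision probability $\Pr_{z_1,z_2\sim\rho}[z_1=z_2]$ (on a collision the random-function labels and the $h$-values each multiply to $1$), which can be constant or even $1$---for a point-mass $\rho$ both the real and null expectations equal $1$ and your gap vanishes. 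Since the entire content of the theorem, which you yourself flag as the delicate part, is a guarantee uniform over all $\rho$, this is not routine bookkeeping: as written, your distinguisher, and hence the predictor extracted from the per-example hybrid, has no guaranteed advantage for low-entropy $\rho$. (Separately, your claim that the hypothesis puts $\eval$ in $\RANDCC{2}{c}{\gamma}$ is imprecise---it only yields, for each product distribution, a distributional protocol---but this is harmless since the correlation bound applied to $\xi$ with the samplers absorbed is all you actually use.)

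The repair is exactly what the paper does: take the null object to be a $2\times 2$ matrix of i.i.d.\ uniform $\pm 1$ entries, whose entry-product has expectation exactly $0$ for every $\rho$, and hybridize entry-by-entry (Figure \ref{fig:Hybs}) between this random matrix and the structured matrix with entries $\eval(\pi_f,z),\eval(\pi_f,w),\eval(\pi_g,z),\eval(\pi_g,w)$; the predictor of Figure \ref{fig:predictor} then needs only one oracle example, one locally sampled $\pi_g\sim\mu$, and a guess bit $r_{b_1b_2}$, giving advantage $(\gamma\cdot 2^{-c})^4/8$ irrespective of $\rho$. Equivalently, you could fix your per-example hybrid by replacing labels with independent uniform bits rather than with a random function, which makes the null expectation identically zero and lets your $\Omega(\beta/q)$ advantage go through, at the cost of the extra polynomial loss from your $O(\beta^{-2})$-sample averaging that the paper's single-sample hybrid avoids.
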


\begin{proof}

To construct $A$, we will follow three steps:
\begin{enumerate}
    \item Construct a weak randomized predictor $L$.
    \item Argue that many good non-uniform but deterministic predictors
    exist, by fixing coins and samples for
    $L$.
    \item Construct a deterministic predictor by sampling and then testing
    enough non-uniform predictors.
\end{enumerate}
Steps 2 and 3 follow from standard techniques (i.e., ``constructive averaging'').
 
\begin{claim}[Weak randomized predictor]\label{randPredictor}
Let $\eval$ be an evaluation rule. Under the conditions of \thmref{weakMainResult}, there exists a randomized algorithm $L$, running in time $\poly(n, t(n), s(n), \gamma^{-1}, \delta^{-1},\eta^{-1}, 2^c)$, such that for any $n \in \nat, \delta, \eta >0$, the following equation is satisfied:
\begin{equation}\label{weakPredictor}
        \Pr_{f \sim \mu}\left[\Pr_{L}\left[\forall \rho: \Pr_{z \sim \rho}\left[L^{\ExO{f}{\rho}}(z, n, \delta, \eta) \not= f(z) \right] \le \frac{1}{2}-\poly(\gamma \cdot 2^{-c})  \right] \ge 1-\delta \right] \ge 1-\eta
    \end{equation}
\end{claim}

\begin{proof}[Proof of Claim \ref{randPredictor}]

We will abuse notation and write $\pi_g \sim \mu$ to denote the binary representation of a concept $g$, distributed appropriately according to the target distribution $\mu$ over the $\eval$-induced $s(n)$-represented concept class $\fC$.
See the randomized predictor $L$ in Figure \ref{fig:predictor}.

Consider the distribution $\mathcal{M}$ over $2 \times 2$ matrices 
    \[
    C = 
\begin{bNiceMatrix}
        \eval(\pi_f,z) & \eval(\pi_f,w)\\
        \eval(\pi_g,z)& \eval(\pi_g,w)\\
\end{bNiceMatrix}
\]
where $\pi_f, \pi_g \sim \mu$ and $z,w \sim \rho$. We now claim that, under the conditions of \thmref{weakMainResult}, this distribution is efficiently \textit{distinguishable} from the distribution $\mathcal{R}$ over random $2 \times 2$ matrices,
    \[
    R = 
\begin{bNiceMatrix}
        r_{00} & r_{01}\\
        r_{10}& r_{11}\\
\end{bNiceMatrix}
\]
To see this, observe that the distribution over $C$ is identical to the following distribution over $2 \times 2$ matrices (abusing notation, $z =\rho(y)$ denotes a point $z$ sampled according to $\rho$ with the random bits $y$)
\[
    D =  
\begin{bNiceMatrix}
        \eval(\mu(x),\rho(y)) & \eval(\mu(x),\rho(y'))\\
        \eval(\mu(x'),\rho(y))& \eval(\mu(x'),\rho(y'))\\
\end{bNiceMatrix}
\]
Here, $x,x',y,y'$ are uniformly random strings. We can assume that without loss of generality that $x,x',y,y'$ are all the same length, by defaulting to the maximum necessary length for sampling $\mu, \rho$ (and padding the shorter strings with useless random bits). Therefore, identifying $\xi(x,y) := \eval(\mu(x),\rho(y))$, we can now see that 
\[
R_2(\xi) = \mathop{\mathbb{E}}_{\substack{\pi_f,\pi_g\\z,w}}\left[\prod_{i,j \in \{0,1\}} C_{ij}\right]
\]
It now readily follows that when 
$\fG[\eval,n, (\mu, \rho)]$ is $(c, \gamma)$-evaluated (which is true by assumption), then 
\begin{equation}\label{eq:avgPAC_dist}
R_2(\xi) = \mathop{\mathbb{E}}_{\substack{\pi_f,\pi_g\\z,w}}\left[\prod_{i,j \in \{0,1\}} C_{ij}\right] \ge (\gamma \cdot 2^{-c})^4
\end{equation}

\begin{figure}
    \centering
\begin{algorithm}[H]
  \caption{$L^{\ExO{f}{\rho}}$}
  \label{L1}
  \begin{algorithmic}[1]
    \State \textbf{Input}: $z \in \{0,1\}^n, n \in \nat, \delta, \eta>0$
    \State Pick uniformly random values $b_1,b_2 \in \{0,1\}$.
    \State Pick uniformly random values $r_{00},r_{01},r_{10},r_{11} \in \{-1,1\}$
    \State Sample $\pi_g \sim \mu$.
    \State Sample $(w,y) \sim \ExO{f}{\rho}$.
    \If{$b_1 = b_2 = 0$}
    \State $v \gets \prod_{i,j \in \{0,1\}}r_{ij}$
    \EndIf\smallskip
    \If{$b_1 = 0, b_2 = 1$}
    \State $v \gets y \cdot r_{00} \cdot \prod_{i,j \in \{0,1\}}r_{ij}$
    \EndIf\smallskip
    \If{$b_1 = 1, b_2 = 0$}
    \State $v \gets \eval(\pi_g,w) \cdot \eval(\pi_g,z) \cdot \prod_{j \in \{0,1\}}r_{1j}$
    \EndIf\smallskip
    \If{$b_1 = 1, b_2 = 1$}
    \State $v \gets y \cdot \eval(\pi_g,w) \cdot \eval(\pi_g,z) \cdot r_{11}$
    \EndIf\smallskip
    \State $b \gets r_{b_1b_2}$
    \State \textbf{Output} $b \cdot v$
  \end{algorithmic}
\end{algorithm}
\vspace{-0.5cm}

\caption{Randomized predictor $L$. }
    \label{fig:predictor}
\end{figure}

On the other hand, 
\[
\mathop{\mathbb{E}}_{\substack{R}}\left[\prod_{i,j \in \{0,1\}} R_{ij}\right] =0
\]

Now that we have established this, we may proceed by a hybrid argument. Define the neighboring hybrid distributions $H_{1},H_{2}, H_{3},H_{4}, H_{5}$ over $2 \times 2$ matrices, as in Figure \ref{fig:Hybs}.
\begin{figure}
    \centering
\begin{align*}
    H_1 &= \mathcal{R}\\
    H_2 &= 
\begin{cases}
C_{k\ell} \text{ when $k = 0, \ell= 0$} \\
R_{k\ell} \text{ otherwise}
\end{cases}\\
H_3 &= 
\begin{cases}
C_{k\ell} \text{ when $k = 0, \ell \le 1$} \\
R_{k\ell} \text{ otherwise}
\end{cases}\\
H_4 &= 
\begin{cases}
C_{k\ell} \text{ when $k = 0$ or $\ell \le 0$} \\
R_{k\ell} \text{ otherwise}
\end{cases}\\
H_5 &= \mathcal{M}
\end{align*}
\vspace{-0.5cm}
\caption{Hybrid sequence.}
    \label{fig:Hybs}
\end{figure}

It then follows that for random hybrid neighbors $H_{i}, H_{i+1}$ ($i\in [4]$),
\begin{align}\label{neighbor1}
    \mathop{\mathbb{E}}_{i \sim [4]}\left[\mathop{\mathbb{E}}_{H' \sim H_{i+1}}\left[\prod_{k,j \in \{0,1\}} H_{kj}' = 1\right] - \mathop{\mathbb{E}}_{H \sim H_i}\left[\prod_{k,j \in \{0,1\}} H_{kj} = 1\right]
    \right] \ge (\gamma \cdot 2^{-c})^4/4
\end{align}

To ease notation, let $D(H) =\prod_{k,j \in \{0,1\}} H_{kj}$, and let $V_i$ denote the event that $D(H_i)= 1$. Intuitively, the function $D$ stands for ``distinguisher,'' and can be thought of as such.

We continue by observing that, by definition, the value stored as $v$ in $L$ (\algoref{L1}) is $D(H_i)$ for a random $i \in [4]$. Hence, the output of $L$, which is written as $D(H_i) \cdot b$, is interpreted as a prediction, where $b = r_{b_1b_2}$ is the ``guess bit.'' Note that, the string $b_1b_2$ is the binary representation of $i$.

Now, conditioning on correctness of this guess bit,
we have that for all $\rho$, and probabilities taken over $z\sim \rho, f\sim \mu$ and the randomness of $L$:
\begin{align*}
    \Pr\left[L^{\ExO{f}{\rho}}(z, n, \delta, \eta) = f(z)\right]
    &= \Pr\left[L^{\ExO{f}{\rho}}(z, n, \delta, \eta) = f(z) \mid b = f(z)\right]
      \cdot \Pr[b = f(z)] \\
    &\ \ \ \ + \Pr\left[L^{\ExO{f}{\rho}}(z, n, \delta, \eta) = f(z)
      \mid b \not= f(z)\right] \cdot \Pr[b \not= f(z)] \\
    &= \frac{1}{2}\Big(\Pr[b \cdot D(H_{i}) = f(z) \mid b = f(z)]\\
    &\ \ \ \ +\Pr[b \cdot D(H_i) = f(z) \mid b \not= f(z)]\Big)\\
\end{align*}

Indeed, when $V_i$ is unsatisfied, this means that the output of $L$ is $b$. The case analysis follows:

\begin{align*}
    \Pr[
    L^{\ExO{f}{\rho}}(z, n, \delta, \eta)
    = f(z)] &=
              \frac{1}{2}\Big(\Pr[V_i \mid b = f(z)]
              + \Pr[\lnot V_i \mid b \not= f(z)]\Big)\\
            &= \frac{1}{2}
              + \frac{1}{2}\Big(\Pr[V_i \mid b = f(z)]
              - \Pr[V_i \mid b \not= f(z)]\Big)\\
  \end{align*}
  By conditioning, we know that:
  \begin{equation*}
    \Pr[V_i] =
    \frac{1}{2} \Pr[V_i \mid b = f(z)]
    + \frac{1}{2}\Pr[V_i \mid b \not= f(z)]
  \end{equation*}
  rearranging the terms, we get:
  \begin{equation*}
    \frac{1}{2}\Pr[V_i \mid b \not= f(z)] =
    \Pr[V_i]
    - \frac{1}{2} \Pr[V_i \mid b = f(z)]
  \end{equation*}
  We thus conclude:
  \begin{equation*}
    \Pr[L^{\ExO{f}{\rho}}(z, n, \delta, \eta) = f(z)] =
    \frac{1}{2}
    + \underbrace{\Pr[V_i \mid b = f(z)]}_{(\alpha)}
    - \underbrace{\Pr[V_i]}_{(\beta)}
  \end{equation*}
The term $(\alpha)$ corresponds to the case that $L$ computes the 2-party norm on a sample from $H_{i+i}$ (i.e., the product of the entries of a matrix sampled from $H_{i+i}$), while term $(\beta)$ is the case that $L$ computes the 2-party norm on a sample from $H_{i}$ (the product of the entries of a matrix sampled from $H_{i}$). 
Thus, by equation \eqref{neighbor1},
  \begin{align*}\label{bound}
    \Pr[L^{\ExO{f}{\rho}}(z, n, \delta, \eta) = f(z)]
    &= \frac{1}{2} + (1-\Pr[D(H_{i+1}) = 1] - (1-\Pr[D(H_{i}) = 1])\\
    &\ge \frac{1}{2}+(\gamma \cdot 2^{-c})^4/8
  \end{align*}
\end{proof}

Having established Claim \ref{randPredictor}, we now
convert the randomized algorithm $L^{\ExO{f}{\rho}}$ into
a non-uniform learning algorithm by averaging. Let
$L^{\ExO{f}{\rho}}_{\mathrm{inp}, i}(z; r)$ denote the
Algorithm \ref{L1} where the random hybrid choice is fixed to be
$i$, and the input parameters $\mathrm{inp} = (n, \delta, \eta)$ are
hard-wired in, and the random bits $r$ for computing other randomized
aspects of the algorithm is treated as input. This allows us to consider the algorithm
as a deterministic mapping of random bits and examples from $\ExO{f}{\rho}$ to a
circuit that weakly agrees with $f$. By a standard averaging argument, we obtain:
\begin{claim}[Averaging, see lemma A.11 of \cite{arora2009computational}] 
  \label{claim:many-good-L-coins}
\[
\Pr_r\bigg[\Pr_{z \sim \rho}
    \Big[L_{\mathrm{inp}, i}^{\ExO{f}{\rho}}(z; r) = f(z)\ |\ r\Big]
    > \frac{1}{2}+\frac{(\gamma \cdot 2^{-c})^4}{32} \bigg] > (\gamma \cdot 2^{-c})^4/4
\]
\end{claim}
Taking hybrid index $i$ and $r$ uniformly at random, we obtain
``good'' choices with good probability. Therefore such a circuit is
efficiently found by randomized trial-and-error; we
sample many candidate predictors in parallel and then compare each to
the concept by checking random examples. By a
standard application of Chernoff bounds, sufficiently many examples
will be enough to check that a circuit with good enough accuracy is indeed good enough, with high probability.

\begin{claim}[Without proof]
  \label{claim:easy-pred-filtering-heur}
  With probability $1-\delta$, $A$ (Algorithm \ref{A} in Figure \ref{fig:sampling})
  outputs a ``good'' circuit that correctly classifies
  $1/2+ \poly(\gamma \cdot 2^{-c}) $ fraction of points, where $t,m$ are quantities that are polynomially bounded as a function of $\poly(\gamma \cdot 2^{-c})$ and $\log(\delta^{-1})$.
\end{claim}

From the above claims it now follows, from a Markov argument, that:
\begin{equation*}
    \Pr_{\pi_f \sim \mu}\left[\Pr_{A}\left[\forall \rho : \Pr_{z \sim \rho}\left[h(z) \not= f(z) : h \leftarrow A^{\ExO{f}{\rho}}(n, \delta)\right] \le \frac{1}{2}- \poly(\gamma \cdot 2^{-c}) \right] \ge 1-\delta/\eta \right] \ge 1-\eta
\end{equation*}
This concludes the proof of \thmref{weakMainResult}.

\begin{figure}
    \centering
\begin{algorithm}[H]
\setstretch{1}
    \caption{$A^{\ExO{f}{\rho}}$}\label{A}
    \begin{algorithmic}[1]
      \State \textbf{Input}: $n \in \nat, \delta \in (0,1]$
      \State Sample $m$ (sufficiently many) candidate circuits,
      using oracle access to $\ExO{f}{\rho}$ as needed.
      \State Sample $t$ (sufficiently many)
      additional random examples from $\ExO{f}{\rho}$.
      \For{each sampled circuit $C_i,$}
      \State Compute using random examples: $\alpha_i \gets \corr{C_i}$
      \EndFor
      \State \Output the circuit with largest $\alpha$ value.
    \end{algorithmic}
  \end{algorithm}
  \vspace{-0.5cm}
    \caption{Algorithm for sampling and testing candidate predictors.}
    \label{fig:sampling}
\end{figure}
\end{proof}

\paragraph{Remark.} Using the 2-party norm as we do is a \textit{universal} distinguisher. That is, it distinguishes any evaluation rule that is $(c, \gamma)$-evaluated from a random function (using the lower bound of $(\gamma \cdot 2^{-c})^4$). Therefore it holds that for \textit{arbitrary} choice of distribution $\rho$, we obtain the desired guarantee. Indeed, the choice of $\rho$ can be adversarial with respect to $f \sim \mu$, and it never needs to be known by $A$.

\subsection{Main Theorem}
\thmref{weakMainResult} is enough to prove \thmref{intro:mainResult}.
Recall that $\fC = \{\fC_n\}_{n \in \nat}$ is a boolean concept class that is $s(n)$-representable by the evaluation function $\eval = \{\eval_n\}_{n \in \nat}$.

\begin{theorem}[\thmref{intro:mainResult}, restated]\label{mainResult}
    Let $\eval$ be an evaluation rule, and suppose that for every $n \in \nat$,
    and product distribution $(\mu, \rho)$, $\fG[\eval, n, (\mu, \rho)]$ is $(c, \gamma)$-evaluated. 
    Then, $(\fC, \mu)$ is distPAC-learnable. For a time $t(n)$-samplable $\mu$ and $s(n)$-represented $\fC$, the learning algorithm runs in time 
    polynomial in $n, t(n), s(n), \gamma^{-1}, \epsilon^{-1}, \delta^{-1}, \eta^{-1}$, and $2^{c}$.
\end{theorem}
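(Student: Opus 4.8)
The plan is to derive \thmref{mainResult} immediately from two ingredients already in hand: the weak-learning construction of \thmref{weakMainResult}, and the boosting equivalence of \thmref{distPAC_weak_strong_equiv}. In other words, once weak learning is established, strong learning is "free."

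First I would invoke \thmref{weakMainResult}. Under the hypothesis that $\fG[\eval,n,(\mu,\rho)]$ is $(c,\gamma)$-evaluated for every $n$ and every product distribution $(\mu,\rho)$, it supplies an algorithm $A_{\mathrm w}$ which, using only $\ExO{f}{\rho}$, outputs a hypothesis $h$ such that — with probability $1-\delta$ over $f\sim\mu$ and the coins of $A_{\mathrm w}$, and \emph{simultaneously for every} example distribution $\rho$ — $h$ has error at most $\tfrac12-\beta$ with $\beta := \poly(\gamma\cdot 2^{-c})$, in time polynomial in $n,t(n),s(n),\gamma^{-1},\delta^{-1},\eta^{-1}$ and $2^{c}$. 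The feature I would lean on is that the "$\forall\rho$" quantifier lives \emph{inside} the probability over $f\sim\mu$ and the coins of $A_{\mathrm w}$: there is a single fixed set $G_n\subseteq\{0,1\}^{s(n)}$ of representations, of $\mu$-mass at least $1-\eta$, on which $A_{\mathrm w}$ is (with probability $1-\delta$ over its coins) an honest weak PAC-learner in Valiant's sense, i.e.\ against the worst-case example distribution. The "universal distinguisher" remark following \thmref{weakMainResult} is exactly what guarantees this uniformity in $\rho$: the $2$-party-norm estimator never depends on $\rho$, and the lower bound $(\gamma\cdot 2^{-c})^4$ holds for all $\rho$.

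Second I would run a standard boosting algorithm in the filtering/example-oracle setting \cite{DBLP:journals/ml/Schapire90, domingo2000madaboost} on top of $A_{\mathrm w}$, exactly as in the proof of \thmref{distPAC_weak_strong_equiv}. Fixing any $f$ with $\pi_f\in G_n$, the booster — which is permitted to call $A_{\mathrm w}$ on the sequence of reweighted example oracles it derives from $\ExO{f}{\rho}$ — produces with probability $1-\delta$ a hypothesis of error at most $\epsilon$, for every $\rho$. Taking the probability over $f\sim\mu$ back in and absorbing the $1-\eta$ mass of $G_n$, this is precisely the distPAC guarantee \eqref{intro:avgPACguarantee} for $(\fC,\mu)$, after rescaling the confidence parameters passed to $A_{\mathrm w}$ by the usual $\poly$ factors so that the union bounds over the $\poly(\beta^{-1},\log\epsilon^{-1})$ boosting stages close. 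Since boosting multiplies the running time by $\poly(\beta^{-1},\epsilon^{-1},\log\delta^{-1})$ and $\beta^{-1}=\poly(\gamma^{-1},2^{c})$, the resulting learner runs in time polynomial in $n,t(n),s(n),\gamma^{-1},\epsilon^{-1},\delta^{-1},\eta^{-1}$ and $2^{c}$, as required.

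The only genuine subtlety — and the reason \defref{def:distPAC} fixes the quantifier order it does — is this last step: black-box boosting needs a weak learner that wins against \emph{arbitrary} reweighted example distributions, so it is essential that \thmref{weakMainResult} hands us a \emph{single} large-mass concept set $G_n$ on which weak learning is uniform in $\rho$, not a $\rho$-dependent set (which is all one could extract in the heurPAC setting). Because that uniformity is already built into \thmref{weakMainResult}, the boosting step requires no new work, and the proof is complete.
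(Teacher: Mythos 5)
Your proposal is correct and follows exactly the paper's route: the paper proves \thmref{mainResult} by citing \thmref{weakMainResult} together with the weak-to-strong equivalence \thmref{distPAC_weak_strong_equiv}, which is precisely your two-step argument. Your elaboration of why the quantifier order (a single large-mass concept set, uniform in $\rho$) is what makes black-box boosting legitimate matches the remark the paper itself makes inside the proof of \thmref{distPAC_weak_strong_equiv}.
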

\begin{proof}
    Immediate from \thmref{distPAC_weak_strong_equiv} and \thmref{weakMainResult}.
\end{proof}

\paragraph{Remark.} The exponential dependency of $c$ is possibly necessary (i.e., necessary assuming exponentially secure one-way functions exist), since the theorem does not restrict $\fC$ (e.g., it can be ${\sf P/poly}$), and no matter what $c \le n$. Also, see \secref{sec:RequiresBreakthroughs} for further discussion.

\section{Distributional PAC-Learning from Nisan's Natural Proofs}\label{section:distPAC_algos}

In this section, we will apply \thmref{mainResult} to obtain polynomial time distPAC-learning algorithms. \thmref{thm:distPAC_MajThr} restates \thmref{intro:distPAC_MajThr}.

\begin{theorem}[distPAC-learning --- \thmref{intro:distPAC_MajThr} restated]\label{thm:distPAC_MajThr}
    Let $\eval \in \majthr$ be any evaluation rule, and let $\mu$ be any polynomial time samplable target distribution.
    Then, for the $\eval$-induced $s(n)$-represented concept class $\fC$, the pair $(\fC, \mu)$ is efficiently distPAC-learnable.
\end{theorem}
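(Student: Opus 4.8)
The plan is to invoke the main theorem (\thmref{mainResult}); it suffices to verify that, for any evaluation rule $\eval \in \majthr$, the associated communication game $\fG[\eval, n, (\mu, \rho)]$ is $(c,\gamma)$-evaluated over \emph{every} product distribution $(\mu, \rho)$ with $c$ at most logarithmic and $\gamma^{-1}$ at most polynomial in $n$ and $s(n)$, after which the conclusion and its running-time bound follow by substituting parameters into \thmref{mainResult}.

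First I would observe that, since $\fC$ is $s(n)$-represented and $\eval \in \majthr$, the map $\eval_n \colon \{0,1\}^{s(n)} \times \{0,1\}^n \to \{-1,1\}$ is computed by a $\comp{\maj_t}{\thr}$-circuit whose top majority gate has fan-in $t = \poly(n, s(n))$. The way inputs are split in the communication game --- Player $1$ holds the concept representation $\pi_f$, Player $2$ holds the input $x$ --- is just one particular partition of the $s(n)+n$ variables of $\eval_n$. By Nisan's theorem (\thmref{nisanThm}), $\comp{\maj_t}{\thr} \subseteq \RANDCC{2}{O(\log t)}{1/\poly(t)}$, and, per the definition of $\RANDCC{2}{c}{\gamma}$ in \secref{prelims}, this bound holds for \emph{every} partition of the inputs; in particular $\eval_n$ has a public-coin randomized $2$-party protocol of cost $O(\log t)$ and bias $1/\poly(t)$ under the $(\pi_f, x)$-partition.

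Next I would pass from this randomized protocol to a distributional one: since it errs with probability at most $1/2 - 1/\poly(t)$ on \emph{every} fixed input pair, the same bound holds in expectation over inputs drawn from the product distribution $(\mu,\rho)$, so the two players win $\fG[\eval, n, (\mu, \rho)]$ with probability $1/2 + 1/\poly(t)$; that is, the game is $(c,\gamma)$-evaluated with $c = O(\log t)$ and $\gamma = 1/\poly(t)$ for every $(\mu,\rho)$, which is exactly the hypothesis of \thmref{mainResult}. I would then apply \thmref{mainResult}: because $\mu$ is polynomial time samplable we may take $t(n) = \poly(n)$ as its sampling time, and since $t = \poly(n, s(n))$ we have $2^c = \poly(n, s(n))$ and $\gamma^{-1} = \poly(n, s(n))$. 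Substituting into the running-time bound of \thmref{mainResult} --- polynomial in $n, t(n), s(n), \gamma^{-1}, \epsilon^{-1}, \delta^{-1}, \eta^{-1}$ and $2^{c}$ --- yields a running time $\poly(n, s(n), \epsilon^{-1}, \delta^{-1}, \eta^{-1})$, so $(\fC, \mu)$ is efficiently distPAC-learnable.

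I expect the only genuine subtlety --- not really an obstacle --- to be confirming that Nisan's protocol applies to the $(\pi_f, x)$-partition even when the inputs of a single bottom $\thr$ gate are split across both players. This is handled because Theorem 1a of \cite{nisan1993communication} yields the $O(\log t)$-cost protocol for a threshold gate under an \emph{arbitrary} partition of its inputs, and Lemma 5 of \cite{nisan1993communication} combines these under the top majority vote without reintroducing any dependence on the partition.
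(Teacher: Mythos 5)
Your proposal is correct and follows essentially the same route as the paper: invoke Nisan's theorem (\thmref{nisanThm}) to get a low-cost, high-bias randomized protocol for the evaluation rule under the $(\pi_f,x)$-partition, convert it by averaging to a distributional protocol over any product distribution $(\mu,\rho)$, and then apply \thmref{mainResult} with the stated parameters. You merely spell out details the paper leaves implicit (the top fan-in $t=\poly(n,s(n))$ and the fact that Nisan's bound holds for every input partition), which is a fine elaboration rather than a different argument.
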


\begin{proof}[Proof of \thmref{thm:distPAC_MajThr}]

By \thmref{nisanThm}, $\majthr$ has a randomized communication protocol with cost $O(\log n)$ and bias at least $1/\poly(n)$. This implies that for any product distribution $(\mu, \rho)$ it is the case that $\fG[\eval, n, (\mu, \rho)]$ is $(O(\log n), 1/\poly(n))$-evaluated, since $\eval \in \majthr$. This is enough to conclude the theorem by \thmref{mainResult}.
\end{proof}

\subsection{DistPAC-Learning on Natural Target Distributions}

\subsubsection{Majority-of-Thresholds}\label{subsubsec:nat-dist-majthr}

In this section, we will use \thmref{thm:distPAC_MajThr} to show efficient distPAC-learning algorithms for $\majthr$-circuits over natural target distributions. Recall that, as mentioned in the introduction, we must restrict the target distribution over polynomial size $\majthr$-circuits if we hope to obtain efficient distPAC-learning without also obtaining polynomial time solutions to $\tilde{O}(n^{1.5})$-$\mathrm{uSVP}$, and polynomial time quantum solutions to $\tilde{O}(n^{1.5})$-$\mathrm{SVP}$ and $\tilde{O}(n^{1.5})$-$\mathrm{SIVP}$ (see \corref{cor:psampHardMajThr} for the formal statement).   

\paragraph{The target distribution.} Let $L = (T_1, \cdots T_{m})$ be a list of $m := \poly(n)$ linear threshold functions. Also, let $\mu$ be any $\poly(n)$ time samplable distribution over $\{0,1\}^{m}$. We define the distribution $\mu_L$ over $\majthr$-circuits as follows. 
\begin{itemize}
    \item Sample $\theta \sim \mu$.
    \item Output the $\majthr$-circuit that is the majority vote over each $T_i \in L$ such that $\theta_i = 0$.
\end{itemize}

We will show that, for any $\mu$, $L$ as described above, $(\majthr, \mu_L)$ is efficiently distPAC-learnable.

\begin{theorem}\label{distPAC_MajThr2}
    Let $L = (T_1, \cdots T_{m})$ be any list of $m := \poly(n)$ linear threshold functions over $n$-bit inputs, and let $\mu$ be any $\poly(n)$ time samplable distribution over $\{0,1\}^{m}$. The pair $(\majthr, \mu_L)$ is efficiently distPAC-learnable.
\end{theorem}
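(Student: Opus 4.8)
The plan is to realise $(\majthr,\mu_L)$ as a special case of \thmref{thm:distPAC_MajThr}, by exhibiting one evaluation rule $\eval\in\majthr$ whose binary representations are the selector strings $\theta\in\{0,1\}^m$ and whose pushforward under $\theta\mapsto\eval_n(\theta,\cdot)$ is exactly $\mu_L$. Concretely, for each $n$ put $m=m(n)=\poly(n)$ and define $\eval_n\colon\{0,1\}^m\times\{0,1\}^n\to\{-1,1\}$ by letting $\eval_n(\theta,x)$ be the majority vote of $\{T_i(x):\theta_i=1\}$ (the ``$\theta_i=0$'' variant in the statement is identical after replacing $\mu$ by its image under bitwise complement; empty-selection and exact-tie cases are settled by whatever fixed convention defines $\mu_L$). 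Then $s(n)=m=\poly(n)$, the $\eval$-induced class $\fC$ is $s(n)$-represented, $\mu$ is exactly the $\poly(n)$-time samplable target distribution over $\{0,1\}^{s(n)}$ supplied by the hypothesis, and --- since \defref{def:distPAC} quantifies only over concepts drawn from the target distribution --- the assertion ``$(\majthr,\mu_L)$ is efficiently distPAC-learnable'' is literally ``$(\fC,\mu)$ is efficiently distPAC-learnable''. So everything reduces to checking the one hypothesis of \thmref{thm:distPAC_MajThr}: that $\eval\in\majthr$.

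This circuit realisation is the only real content. Write $T_i(x)=\mathrm{sign}(\langle w_i,x\rangle-t_i)$ and fix $M_i>\|w_i\|_1+|t_i|$. Then the predicate ``$\theta_i=1$ and $T_i(x)=1$'' is computed by the single $\thr$-gate $h_i(\theta,x):=\mathrm{sign}(\langle w_i,x\rangle+M_i\theta_i-t_i-M_i)$ on the $m+n$ input bits: for $\theta_i=1$ it reduces to $\mathrm{sign}(\langle w_i,x\rangle-t_i)=T_i(x)$, and for $\theta_i=0$ the choice of $M_i$ forces the linear form below the threshold, so $h_i=-1$. Symmetrically, flipping the sign on the $x$-part and shifting the threshold yields a $\thr$-gate $h_i'(\theta,x)$ equal to $1$ precisely when $\theta_i=1$ and $T_i(x)=-1$. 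Since ``the majority vote of the selected $T_i$ is $1$'' is exactly $\#\{i:h_i=1\}>\#\{i:h_i'=1\}$, i.e. $\sum_{i=1}^m h_i-\sum_{i=1}^m h_i'>0$, and since $-h_i'$ is again a $\thr$-gate (the complement of a threshold function of Boolean inputs is a threshold function), we obtain $\eval_n=\maj(h_1,\dots,h_m,-h_1',\dots,-h_m',g_0)$ where $g_0\equiv\pm1$ is a constant tie-breaking gate matching $\mu_L$'s convention (it also handles $\theta\equiv0$, where all $h_i=h_i'=-1$ and the raw count ties). This is a $\majthr$-circuit of size $O(m)=\poly(n)$, and each of its restrictions (fixing $\theta$) is again a $\majthr$-circuit, so $\mathrm{supp}(\mu_L)\subseteq\majthr$ and the pair $(\majthr,\mu_L)$ is well-posed.

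Finally, \thmref{thm:distPAC_MajThr} applied to this $\eval\in\majthr$ and this $\poly(n)$-time samplable $\mu$ gives that $(\fC,\mu)$ --- equivalently $(\majthr,\mu_L)$ --- is distPAC-learnable in time $\poly(n,s(n),\epsilon^{-1},\delta^{-1},\eta^{-1})=\poly(n,\epsilon^{-1},\delta^{-1},\eta^{-1})$, i.e. efficiently. The hard part is exclusively the gadget in the second paragraph (the ``large-weight AND with a single bit'' trick, plus bookkeeping about $\pm1$ versus $0/1$ encodings, negated thresholds, and ties); the reduction and the running-time accounting are routine. One caveat worth flagging as a follow-up: the learner produced this way a priori depends on $\eval$ (hence on $L$) and invokes a sampler for $\mu$, and I would close the subsection with a separate observation to the effect that --- exploiting the universality of the $2$-party-norm distinguisher from \thmref{weakMainResult} --- this dependence is not essential.
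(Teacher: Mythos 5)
Your proof is correct, and at the top level it is the same reduction the paper performs: exhibit an evaluation rule $\eval\in\majthr$ together with a polynomial-time samplable distribution over binary representations whose pushforward (up to functional equivalence) is exactly $\mu_L$, then invoke \thmref{thm:distPAC_MajThr} (equivalently \thmref{mainResult}). The difference is in the gadget realizing $\eval$, and there your route genuinely diverges from the paper's. The paper keeps each $T_i$ as a single gate but ``kills'' it with a large-weight selector variable $z_i$ (forcing output $0$ when deselected), which leaves $|z|$ spurious zero-votes in the top majority; it then repairs the count by enlarging the representation to $2m$ bits and sampling auxiliary vote bits $y$ with $|y|=(|z|+m)/2$, i.e.\ it passes from $\mu$ to a correlated distribution $\mu_L^*$. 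Your paired-gate construction $h_i,-h_i'$ makes every deselected index contribute zero \emph{net} votes, so no rebalancing and no auxiliary randomness are needed: the representation stays $\theta$ itself, the target distribution stays $\mu$ (up to the bitwise complement you note), and the tie-break is absorbed into one constant gate. This buys a cleaner bookkeeping — in particular it sidesteps the parity issue implicit in the paper's requirement that $(|z|+m)/2$ be an integer, and it makes the identification of the learned pair with $(\majthr,\mu_L)$ immediate — at the modest cost of doubling the bottom fan-in to $2m+1$, which is still $O(m)=\poly(n)$ and changes nothing in the application of the main theorem. Your closing caveat about the learner not needing $L$ or $\mu$ explicitly matches the paper's own remark on non-black-box access to $\mu_L$, so nothing is missing.
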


\begin{proof}
    To prove the theorem, we will show that the target distribution $\mu_L$ can be translated into a distribution $\mu_L^*$ over binary representations, so that given a representation $\pi_f \sim \mu_L^*$, there is an evaluation rule $\eval \in \majthr$ such that, for every $x \in \{0,1\}^n$, it holds that $\eval(\pi_f, x) =f(x)$. In other words, the distribution over functions $\eval(\pi_f, \cdot)$ for $\pi_f \sim \mu_L^*$ is identical (when considering functional equivalence) to $\mu_L$.
    This suffices to prove the theorem by invoking \thmref{mainResult}.

    We construct $\eval$ in the following way. For $T_i \in L = (T_1, \cdots T_m)$, let $\theta_i \in \ints$ be the threshold parameter and let $w_{i,j} \in \ints$ be the $j^{th}$ weight in $T_i$ (i.e., $T_i = [w_{i,1}x_1 + \cdots + w_{i,n}x_n \ge \theta_i]$). Now, for every $T_i \in L$, add an input variable $z_i$ that is weighted by $w_{z_i} = -(w_{i,1}+ \cdots + w_{i,n}+\theta_i+1)$. Call the new threshold function $T_i' = [w_{i,1}x_1 + \cdots + w_{i,n}x_n + w_{z_i}z_i\ge \theta_i]$
    
    Also, define auxiliary variables $y_1 \cdots y_m$.
    We define $\eval: \{0,1\}^{2m} \times \{0,1\}^n \rightarrow \{-1,1\}$
    \[
    \eval(z_1 \cdots z_m, y_1 \cdots y_{m}, x) := \maj (T_1'(x, z_1), \cdots, T_m'(x, z_m), y_1, \cdots, y_{m})
    \]
    We now define the distribution $\mu_L^*$ over $\{0,1\}^{2m}$. To sample $\mu_L^*$, first sample $z \sim \mu$. Then, letting $|z|$ denote the number of ones in $z$, sample $y = y_1\cdots y_m$ by choosing uniformly at random amongst the $m$-bit strings such that $|y| = (|z|+m)/2$. Finally, output $(z,y) \in \{0,1\}^{2m}$.

    It only remains to show that the distribution over functions $\eval(\pi_f, \cdot)$ for $\pi_f:= (z,y) \sim \mu_L^*$ is identical (when considering functional equivalence) to $\mu_L$. To see this, observe that whenever $z_i = 1$, $T'_i(x,z_i) = 0$. This follows from the fact that $-w_{z_i} > w_{i,1}+ \cdots + w_{i,n}+\theta_i$. On the other hand, clearly when $z_i = 0$, $T'_i(x,z_i) = T_i(x)$. 

    The goal is to output $\maj_i(T_i(x))$ for all $T_i$ such that $z_i = 0$. Therefore, we need to balance the presence of the extra zero votes in $\maj (T_1'(x, z_1), \cdots, T_m'(x, z_m))$, which occur whenever $z_i = 1$, by adding $|z|$ one votes. Since we took $|y| = (|z|+m)/2$, then $|y| - (m-|y|) = |z|$, so adding $y_1, \cdots, y_{m}$ to the majority vote gives a surplus of $|z|$ one votes. 

    Therefore, we can see that the distribution over functions $\eval(z_1 \cdots z_m, y_1 \cdots y_{m}, \cdot)$ for $(z,y) \sim \mu_L^*$ is, when considering functional equivalence, identical to $\mu_L$.
    Since $\mu_L^*$ is obviously polynomial time samplable, and $\eval \in \majthr$, the theorem statement follows by \thmref{mainResult}.
\end{proof}

\paragraph{Non-black-box access to $\mu_L$.} The distPAC-learning algorithm does not need descriptions of $\mu$ or $L$ as input. Instead, it suffices to have black-box access to a sampling machine that outputs appropriately distributed polynomial size circuits that are functionally equivalent to the appropriate $\majthr$-circuit, under some fixed encoding scheme. To see this, observe that the randomized predictor $L$ (Figure \ref{fig:predictor}) uses the evaluation rule $\eval$ and a concept representation $\pi_g$ to basically obtain some labels of $g$. But it is not important what the representation is, or how $\phi$ is implemented. For the accuracy of the learning algorithm, it only matters that, it is \textit{possible} to implement $\eval$ by a $\majthr$ circuit (under some concept representation). Hence, the learner can work with some far messier representation of concepts, such as by arbitrary polynomial size circuits.

The fact that the learner does not need to know $\mu$ or $L$ is a good property that enhances the convenience of the learning algorithm. Arguably, knowledge of $\mu$ and $L$ would be a prohibitive assumption in practice; rather, black-box sample access to a more complex and messy form of $\mu_L$ (provided that the learner can still interpret the encoding well enough to evaluate it) is a significantly weaker assumption, corresponding to a scenario where the learner has some non-explicit knowledge about the possible concepts it may encounter.

\subsubsection{Polytopes and DNFs}

In this section we will explain how we can slightly modify the proof for \thmref{distPAC_MajThr2} to also obtain distPAC-learning for natural distributions over polytopes and DNFs. Recall that, in distributional PAC-learning, subclasses are not necessarily distPAC-learnable if their superclass is, since it is possible that the subclass consists of functions that are hard for the superclass distPAC-learning algorithm.

A polytope is an intersection of linear threshold functions. In other words, it is any function that belongs to the class $\comp{\AND}{\thr}$. Therefore, it is easy to see that any polytope is computable by a $\majthr$-circuit. We consider a slight modification of the distribution over $\majthr$-circuits:

\paragraph{Polytope target distribution.} Let $L = (T_1, \cdots T_{m})$ be a list of $m := \poly(n)$ linear threshold functions. Also, let $\mu$ be any $\poly(n)$ time samplable distribution over $\{0,1\}^{m}$. We define the distribution $\mu^{\land}_L$ over polytopes as follows. 
\begin{itemize}
    \item Sample $\theta \sim \mu$.
    \item Output the polytope that is the $\AND$ over each $T_i \in L$ such that $\theta_i = 0$.
\end{itemize}

\begin{theorem}\label{distPAC_Polytope2}
    Let $L = (T_1, \cdots T_{m})$ be any list of $m := \poly(n)$ linear threshold functions over $n$-bit inputs, and let $\mu$ be any $\poly(n)$ time samplable distribution over $\{0,1\}^{m}$. The pair $(\comp{\AND}{\thr}, \mu^{\land}_L)$ is efficiently distPAC-learnable.
\end{theorem}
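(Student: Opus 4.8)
The plan is to follow the proof of Theorem \ref{distPAC_MajThr2} almost verbatim: exhibit an evaluation rule $\eval \in \majthr$ and a polynomial-time samplable distribution $\nu$ over binary representations such that the distribution over the functions $\eval(\pi_f, \cdot)$ for $\pi_f \sim \nu$ is functionally identical to $\mu^{\land}_L$, and then invoke Theorem \ref{mainResult}. The one new wrinkle relative to that proof is that a sample from $\mu^{\land}_L$ is an AND of a \emph{variable} number $k = m - |\theta|$ of thresholds (the $T_i$ with $\theta_i = 0$), so the task is to simulate a variable-arity AND inside a fixed-fan-in majority gate.

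First I would reuse the gadget from the proof of Theorem \ref{distPAC_MajThr2}: for each $T_i = [\,\sum_j w_{i,j} x_j \ge \theta_i\,] \in L$ introduce a control variable $z_i$ with a sufficiently large negative weight, yielding a threshold $T_i'(x,z_i)$ that equals $T_i(x)$ when $z_i = 0$ and equals the constant $0$ when $z_i = 1$. Then, as there, introduce $m$ auxiliary variables $y_1, \dots, y_m$ and set
\[
\eval(z_1\cdots z_m,\, y_1\cdots y_m,\, x) \;:=\; \maj\bigl(T_1'(x,z_1), \dots, T_m'(x,z_m),\, y_1, \dots, y_m\bigr),
\]
so $\eval \in \majthr$. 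With $z$ fixed we have $\sum_i T_i'(x,z_i) = \sum_{i : z_i = 0} T_i(x) =: S \in \{0, \dots, k\}$ with $k = m - |z|$; since this $\maj$ gate has $2m$ inputs and threshold $m$, it outputs $1$ iff $S + |y| \ge m$, i.e. iff $S \ge m - |y|$. Choosing the auxiliary bits to have Hamming weight $|y| = |z|$ turns this into $S \ge k$, i.e. $S = k$, which holds precisely when every active threshold (those with $z_i = 0$) outputs $1$ — that is, when $\bigwedge_{i : z_i = 0} T_i(x) = 1$. The boundary cases check out: $|z| = 0$ gives the full conjunction $\bigwedge_i T_i$, and $|z| = m$ forces $|y| = m$ and $S = 0$, so the gate is the constant $1$, the usual value of an empty conjunction.

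Correspondingly, I would define $\nu$ over $\{0,1\}^{2m}$ by: sample $\theta \sim \mu$; set $z := \theta$; sample $y$ uniformly among the $m$-bit strings of Hamming weight $|\theta|$ (always a legal weight); output $(z,y)$. This is polynomial-time samplable since $\mu$ is, and by the previous paragraph the induced distribution over functions $\eval((z,y),\cdot)$ is functionally identical to $\mu^{\land}_L$. Because $\eval \in \majthr$, Theorem \ref{nisanThm} gives a randomized $2$-party protocol of cost $O(\log n)$ and bias $1/\poly(n)$ for every function in $\majthr$; hence for every product distribution $(\nu, \rho)$ the game $\fG[\eval, n, (\nu, \rho)]$ is $(O(\log n), 1/\poly(n))$-evaluated, and Theorem \ref{mainResult} produces an efficient distPAC-learner. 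Since the distPAC guarantee depends on a concept only through its truth table, this learner witnesses efficient distPAC-learnability of $(\comp{\AND}{\thr}, \mu^{\land}_L)$.

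The only real obstacle is the counting in the middle paragraph — pinning down the number of auxiliary variables and their Hamming weight so that the shifted majority threshold lands exactly on the all-ones condition for \emph{every} value of $|z|$, including the degenerate empty-conjunction case — and this is entirely analogous to (in fact slightly simpler than) the balancing argument already performed for Theorem \ref{distPAC_MajThr2}. Everything else is routine transcription; the identical recipe, with each $T_i$ replaced by a disjunction on $n$-bit inputs and the conjunction read as a conjunction of disjunctions, also yields the DNF case (Theorem \ref{intro:DNF}).
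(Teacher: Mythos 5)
Your proposal is correct and follows essentially the same route as the paper: reuse the $T_i'$ control-variable gadget and auxiliary padding bits to build a $\majthr$ evaluation rule whose induced function distribution equals $\mu^{\land}_L$, then invoke \thmref{nisanThm} and \thmref{mainResult}. The only difference is cosmetic — the paper keeps the $y$-weights from \thmref{distPAC_MajThr2} and adds a third block $v_1,\dots,v_m$ with $|v|=|z|/2$ to shift the majority threshold onto the all-ones condition, whereas you re-tune $|y|=|z|$ directly; your counting ($S+|z|\ge m$ iff all $m-|z|$ active thresholds fire) is valid and, if anything, slightly cleaner since it avoids the extra block and its divisibility constraints.
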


\begin{proof}
    To complete the proof, we only modify how the evaluation rule $\eval: \{0,1\}^{3m} \times \{0,1\}^n \rightarrow \{-1,1\}$ is defined in the proof of \thmref{distPAC_MajThr2}:
    \[
    \eval(v_1, \cdots v_m, z_1 \cdots z_m, y_1 \cdots y_{m}, x) := \maj (T_1'(x, z_1), \cdots, T_m'(x, z_m), y_1, \cdots, y_{m}, v_1, \cdots v_m)
    \] 
    We define a distribution $\mu^{**}_L$ over binary representations, which samples $(z,y) \sim \mu^*_L$, and then $v$ is sampled uniformly at random from the set of $m$-bit strings with $|v|=|z|/2$. This means that $(m-|v|)-|v| = m - |z|$, that is, there is a surplus of $|m|-z$ zero votes within $v$. We do this in order to convert the top majority-gate into an and-gate, which is accomplished by adding the surplus of $m-|z|$ zero votes, because this forces every $T_i$ for $z_i = 0$ (of which there are $m - |z|$ of them) to evaluate to 1, in order for $\eval$ to evaluate to 1, as desired.

    Therefore, we can see that the distribution over functions $\eval(v,z,y, \cdot)$ for $(v,z,y) \sim \mu_L^{**}$ is, when considering functional equivalence, identical to $\mu^\land_L$.
    Since $\mu_L^{**}$ is obviously polynomial time samplable, and $\eval \in \majthr$, the theorem statement follows by \thmref{mainResult}.
\end{proof}

A DNF can be viewed as a polytope where each of the linear threshold functions simulate ${\sf OR}$ gates. One can simulate ${\sf OR}$ gates by restricting the weights to be either 0 or 1, and setting the threshold $\theta := 1$. 
We therefore consider a modification of the distribution over polytopes:

\paragraph{DNF target distribution.} Let $L = (T_1, \cdots T_{m})$ be a list of $m := \poly(n)$ linear threshold functions, where each weight is either 0 or 1 and the threshold is fixed to $\theta:= 1$. In other words, each $T_i$ implements an ${\sf OR}$
of some selection of variables. Also, let $\mu$ be any $\poly(n)$ time samplable distribution over $\{0,1\}^{m}$. We define the distribution $\mu^{\land\lor}_L$ over DNFs as follows. 
\begin{itemize}
    \item Sample $\theta \sim \mu$.
    \item Output the DNF that is the $\AND$ over each $T_i \in L$ such that $\theta_i = 0$.
\end{itemize}

\begin{theorem}\label{distPAC_DNF2}
    Let $L = (T_1, \cdots T_{m})$ be a list of $m := \poly(n)$ linear threshold functions over $n$-bit inputs, where each weight is either $0$ or $1$ and the threshold is fixed to $\theta:= 1$. Also, let $\mu$ be any $\poly(n)$ time samplable distribution over $\{0,1\}^{m}$. The pair $(\mathrm{DNF}, \mu^{\land\lor}_L)$ is efficiently distPAC-learnable.
\end{theorem}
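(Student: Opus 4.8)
The plan is to deduce \thmref{distPAC_DNF2} as a specialization of \thmref{distPAC_Polytope2} (equivalently, of \thmref{mainResult}), using the observation that each $T_i$ in the list $L$ is already a linear threshold function — one with $0/1$ weights and threshold $\theta_i = 1$ — so a ``DNF'' drawn from $\mu^{\land\lor}_L$ is literally a polytope drawn from $\mu^\land_L$ for this restricted choice of $L$. Concretely, I would reuse the evaluation rule $\eval \colon \{0,1\}^{3m} \times \{0,1\}^n \to \{-1,1\}$ from the proof of \thmref{distPAC_Polytope2} verbatim: for each $T_i$ introduce a ``kill-switch'' variable $z_i$ weighted by $w_{z_i} = -(w_{i,1}+\cdots+w_{i,n}+\theta_i+1)$, forming the threshold function $T_i'(x,z_i)$; introduce auxiliary variables $y_1,\dots,y_m$ (whose role is to cancel the zero-votes of killed thresholds) and $v_1,\dots,v_m$ (whose role is to convert the top majority gate into an AND); and set $\eval(v,z,y,x) := \maj(T_1'(x,z_1),\dots,T_m'(x,z_m),y_1,\dots,y_m,v_1,\dots,v_m)$. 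Since each $T_i'$ is a linear threshold function and the top gate is a majority, $\eval \in \majthr$.

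Next I would take the target distribution $\mu^{**}_L$ over $\{0,1\}^{3m}$ exactly as in the polytope proof: sample $z \sim \mu$, then sample $y$ uniformly among $m$-bit strings with $|y| = (|z|+m)/2$ ones, and $v$ uniformly among $m$-bit strings with $|v| = |z|/2$ ones; this is clearly $\poly(n)$-time samplable. The one place where the argument genuinely uses that the $T_i$ are disjunctions rather than arbitrary thresholds is the verification that the kill-switch behaves correctly for this weight pattern: when $z_i = 1$, the large negative weight gives $w_{i,1}x_1+\cdots+w_{i,n}x_n + w_{z_i} \le |\mathrm{supp}(T_i)| - (|\mathrm{supp}(T_i)|+2) = -2 < 1 = \theta_i$, so $T_i'(x,1) = 0$; and when $z_i = 0$, $T_i'(x,0) = [\,\sum_j w_{i,j}x_j \ge 1\,] = \mathrm{OR}_{j \in \mathrm{supp}(T_i)}(x_j) = T_i(x)$. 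Given this, the counting bookkeeping from the polytope proof carries over unchanged: the $y$-surplus of $|z|$ one-votes exactly cancels the $|z|$ forced zero-votes of the killed thresholds, and the $v$-surplus of $m-|z|$ zero-votes forces the remaining $m-|z|$ enabled thresholds to all output $1$ for $\eval$ to evaluate to $1$ — that is, $\eval(v,z,y,\cdot)$ computes $\bigwedge_{i \,:\, z_i = 0} T_i$, a DNF in the paper's sense. Hence the pushforward of $\mu^{**}_L$ through $\eval$ is, up to functional equivalence, exactly $\mu^{\land\lor}_L$.

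Finally, since $\eval \in \majthr$ and $\mu^{**}_L$ is polynomial-time samplable, \thmref{thm:distPAC_MajThr} (which itself follows from \thmref{nisanThm} and \thmref{mainResult}, using that every function in $\majthr$ admits an $O(\log n)$-cost randomized $2$-party protocol of bias $1/\poly(n)$, so that $\fG[\eval,n,(\mu,\rho)]$ is $(O(\log n),1/\poly(n))$-evaluated for every $(\mu,\rho)$) immediately yields that $(\mathrm{DNF}, \mu^{\land\lor}_L)$ is efficiently distPAC-learnable. I do not expect a real obstacle here: the theorem is essentially a cosmetic specialization of \thmref{distPAC_Polytope2}, and the only points requiring attention are re-checking the kill-switch computation for the OR-shaped thresholds and noting — as in \thmref{distPAC_Polytope2} — that distPAC-learnability of the subclass of DNFs does not come for free from the superclass but is established directly by the explicit construction of $\eval$ and $\mu^{**}_L$.
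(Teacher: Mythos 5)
Your proposal is correct and follows essentially the same route as the paper: reuse the evaluation rule $\eval(v,z,y,x) = \maj(T_1'(x,z_1),\dots,T_m'(x,z_m),y_1,\dots,y_m,v_1,\dots,v_m)$ and the target distribution $\mu^{**}_L$ from the polytope case, check that the kill-switch and counting bookkeeping still work when the bottom thresholds simulate $\mathsf{OR}$ gates, and invoke \thmref{mainResult} (via \thmref{nisanThm}) since $\eval \in \majthr$ and $\mu^{**}_L$ is polynomial-time samplable. The paper's proof is exactly this specialization of \thmref{distPAC_Polytope2}, stated more tersely; your explicit verification of $T_i'(x,1)=0$ and $T_i'(x,0)=T_i(x)$ for the $0/1$-weight, threshold-$1$ case is just the detail the paper leaves implicit.
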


\begin{proof}
    Using the same argument as in the proof of \thmref{distPAC_Polytope2}, we can define
    \[
    \eval(v_1, \cdots v_m, z_1 \cdots z_m, y_1 \cdots y_{m}, x) := \maj (T_1'(x, z_1), \cdots, T_m'(x, z_m), y_1, \cdots, y_{m}, v_1, \cdots v_m)
    \]  
    Then, because the threshold functions on the bottom layer are defined to simulate ${\sf OR}$ gates, the theorem now follows by the same arguments following the definition of the evaluation rule in \thmref{distPAC_Polytope2}.
\end{proof}

\section{Impossibility of Encoded-Input Weak PRFs}\label{section:NoWeakPRFs}

In this section, we will apply \thmref{mainResult} to prove that weak PRFs augmented with a keyless encoding procedure cannot exist, when evaluating the encoded input on any key is forced to be done by a $\majthr$-circuit.

\subsection{Encoded-Input Weak PRFs}

We define encoded-input weak PRFs. Our definition is the natural relaxation of the definition of encoded-input \textit{strong} PRFs of Boneh et al. \cite{boneh2018exploring}. 

\begin{definition}[Encoded-input weak PRFs]
    Let $\lambda$ be a security parameter, and $n:=n(\lambda),\kappa := \kappa(\lambda), m:= m(n)$ for polynomially bounded functions $n, \kappa, m$. Consider a trio of algorithms ${\sf f}: \{0,1\}^{\kappa} \times \{0,1\}^n \rightarrow \{0,1\}, {\sf gen}: \{1\}^\lambda \rightarrow \{0,1\}^{\kappa}, {\sf enc}: \{0,1\}^{m} \rightarrow \{0,1\}^n$. 
    \begin{itemize}
        \item ${\sf gen}$ is a polynomial time sampling algorithm that given input parameter $\lambda$ in unary and access to random coins $z \in \{0,1\}^{\poly(\kappa)}$ outputs a key $k \in \{0,1\}^{\kappa}$.
        \item ${\sf enc}$ is a polynomial time algorithm that given a representation of an input $r \in \{0,1\}^m$, outputs an encoding $x \in \{0,1\}^n$.
        \item ${\sf f}$ is a polynomial time algorithm that given a key $k$ and the \underline{encoded} input $x \in \{0,1\}^n$, outputs a value ${\sf f}(k,x) = v \in \{0,1\}$.
    \end{itemize} 
    For $t=t(\lambda), \epsilon=\epsilon(\lambda)$, we say that $({\sf f}, {\sf gen}, {\sf enc})$ is a $(t,\epsilon)$-e.i.weak PRF if, for every size $t$ oracle circuit $C$,
    \[
    \left|\Pr_{k \sim {\sf gen}(1^\lambda)}\left[C^{\ExF{{\sf f}}}=1\right]- \Pr_{\psi}\left[C^{\ExF{\psi}}=1\right]\right| \le \epsilon(\lambda)
    \]
    where $\psi: \{0,1\}^n \rightarrow \{0,1\}$ is a uniformly random function, and $\ExF{h}$ is an oracle that returns random points $\langle {\sf enc}(r), h({\sf enc}(r)) \rangle$ for some function $h: \{0,1\}^n \rightarrow \{0,1\}$ and $r \sim U_m$.  
\end{definition}

To measure the complexity of an e.i.weak PRF, we say that $({\sf f}, {\sf gen}, {\sf enc})$ is \textit{evaluated} by a uniform circuit class $\Lambda$, if ${\sf f}\in \Lambda$ (i.e., we ignore complexity of the encoding procedure). This differs from the notion of fixed-key complexity considered elsewhere in the literature for weak PRFs (where we want that, for every $k$, ${\sf f}(k, \cdot) \in \Lambda$). However, we remark that in some cases, arguably it is more important to know what is the circuit complexity of a single circuit that is capable of evaluating many inputs on many unknown keys.

\subsection{Impossibility of Encoded-Input Weak PRFs Evaluated By Majority-of-Threshold}

In the remainder of this section, we will show that there cannot be any e.i.weak PRFs evaluated by $\majthr$-circuits of polynomial size. The proof is a straightforward application of \thmref{mainResult}.

\begin{theorem}[\thmref{intro:NoEIWPRF} restated]
    There exists no encoded-input weak PRF that is evaluated by a $\majthr$-circuit.
\end{theorem}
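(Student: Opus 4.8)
The plan is to reduce breaking an encoded-input weak PRF evaluated by $\majthr$ to distPAC-learning a concept class induced by a $\majthr$ evaluation rule, and then invoke \thmref{mainResult}. Suppose, for contradiction, that $({\sf f}, {\sf gen}, {\sf enc})$ is an e.i.weak PRF with ${\sf f} \in \majthr$. First I would set up the concept class: define an evaluation rule $\eval_n$ that takes as representation the PRF key $\pi_k = k \in \{0,1\}^\kappa$ and as input an already-encoded point $x \in \{0,1\}^n$, and outputs $\eval_n(k, x) := {\sf f}(k, x)$. Since ${\sf f} \in \majthr$ and keys have length $\kappa = \poly(\lambda)$, the $\eval$-induced concept class $\fC$ is $\poly(\lambda)$-represented, and $\eval \in \majthr$. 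Take the target distribution $\mu$ to be the distribution of $k \sim {\sf gen}(1^\lambda)$ — this is polynomial-time samplable by definition of ${\sf gen}$. Take the example distribution $\rho$ to be the distribution of ${\sf enc}(r)$ for $r \sim U_m$ (this need not be efficiently samplable, but \thmref{mainResult} and the remark after it tolerate arbitrary $\rho$).

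Next I would apply \thmref{mainResult} (equivalently \thmref{intro:distPAC_MajThr}): since $\eval \in \majthr$, by \thmref{nisanThm} the associated game $\fG[\eval, n, (\mu,\rho)]$ is $(O(\log n), 1/\poly(n))$-evaluated for every product distribution, hence $(\fC, \mu)$ is efficiently distPAC-learnable by some algorithm $A$ running in $\poly(\lambda)$ time and using $\poly(\lambda)$ random examples. Now I would run $A$ with constant accuracy/confidence parameters (say $\epsilon = \delta = \eta = 1/10$) against the oracle $\ExF{{\sf f}}$, which is exactly $\ExO{f}{\rho}$ for $f = {\sf f}(k,\cdot)$ with $k \sim \mu$. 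By \defref{def:distPAC}, with probability at least $1 - \eta$ over $k \sim {\sf gen}$ and at least $1-\delta$ over $A$'s coins, the output hypothesis $h$ satisfies $\Pr_{x\sim\rho}[h(x) \neq {\sf f}(k,x)] \le \epsilon$. I would then build a distinguisher circuit $C$ that, given oracle access to $\ExF{(\cdot)}$: collects $\poly(\lambda)$ examples, runs $A$ to get $h$, draws a fresh batch of test examples $\langle {\sf enc}(r), y \rangle$ from the oracle, and outputs $1$ iff $h$ agrees with the labels on, say, at least a $1 - 2\epsilon$ fraction of them. When the oracle is $\ExF{{\sf f}}$, $C$ outputs $1$ with probability bounded below by a constant (roughly $(1-\delta)(1-\eta)$ minus Chernoff error); when the oracle is $\ExF{\psi}$ for a uniformly random $\psi$, $h$ — which was produced from at most $\poly(\lambda)$ labels and is being tested on fresh independent points — agrees with $\psi$ on each test point with probability exactly $1/2$, so by Chernoff $C$ outputs $1$ with negligible probability. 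The gap is $\Omega(1)$, contradicting the $(t,\epsilon(\lambda))$-security of the e.i.weak PRF for any super-polynomial $t$ and negligible $\epsilon$.

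The one subtlety I'd be careful about — and what I'd flag as the main obstacle — is matching the oracle model: the distPAC learner expects $\ExO{f}{\rho}$ returning $\langle x, f(x)\rangle$ for $x\sim\rho$, while $\ExF{h}$ returns $\langle {\sf enc}(r), h({\sf enc}(r))\rangle$ for $r\sim U_m$; these coincide precisely when $\rho$ is defined as the pushforward of $U_m$ under ${\sf enc}$, so I must be explicit that $\rho$ is exactly this (possibly inefficient, possibly non-uniform-support) distribution, and then lean on the "Remark" after \thmref{weakMainResult} that the $2$-party-norm distinguisher is universal over arbitrary $\rho$ and that $\rho$ need never be known to $A$. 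A secondary point is that the random-function case requires $h$ to depend only on points seen before the test points; since $A$ uses a bounded number of examples and the test batch is drawn freshly and independently, this holds, but I'd state it carefully (or equivalently argue that on $\poly(\lambda)$ points the collision probability with a truly random function is negligible). Modulo these bookkeeping points, the argument is a direct instantiation of \thmref{mainResult}, exactly as the text promises.
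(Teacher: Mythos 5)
Your proposal is correct and follows essentially the same route as the paper: view ${\sf f}$ as the evaluation rule, ${\sf gen}$ as the polynomial-time samplable target distribution, and ${\sf enc}$ as the sampler for the example distribution $\rho$, then invoke \thmref{mainResult} together with \thmref{nisanThm} (which gives the $(O(\log n), 1/\poly(n))$-evaluation of the communication game) to get an efficient distPAC-learner and hence a distinguisher. Your write-up merely makes explicit the learner-to-distinguisher step and the oracle-matching bookkeeping that the paper leaves implicit, so there is nothing substantively different to flag.
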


\begin{proof}
    Let $\eval$ be an evaluation rule, and suppose that for every $n \in \nat$,
    and product distribution $(\mu, \rho)$, $\fG[\eval, n, (\mu, \rho)]$ is $(c, \gamma)$-evaluated. By \thmref{mainResult}, 
    $(\fC, \mu)$ is distPAC-learnable. For a $\poly(n)$-samplable $\mu$ and $\poly(n)$-represented $\fC$, the learning algorithm runs in time 
    polynomial in $n,\gamma^{-1}, \epsilon^{-1}, \delta^{-1}, \eta^{-1}$, and $2^{c}$.

    Now, consider any trio of algorithms, $({\sf f}, {\sf gen}, {\sf enc})$, with $f \in \Lambda$. In order to apply \thmref{mainResult}, we view ${\sf f} \in \Lambda$ as an evaluation rule, ${\sf gen}$ as a $\poly(n)$-samplable target distribution $\mu_{{\sf gen}}$, and ${\sf enc}$ as a sampling algorithm for an example distribution $\rho_{{\sf enc}}$. It follows then that if for the product distribution $(\mu_{{\sf gen}}, \rho_{{\sf enc}})$, $\fG[{\sf f}, n, (\mu_{{\sf gen}}, \rho_{{\sf enc}})]$ is $(c, \gamma)$-evaluated, then $({\sf f}, {\sf gen}, {\sf enc})$ cannot be a e.i.weak PRF with better than $\poly(2^c)$ security.

    Finally, it follows that for any ${\sf f} \in \majthr$, the trio $({\sf f}, {\sf gen}, {\sf enc})$ cannot be an e.i.weak PRF, because we know that $\fG[{\sf f}, n, (\mu_{{\sf gen}}, \rho_{{\sf enc}})]$ is $(O(\log n), 1/\poly(n))$-evaluated (see \thmref{nisanThm}).
\end{proof}

\subsection{Improving Main Theorem Requires Learning Theory Breakthroughs}\label{sec:RequiresBreakthroughs}

Through the lens of weak PRF distinguishers, we now explain how a slight improvement of the runtime dependency of $2^{O(c)}$ stated in \thmref{mainResult} to $2^{O(c^{1/2})}$ is a difficult task.

A famous example of a ``low-complexity'' weak PRF is the ${\sf XOR{\text -}MAJ}$ weak PRF candidate of \cite{blum1993cryptographic}. The authors of \cite{blum1993cryptographic} claimed 30 years ago that ``any method that could even weakly predict [in polynomial time] such functions over a uniform
distribution would require profoundly new ideas.'' At the moment, there is arguably still no reason to believe that such methods will soon be developed.

We define the weak PRF candidate of \cite{blum1993cryptographic}; our argument follows immediately.
The ${\sf XOR{\text -}MAJ}$ weak PRF candidate is the pair of algorithms $({\sf xm}, {\sf xmgen})$ constructed as follows.

\begin{itemize}
    \item ${\sf xmgen}(1^n)$ outputs the key $k = (A,B) \subseteq [n]$, consisting of uniformly random disjoint sets $A,B \in [n]$ of size $\log n$ each. $k = (A,B)$ can be considered a bitstring of length $2\log^2(n)$.

    \item ${\sf xm}(k,x)$ takes as input the key and a string $x \in \{0,1\}^n$. ${\sf xm}(k,x)$ is defined by
    \[
    {\sf xm}(k,x) = {\sf XOR}({\sf XOR}(x|_A), \maj(x|_B))
    \]
    Here, $x|_S$ is the projection of a string $x \in \{0,1\}^n$ to the coordinates indicated by a set $S \subseteq [n]$.
    
    \end{itemize}

From the definition, it is clear that ${\sf xm}$ can be seen as an evaluation function that induces a $2\log^2(n)$-represented concept class. The target distribution would be uniform over the representations contained in $\{0,1\}^{2\log^2(n)}$. Therefore, \thmref{mainResult}, if improved to run in time polynomial in $2^{c^{1/2}}$, would imply a polynomial time distinguisher for the ${\sf XOR{\text -}MAJ}$ weak PRF candidate. This is justified by the fact that the communication complexity of the associated communication game $\fG[{\sf xm}, n, (U_{2\log^2(n)}, U_n)]$ is at most $2\log^2(n)$ (the first player merely sends its entire input to the second player, who then evaluates the ${\sf xm}$ ans outputs the result), and $2^{O((2\log^2(n))^{1/2})} = \poly(n)$.

\section*{Acknowledgements}
I thank Mark Bun, Ran Canetti, Russell Impagliazzo, and Emanuele Viola for thoughtful conversations about this research. I also thank Mauricio Karchmer for advice on presentational aspects of this paper. Finally, I give special thanks to Marco Carmosino for helpful comments on a draft of this paper, as well as many discussions pertaining to this research. 
Part of this research was completed while I was visiting the Simons Institute for the theory of computing. 

\bibliographystyle{alpha}
\bibliography{bib.bib}

\end{document}